\algrenewcommand\alglinenumber[1]{\tiny #1:}
\algnewcommand\algorithmicswitch{\textbf{switch}}
\algnewcommand\algorithmiccase{\textbf{case}}
\algnewcommand\algorithmicassert{\texttt{assert}}
\algnewcommand\Assert[1]{\State \algorithmicassert(#1)}%
\newcommand{\blank}[1]{\hspace*{#1}}
\newcommand{\punt}[1]{}
\newcommand{\cmnt}[1]{}
\newcommand{\ignore}[1]{}
\newcommand{\secref}[1]{Section~\ref{sec:#1}}
\newcommand{\figref}[1]{Fig.~\ref{fig:#1}}
\newcommand{\thmref}[1]{Theorem~\ref{thm:#1}}
\newcommand{\algoref}[1]{{Algorithm \ref{alg:#1}}}
\newcommand{\subsecref}[1]{SubSection~\ref{subsec:#1}}
\newcommand{\apnref}[1]{Appendix~\ref{apn:#1}}
\newcommand{\Lineref}[1]{Line~\ref{lin:#1}}
\newcommand{\remove}[1]{}
\newcommand{\op} {operation\xspace}
\newcommand{\comm}[1] {\textit{committed(#1)}}
\newcommand{\aborted}[1] {\textit{aborted}(#1)}
\newcommand{\txns}[1] {\textit{txns}(#1)}
\newcommand{\evts}[1] {evts(#1)}
\newcommand{\tobj} {tobj\xspace}
\newcommand{\opq} {opaque\xspace}
\newcommand{\opty} {opacity\xspace}
\newcommand{\coopty} {co-opacity\xspace}
\newcommand{\sbty} {serializability\xspace}
\newcommand{\mvto} {MVTO\xspace}
\newcommand{\bto} {BTO\xspace}
\newcommand{\cl} {\emph{conflictList}\xspace}
\newcommand{\cminer} {Concurrent Miner\xspace}
\newcommand{\cvalidator} {Concurrent Validator\xspace}
\newcommand{\bg} {block graph\xspace}
\newcommand{\blg} {BG\xspace}
\newcommand{\aul} {\emph{auList[]}\xspace}
\newcommand{\confg} {\emph{confGraph}\xspace}
\newcommand{\gconfl} {\emph{getConflictfList}\xspace}
\newcommand{\vp} {\emph{vrtPred}\xspace}
\newcommand{\vc} {\emph{vrtCurr}\xspace}
\newcommand{\vn} {\emph{vrtNext}\xspace}
\newcommand{\ep} {\emph{egPred}\xspace}
\newcommand{\ec} {\emph{egCurr}\xspace}
\newcommand{\en} {\emph{egNext}\xspace}
\newcommand{\vl} {\emph{vrtList}\xspace}
\newcommand{\el} {\emph{egList}\xspace}
\newcommand{\inc} {\emph{indegree}\xspace}
\newcommand{\adde} {\emph{addEdge}\xspace}
\newcommand{\addv} {\emph{addVertext}\xspace}
\newcommand{\egn} {\emph{egNode}\xspace}
\newcommand{\vgn} {\emph{vrtNode}\xspace}
\newcommand{\enode} {\emph{egNode}\xspace}
\newcommand{\vnode} {\emph{vrtNode}\xspace}
\newcommand{\searchl} {\emph{localSearch}\xspace}
\newcommand{\searchg} {\emph{globalSearch}\xspace}
\newcommand{\remex} {\emph{remExNode}\xspace}
\newcommand{\gind} {gIndex}
\newcommand{\bc} {blockchain\xspace}
\newcommand{\scontract} {\emph{smart contract\xspace}}
\newcommand{\SContract} {smart contract\xspace}
\newcommand{\exec} {\emph{executeScFun}\xspace}
\newcommand{\nc} {\emph{sctCount}\xspace}
\newcommand{\vt} {\emph{vrtTail}\xspace}
\newcommand{\vh} {\emph{vrtHead}\xspace}
\newcommand{\eh} {\emph{eHead}\xspace}
\newcommand{\et} {\emph{eTail}\xspace}
\newcommand{\tl} {\emph{thLog}\xspace}
\newcommand{\cachel} {\emph{cacheList}\xspace}
\newcommand{\begtrans} {\emph{STM\_begin}\xspace}
\newcommand{\mvve} {MVVE\xspace}
\newcommand{\vie} {VE\xspace}
\newcommand{\ce} {CE\xspace}
\newcommand{\rs}{rset\xspace}
\newcommand{\ws}{wset\xspace}
\newcommand{\begt} {STM\_begin\xspace}
\newcommand{\tread} {STM\_read\xspace}
\newcommand{\twrite} {STM\_write\xspace}
\newcommand{\tlu} {STM\_lookup\xspace}
\newcommand{\tins} {STM\_insert\xspace}
\newcommand{\tdel} {STM\_delete\xspace}
\newcommand{\tryc} {STM\_tryC\xspace}
\newcommand{\commit}{\mathcal{C}}
\newcommand{\abort}{\mathcal{A}}
\newcommand{\mth} {method\xspace}
\newcommand {\incomp}[1] {#1.incomp}
\newcommand {\live}[1] {#1.live}
\newcommand{\inv} {$inv$}
\newcommand{\rsp} {$rsp$}
\newcommand{\sctl} {sctList\xspace}
\newcommand{\egnode} {\emph{egNode}\xspace}
\newcommand{\vrtnode} {\emph{vrtNode}\xspace}
\newcommand{\vrtlist} {vrtList\xspace}
\newcommand{\eglist} {egList\xspace}
\newcommand{\rwstm} {RWSTM\xspace}
\newcommand{\ostm} {OSTM\xspace}
\newcommand{\svotm} {SVOSTM\xspace}
\newcommand{\mvotm} {MVOSTM\xspace}
\newcommand{\oconf} {oconflict\xspace}
\newcommand{\mvoconf} {mvoconflict\xspace}
\newcommand{\rwconf} {rwconflict\xspace}
\newcommand{\sctrn} {SCT\xspace}
\newcommand{\fbr} {FBR\xspace}
\newcommand{\emb} {EMB\xspace}
\newcommand{\scv} {SMV\xspace}
\newcommand{\guc}[1] {#1.gUC\xspace}
\newcommand{\glc}[1] {#1.gLC\xspace}
\newcommand{\llc}[2] {#1.lLC_{#2}\xspace}
\newcommand{\luc}[2] {#1.lUC_{#2}\xspace}
\newcommand{\gucntr} {gUC\xspace}
\newcommand{\glcntr} {gLC\xspace}
\newcommand{\llcntr} {lUC\xspace}
\newcommand{\lucntr} {lLC\xspace}
\newcommand{\fb} {FBin\xspace}
\newcommand{\mthr} {multi-threaded\xspace}
\newcommand{\Mthr} {Multi-threaded\xspace}
\newcommand{\stat} {StaticBin\xspace}
\newcommand{\spec} {SpecBin\xspace}
\begin{document}
\title{Efficient Concurrent Execution of Smart Contracts in Blockchains using Object-based Transactional Memory\thanks{This paper is eligible for Best Student Paper award as Parwat is full-time Ph.D. student.}
}
%

\titlerunning{Efficient Concurrent Execution of Smart Contracts in Blockchains using OSTMs}
%
%

\author{Parwat Singh Anjana\inst{1}\and Hagit Attiya\inst{2}\and Sweta Kumari\inst{2}\and Sathya Peri\inst{1}\and Archit Somani\inst{2}}
%

\institute{Department of Computer Science \& Engineering, IIT Hyderabad, India \\
	\texttt{cs17mtech11014@iith.ac.in, sathya\_p@cse.iith.ac.in} \and Department of Computer Science, Technion, Israel \\
	\texttt{(hagit, sweta, archit)@cs.technion.ac.il}}



%


%
\maketitle              
\begin{abstract}
	Several popular blockchains such as Ethereum execute \emph{complex transactions} through user-defined scripts. A block of the chain typically consists of multiple \emph{smart contract transactions (SCTs)}. To append a block into the blockchain, a miner executes these SCTs. On receiving this block, other nodes act as \emph{validators}, who re-execute these SCTs as part of the consensus protocol to validate the block. In Ethereum and other blockchains that support cryptocurrencies, a miner gets an incentive every time such a valid block is successfully added to the \bc. 
When executing SCTs sequentially, miners and validators fail to harness the power of multiprocessing offered by the prevalence of multi-core processors, thus degrading throughput. By leveraging multiple threads to execute SCTs, we can achieve better efficiency and higher throughput. Recently, \emph{Read-Write Software Transactional Memory Systems (RWSTMs)} were used for concurrent execution of SCTs. It is known that \emph{Object-based STMs (OSTMs)}, using higher-level objects (such as hash-tables or lists), achieve better throughput as compared to RWSTMs. Even greater concurrency can be obtained using \emph{Multi-Version OSTMs (MVOSTMs)}, which maintain multiple versions for each shared data-item as opposed to \emph{Single-Version OSTMs (SVOSTMs)}. 

This paper proposes an efficient framework to execute SCTs concurrently based on object semantics, using \emph{optimistic} SVOSTMs and MVOSTMs. 
In our framework, a \mthr miner constructs a \textit{Block Graph (BG)}, capturing the \emph{object-conflicts} relations between SCTs, and stores it in the block. Later, validators re-execute the same SCTs concurrently and deterministically relying on this BG. 
A malicious miner can modify the BG to harm the blockchain, e.g., to cause \textit{double spending}. To identify malicious miners, we propose \emph{Smart \Mthr Validator (\scv)}. Experimental analysis shows that proposed \mthr miner and validator achieve significant performance gains over state-of-the-art SCT execution framework.


\keywords{Blockchain \and Smart Contract \and Concurrency \and Object-based Software Transactional Memory \and Multi-Version \and Opacity \and Conflict-Opacity.}

\end{abstract}

\section{Introduction}
\label{sec:intro}


Blockchains like Bitcoin \cite{Nakamoto:Bitcoin:2009} and Ethereum \cite{ethereum:url} have become very popular. Due to their usefulness, they are now considered for automating and securely storing user records such as land sale documents, vehicle, and insurance records. \emph{Clients}, external users of the system, send requests to nodes to execute on the blockchain, as \emph{smart contracts transactions (\sctrn{s})}. An SCT is similar to the methods of a class in an object-oriented langugage, which encode business logic relating to the contract \cite{Solidity,Dickerson+:ACSC:PODC:2017}. 


\ignore{
Blockchains like Bitcoin \cite{Nakamoto:Bitcoin:2009} and Ethereum \cite{ethereum:url} are becoming a very popular technology. They are now considered for automating and securely storing user records such as land sale documents, vehicle, and insurance records. 
A blockchain consists of several \emph{nodes}. \emph{Clients}, external users of the system, send requests to nodes to execute on the blockchain, 
commonly known as \emph{smart contracts}, which are executed by nodes. A smart contract is a piece of code, stored digitally, which can be executed without any human intervention. They are also considered as \emph{transactions}, which access the data to change its state. 

}

Blocks are added to the blockchain by \emph{block-creator} nodes also known as \emph{miners}. A miner $m$ packs some number of \sctrn{s} received from various (possibly different) clients, to form a block $B$. Then, $m$ executes the \sctrn{s} of the block sequentially to obtain the final state of the blockchain, which it stores in the block. To maintain the chain structure, $m$ adds the hash of the previous block to the current block $B$ and proposes this new block to be added to the blockchain.


On receiving the block $B$, every other node acts as a \emph{validator}. The validators execute a global consensus protocol to decide the order of $B$ in the blockchain. As a part of the consensus protocol, validators validate the contents of $B$. They re-execute all the \sctrn{s} of $B$ sequentially to obtain the final state of the blockchain, assuming that $B$ will be added to the blockchain. If the computed final state matches the one stored in $B$ by the miner $m$ then $B$ is accepted by the validators. In this case, the miner $m$ gets an incentive for adding $B$ to the blockchain (in Ethereum and other cryptocurrency-based blockchains). Otherwise, $B$ is rejected, and $m$ does not get any reward. Ethereum follows order-execute model \cite{Androulaki+:Hyperledger:Eurosys:2018}, as do several other blockchains such as Bitcoin \cite{Nakamoto:Bitcoin:2009}, EOS \cite{eos:url}.

\ignore {
On receiving the block $B$, every other node acts as a \emph{validator}. The validator nodes execute a global consensus protocol to decide the order of a block $B$ in the blockchain. As a part of the consensus protocol, validators validate the contents of $B$. They re-execute all the \sctrn{s} of $B$ sequentially to obtain the final state of the blockchain, assuming that $B$ will be added to the blockchain. If the computed final state matches the one in $B$ stored by the miner $m$ then it is accepted by the validators. 
Otherwise, $B$ is rejected, and $m$ does not get any incentive. Hence, Ethereum follows order-execute model \cite{Androulaki+:Hyperledger:Eurosys:2018}. Several other blockchains such as Bitcoin \cite{Nakamoto:Bitcoin:2009}, EOS \cite{eos:url} follow a similar pattern.
}

\vspace{.5mm}
\noindent
\textbf{Related Work:} Dickerson et al. \cite{Dickerson+:ACSC:PODC:2017} observed that both miner and validators can execute \sctrn{s} concurrently and harness the power of multi-core processors. They observed another interesting advantage of concurrent execution of \sctrn{s} in Ethereum, where only the miner receives an incentive for adding a valid block while all the validators execute the \sctrn{s} in the block. Given a choice, it is natural for a validator to pick a block that supports concurrent execution and hence obtain higher throughput. 



Concurrent execution of \sctrn{s} poses challenge. Consider a miner $m$ that executes the \sctrn{s} in a block concurrently. Later, a validator $v$ may re-execute same \sctrn{s} concurrently, in an order that may yield a different final state than given by $m$ in $B$. In this case, $v$ incorrectly rejects the valid block $B$ proposed by $m$. We denote this as \emph{False Block Rejection} (\emph{\fbr}), noting that \fbr may negate the benefits of concurrent execution.

\ignore{
Concurrent execution of \sctrn poses a challenge. Consider a miner that executes the \sctrn{s} in a block concurrently with $S$ being the equivalent serial execution \textcolor{blue}{(("this is new point which is made by the way" comment is not clear.))}and obtains a final state. Later, a validator may re-execute same \sctrn{s} concurrently according to another equivalent serial execution $S'$ and observe a different final state than given by the miner. Hence, it incorrectly rejects the valid block proposed by miner, causing \emph{False Block Rejection} (\emph{\fbr}), which may negate the benefits of concurrent execution. 

\cmnt{\setlength{\intextsep}{0pt}
\begin{figure*}
	\centering
	\centerline{\scalebox{0.38}{\input{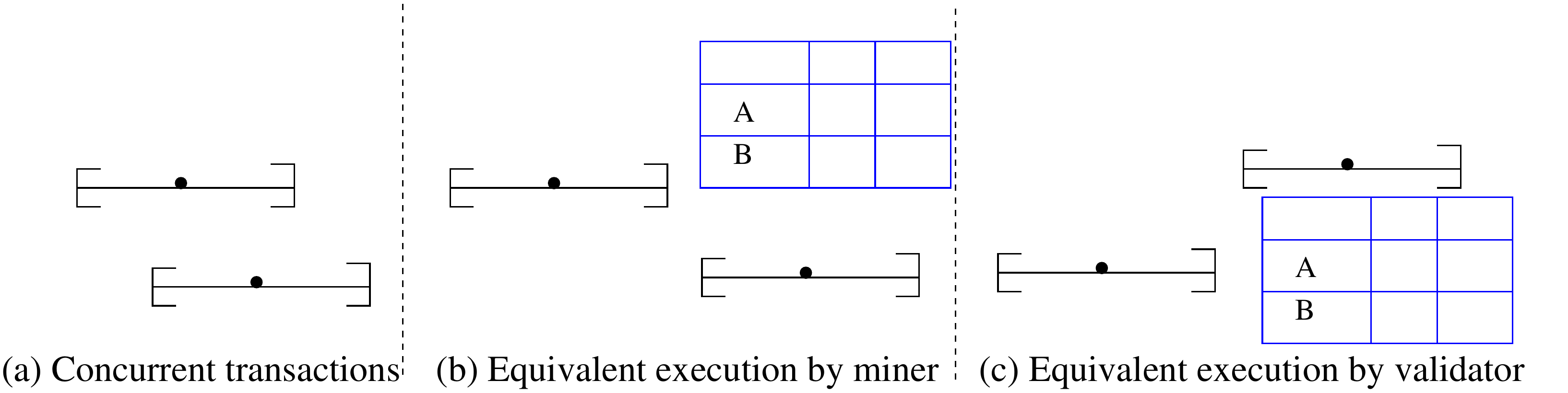_t}}}
	  \caption{Challenges in Concurrent execution of \sctrn{s}}
	\label{fig:vfalse}
\end{figure*}}
}

Dickerson et al. \cite{Dickerson+:ACSC:PODC:2017} proposed a multi-threaded miner algorithm that is based on a \emph{pessimistic Software Transactional Memory (STM)} and uses locks for synchronization between threads executing SCTs. To avoid \fbr, the miner identifies the \emph{dependencies} between \sctrn{s} in the block while executing them by multiple threads. Two \sctrn{s} are \emph{dependent} if they are \emph{conflicting}, i.e., both of them access the same data-item and at least one of them is a write. These dependencies among \sctrn{s} are recorded in the block in form of a \emph{Block Graph (\blg)}. Two \sctrn{s} that have a path in the \blg are dependent on each other and cannot be executed concurrently. Later, a validator $v$ relies on the \blg to identify the dependencies among the \sctrn{s}, and concurrently execute \sctrn{s} only if there is no path between them in the \blg. In the course of the execution by $v$, the size of \blg dynmically decreases and the dependencies change. Dickerson et al. \cite{Dickerson+:ACSC:PODC:2017} use a \emph{fork-join} approach to execute the \sctrn{s}, where a master thread allocates \sctrn{s} without dependencies to different slave threads to execute.


\ignore{
Dickerson et al. \cite{Dickerson+:ACSC:PODC:2017} proposed a multi-threaded miner algorithm with \emph{pessimistic Software Transactional Memory Systems (STMs)} \textcolor{blue}{((It is a pessimistic object-based STMs))} which uses locks for synchronization between threads. 
To avoid \fbr, the miner identifies the \emph{dependencies} between \sctrn{s} in the block when executing them by multiple threads. Two \sctrn{s} are dependent if they are \emph{conflicting}, i.e., both of them access the same data-item and at least one of them is a write. These dependencies among \sctrn{s} are recorded in the block, by a \emph{Block Graph \blg}.
Later, a validator relies on the \blg to identify the dependencies among the \sctrn{s}, and execute \sctrn{s} concurrently without a path in the \blg using a \emph{fork-join} approach \cite{Dickerson+:ACSC:PODC:2017}. In this approach, a master thread allocates the various \sctrn{s} which do not have any dependencies to different slave threads to execute.
}

Anjana et al. \cite{Anjana:OptSC:PDP:2019} used an \emph{optimistic} Read-Write STM (RWSTM), which identifies the conflicts between \sctrn{s} using timestamps. Those are used by miner threads to build the \blg. A validator processes a block using \blg in a completely decentralized manner using multiple threads, unlike the centralized fork-join approach of \cite{Dickerson+:ACSC:PODC:2017}. Each validator thread identifies an independent \sctrn and executes it concurrently with other threads. They showed that the decentralized approach yields significant performances gain over fork-join \cite{Dickerson+:ACSC:PODC:2017}. 

\ignore{
Anjana et al. \cite{Anjana:OptSC:PDP:2019} used \emph{optimistic} Read-Write STMs (RWSTMs). The STM identifies the conflicts between \sctrn{s} using timestamps, which miner threads use to build the \blg. 
A validator process a block using \blg in a completely decentralized manner using multiple threads, unlike the fork-join approach of \cite{Dickerson+:ACSC:PODC:2017}. Each validator thread identifies an independent \sctrn and concurrently executes it with other threads, yielding significant performances gain. 
}

Saraph and Herlihy \cite{Vikram&Herlihy:EmpSdy-Con:Tokenomics:2019} used a \emph{speculative bin} approach to execute \sctrn{s} of Ethereum in parallel. A miner maintains two bins for storing \sctrn{s}: \emph{concurrent} and \emph{sequential}. The \sctrn{s} are sorted into these bins using read-write locks. The \emph{concurrent bin} stores non-conflicting \sctrn{s} while the \emph{sequential bin} stores the remaining \sctrn{s}. If an \sctrn $T_i$ requests a lock held by an another \sctrn $T_j$ then $T_i$ is rolled back and placed in the sequential bin. Otherwise, $T_i$ is placed in the concurrent bin. To save the cost of rollback and retries of \sctrn{s}, they have used \emph{static conflict prediction} which identifies conflicting \sctrn{s} before executing them speculatively. The multi-threaded validator in this approach executes all the \sctrn{s} of the concurrent bin concurrently and then executes the \sctrn{s} of the sequential bin sequentially. We call this the \emph{Static Bin} approach. Zhang and Zhang \cite{ZangandZang:ECSC:WBD:2018} used \emph{multi-version timestamp order (MVTO)} for the concurrent execution of SCTs, in a pessimistic manner. 

\ignore{
Saraph and Herlihy \cite{Vikram&Herlihy:EmpSdy-Con:Tokenomics:2019} used speculative bin approach to execute \sctrn{s} of Ethereum in parallel. A miner maintains two bins for storing \sctrn{s} - \emph{concurrent} and \emph{concurrent}. The \sctrn{s} are sorted into these bins using read-write locks. The \emph{concurrent bin} stores non-conflicting \sctrn{s} while the \emph{sequential bin} stores remaining \sctrn{s}. If an \sctrn $T_i$ requests a lock held by an anoother \sctrn $T_j$ then $T_i$ is rolled back and placed in sequential bin. Otherwise, $T_i$ is placed in the concurrent bin. To save the cost of rollback and retries of \sctrn{s}, they have used \emph{static conflict prediction} for conflicting \sctrn{s} before executing them speculatively. , and those stores conflicting \sctrn{s} in sequential bin. We denote this as \emph{Static Bin} approach. Few other researchers \cite{Bartoletti:2019arXiv190504366B,ZangandZang:ECSC:WBD:2018} have also explored the concurrency aspect in smart contract execution which is described in \apnref{rwork}. \textcolor{blue}{((We are trying to make it short and fix it here))}
}

\vspace{.5mm}
\noindent
\textbf{Exploiting Object-Based Semantics:} The STM-based solution of Anjana et al. \cite{Anjana:OptSC:PDP:2019} and others \cite{ZangandZang:ECSC:WBD:2018}, rely on \emph{read-write conflicts (\rwconf{s})} for synchronization. In contrast, \emph{object-based STMs (OSTMs)} track higher-level, more advanced conflicts between operations like insert, delete, lookup on a hash-table, enqueue/dequeue on queues, push/pop on the stack \cite{Hassan+:OptBoost:PPoPP:2014,HerlihyKosk:Boosting:PPoPP:2008,Peri+:OSTM:Netys:2018}. It has been shown in literature that \ostm{s} provide greater concurrency than \rwstm{s} (see \figref{ex1} in \apnref{ostm-adv}). This observation is important since Solidity \cite{Solidity}, the langugage used for writing \sctrn{s} for Ethereum, extensively uses a hash-table structure called \emph{mapping}. This indicates that a hash-table based OSTM is a natural candidate for concurrent execution of these \sctrn{s}.\footnote[1]{For clarity, we denote smart contract transactions as \sctrn{s} and an STM transaction as a transaction in the paper.} 

The lock-based solution proposed by Dickerson et al. \cite{Dickerson+:ACSC:PODC:2017} used abstract locks on hash-table keys, exploiting the object-based semantics with locks. In this paper, we want to exploit the object semantics of hash-tables using optimistic STMs to improve the performance obtained. 




To capture the dependencies between the \sctrn{s} in a block, miner threads construct the \blg concurrently and append it to the block. The dependencies between the transactions are given by the \emph{object-conflicts (\oconf{s})} (as opposed to \rwconf{s}) which ensure that the execution is correct, i.e., satisfies \emph{conflict-opacity} \cite{Peri+:OSTM:Netys:2018}. It has been shown \cite{Hassan+:OptBoost:PPoPP:2014,HerlihyKosk:Boosting:PPoPP:2008,Peri+:OSTM:Netys:2018} that there are fewer \oconf{s} than \rwconf{s}. Since there are fewer \oconf{s}, the \blg has fewer edges which in turn, allows validators to execute more \sctrn{s} concurrently. This also reduces the size of the \blg leading to a smaller communication cost.

\begin{figure}[tb]
	\centerline{\scalebox{0.35}{\input{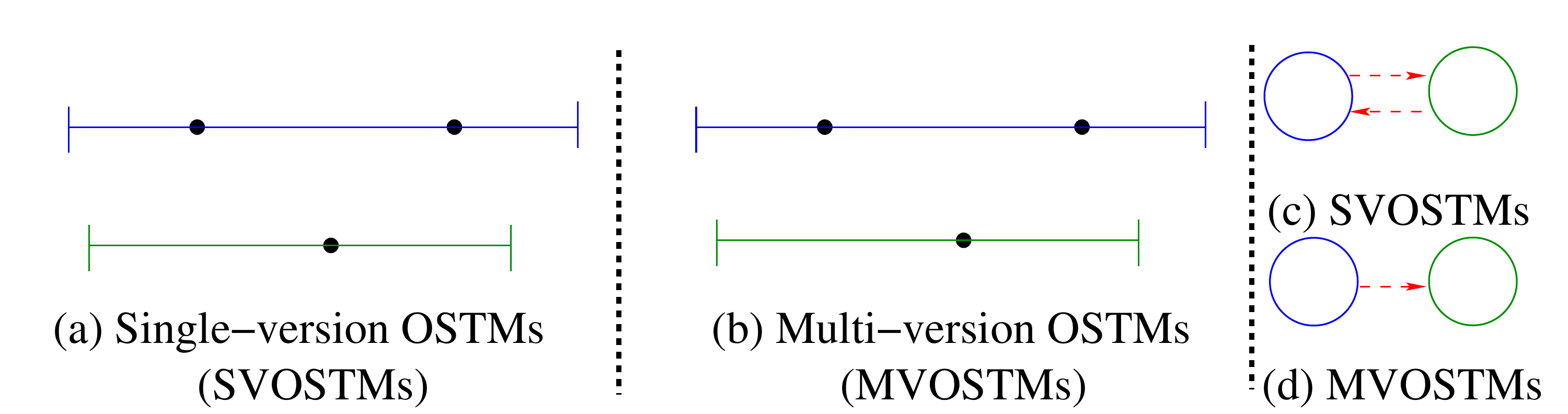_t}}}
	\caption{\small(a) demonstrates two transactions $T_1$ and $T_2$. Between two \emph{get\_balance()} (or \emph{get()}) of $T_1$ on account $A_1$ and $A_2$ (initially both accounts maintain \$10 in each), $T_2$ \emph{sends} money from account $A_1$ to $A_2$. Thus, $T_1$ gets older balance for $A_1$ while newer balance of $A_2$. Hence, it cannot be serialized \cite{Papad:1979:JACM} (or \opq \cite{GuerKap:Opacity:PPoPP:2008}). The corresponding conflict-graph has a cycle as shown in (c). To ensure the correctness, \svotm{s} abort $T_1$. (b) shows the execution using \mvotm under the same scenario as (a). By maintaining multiple versions, \mvotm allows transaction $T_1$ to get the older balance for both accounts $A_1$ and $A_2$. Hence, for this execution the equivalent serial schedule is $T_1T_2$ as shown in (d). 
	}
	\label{fig:ex2}
\end{figure}

\emph{Multi-version object STMs (\mvotm{s})} \cite{Juyal+:MVOSTM:SSS:2018} demonstrate that by maintaining multiple versions for each shared data-item (object), even greater concurrency can be obtained as compared to traditional \emph{single-version \ostm{s} (SVOSTMs)}. \figref{ex2} illustrates the benefits of concurrent execution of \sctrn{s} by miner using \mvotm over \svotm. 
Thus a \blg based on \mvotm will have fewer edges than an \svotm-based \blg, and will further reduce the size of the \blg stored in the block. These advantages motivated us to use \mvotm{s} for concurrent execution of \sctrn{s} by miners.

Concurrent executions of SCTs may cause inconsistent behaviors such as \emph{infinite loops}, \emph{divide by zero}, \emph{crash failures}. Some of these behaviors, such as crash failures can be mitigated when SCTs are executed in a controlled environment, for example, the \emph{Ethereum Virtual Machine (EVM)} \cite{ethereum:url}. However, not all anomalies such as infinite loop can be prevented by the virtual machine. The inconsistent executions can be prevented by ensuring that the executions produced by the STM system are \opq \cite{GuerKap:Opacity:PPoPP:2008} or one of its variants such as \coopty \cite{Peri+:OSTM:Netys:2018}. Our \mvotm satisfies the former condition, opacity, while our \svotm satisfies the latter one, co-opacity.

\vspace{.5mm}
\noindent
\textbf{Handling a Malicious Miner: } A drawback of some of the approaches mentioned above is that a malicious miner can make the final state of the blockchain be inconsistent. In the \blg approach, the miner proposes an incorrect \blg which does not include all necessary edges. With the bin-based approach, the miner could place the conflicting transactions in the concurrent bin \cite{Vikram&Herlihy:EmpSdy-Con:Tokenomics:2019}. This can result in inconsistent states in the \bc due to \emph{double spending}, e.g., when two concurrent transactions incorrectly transfer the same amount of money simultaneously from a source account to two different destination accounts. 
If a malicious miner $mm$ does not add an edge between these two transactions in the \blg\cite{Anjana:OptSC:PDP:2019} or put these two transactions in concurrent bin \cite{Vikram&Herlihy:EmpSdy-Con:Tokenomics:2019} then both \sctrn{s} can execute concurrently by validators. 
Similarly, if a majority of validators accept the block containing these two transactions, then the state of the \bc becomes inconsistent. We denote this problem as \emph{edge missing \blg} (\emph{\emb}) in the case of the \blg approach \cite{Anjana:OptSC:PDP:2019} and \emph{faulty bin} (\emph{\fb}) in the case of the bin-based approach \cite{Vikram&Herlihy:EmpSdy-Con:Tokenomics:2019}. In \secref{result}, we show  the effect of malicious miners (\emb or \fb) through experiments on the blockchain system. 


To handle \emb and \fb errors, the validator must reject a block when edges are missing in the \blg or when conflicting \sctrn{s} are in the concurrent bin. Execution of such a graph or concurrent bin by the validator threads can lead to an inconsistent state. To detect such an execution, the validator threads watch and identify transactions performing conflicting access on the same data-items while executing concurrently. In \secref{pm}, we propose a \emph{Smart \Mthr Validator (\scv)} which uses $counters$ to detect this condition and reject the corresponding blocks. 

Dickerson et al. \cite{Dickerson+:ACSC:PODC:2017} suggest a lock-based solution to handle \emb errors. The miner generates and stores the lock profile required to execute the \sctrn{s} of a block along with the \blg. The validator then records a trace of the locks each of its thread would have acquired, had it been executing speculatively independent of the \blg. The validator would then compare the lock profiles it generated with the one provided by the miner present in the block. If the profiles are different then the block is rejected. This check is in addition to the check of the final state generated and the state in the block. This solution is effective in handling \emb errors caused by malicious miners. However, it is lock-based and cannot be used for preventing \emb issue in optimistic approaches such as \cite{Anjana:OptSC:PDP:2019}. The advantage of our \scv solution is that it works well with both optimistic and lock-based approaches. 

\noindent
\textbf{Our Contributions:}
This paper develops an efficient object semantics framework to execute \sctrn{s} concurrently by a miner using optimistic hash-table (both single and multi-version) OSTM. 
We use two methodologies to re-execute the \sctrn{s} concurrently by validators. In addition to the \emph{fork-join approach} employed by Dickerson et al. \cite{Dickerson+:ACSC:PODC:2017}, we also use a \emph{decentralized approach} \cite{Anjana:OptSC:PDP:2019} in which the validator threads execute independent \sctrn{s} concurrently in a decentralized manner. To handle \emb and \fb errors, we propose a decentralized \emph{Smart \Mthr Validator}.
To summarize: 
\begin{itemize}
	\item We introduce an efficient object-based framework for the concurrent execution of \sctrn{s} by miners (\secref{cminer}). We propose a novel way to execute the \sctrn{s} efficiently using optimistic \svotm by miner while ensuring \emph{\coopty} \cite{Peri+:OSTM:Netys:2018}. To further increase concurrency, we propose a new way for the execution of \sctrn{s} by the miner using optimistic \mvotm \cite{Juyal+:MVOSTM:SSS:2018} while satisfying opacity \cite{GuerKap:Opacity:PPoPP:2008}. 

	\item We propose the concurrent execution of \sctrn{s} by validators using \blg given by miner to avoid \fbr error (\secref{cvalidator}). The validator executes the \sctrn{s} using either fork-join or decentralized approaches. 
	
	\item We propose a Smart \Mthr Validator to handle \emb and \fb errors caused by malicious miners (\secref{malminer}).
	
	\item  Extensive simulations (\secref{result}) show that concurrent execution of \sctrn{s} by \svotm and \mvotm miner provide an average speedup of 3.41$\times$ and 3.91$\times$ over serial miner, respectively. \svotm and \mvotm based decentralized validator provide on average of 46.35$\times$ and 48.45$\times$ over serial validator, respectively.
\end{itemize}

\cmnt{
\vspace{-.1cm}
\begin{itemize}
    \item \subsecref{cminer} presents the concurrent execution of \sctrn{s} by miner:
\begin{itemize}
\item We propose a novel way to execute the \sctrn{s} efficiently using \svotm by miner while ensuring correctness criteria as \emph{conflict-opacity (\coopty)} \cite{Peri+:OSTM:Netys:2018}.

\item To achieve the greater concurrency further, we propose a new way for the execution of \sctrn{s} by the miner using \mvotm \cite{Juyal+:MVOSTM:SSS:2018} while satisfying the correctness criteria as opacity \cite{GuerKap:Opacity:PPoPP:2008}. 
\end{itemize}
\item We propose the concurrent execution of \sctrn{s} by validator which uses BG given by miner to avoid \fbr error in \subsecref{cvalidator}. The validator executes the \sctrn{s} using (a) fork-join, and (b) decentralized approaches. 
\item We propose a \emph{Smart Concurrent Validator (SCV)} to detect the malicious miner shown in \subsecref{mminer}.

\item  We perform extensive simulations in \secref{result}. 
\begin{itemize}
\item Concurrent execution of \sctrn{s} by \svotm and \mvotm miner provide an average speedup of 3.86x and 4.5x over serial miner. 
\item \svotm and \mvotm based decentralized validator provide on average of 29.76x and 32.81x  over serial validator.
\end{itemize}
\end{itemize}
}




\cmnt{
\vspace{1mm}
\noindent
\textbf{Related Work:} The first \emph{blockchian} concept has been given by Satoshi Nakamoto in 2009 \cite{Nakamoto:Bitcoin:2009}. He proposed a system as bitcoin \cite{Nakamoto:Bitcoin:2009} which performs electronic transactions without the involvement of the third party. The term \SContract{} \cite{Nick:PublicN:journals:1997} has been introduced by Nick Szabo. \emph{Smart contract} is an interface to reduce the computational transaction cost and provides secure relationships on public networks. 
Nowadays, ethereum \cite{ethereum} is one of the most popular smart contract platform which supports a built-in Turing-complete programming language such as Solidity \cite{Solidity}.

Sergey et al. \cite{SergeyandHobor:ACP:2017} elaborates a new perspective between smart contracts and concurrent objects. Zang et al. \cite{ZangandZang:ECSC:WBD:2018} uses any concurrency control mechanism for concurrent miner which delays the read until the corresponding writes to commit and ensures conflict-serializable schedule. Basically, they proposed concurrent validators using MVTO protocol with the help of write sets provided by concurrent miner. Dickerson et al. \cite{Dickerson+:ACSC:PODC:2017} introduces a speculative way to execute smart contracts by using concurrent miner and concurrent validators. They have used pessimistic software transactional memory systems (STMs) to execute concurrent smart contracts which use rollback, if any inconsistency occurs and prove that schedule generated by concurrent miner is \emph{serializable}. We propose an efficient framework for the execution of concurrent smart contracts using optimistic software transactional memory systems. So, the updates made by a transaction will be visible to shared memory only on commit hence, rollback is not required. Our approach ensures correctness criteria as opacity \cite{GuerKap:Opacity:PPoPP:2008, tm-book} by Guerraoui \& Kapalka, which considers aborted transactions as well because it read correct values.

}
\vspace{-.3cm}
\section{System Model}
\label{sec:model}
As in \cite{tm-book,KuzSat:NI:TCS:2016}, we consider $n$ threads, $p_1,\ldots,p_n$ in a system that access shared data-items (or objects/keys) in a completely asynchronous fashion. We assume that none of the threads/processes will crash or fail unexpectedly. 
 

\noindent
\textbf{Events:} A thread invokes the transactions and the transaction calls object level (or higher-level) methods which internally invokes read/write atomic events on the shared data-items to communicate with other threads. Method invocations (or \emph{\inv}) and responses (or \emph{\rsp}) are also considered as events.

\noindent
\textbf{History:} It is a sequence of invocations and responses of different transactional methods. We consider \emph{sequential history} in which invocation on each transactional method follows the immediate matching response. 

In this paper, we consider only \emph{well-formed} histories in which a new transaction will not begin until the invocation of previous transaction has not been committed or aborted.

\noindent
\textbf{Software Transactional Memory (STM):} STM \cite{KuzSat:NI:TCS:2016,Shavit:1995:STM:224964.224987} is a convenient concurrent programming interface for a programmer to access the shared memory using multiple threads. 
A typical STM works at lower-level (read-write) and exports following \mth{s}: (1) \begt{()}: begins a transaction with unique id. (2) \tread{($k$)} (or $r(k)$): reads the value of data-item $k$ from shared memory. (3) \twrite{($k,v$)} (or $w(k,v)$): writes the value of data-item $k$ as $v$ in its local log. (4) \tryc{()}: validates the transaction. If all updates made by the transaction is consistent then the updates will be reflected onto shared memory and transaction returns \emph{commit} (or $\commit$). Otherwise, transaction returns \emph{abort} (or $\abort$). Transaction $T_i$ starts with \begt{()} and completes when any of its methods return abort (or $\abort$) or commit (or $\commit$). The \tread{()} and \tryc{()} \mth{s} may return $\abort{}$. 

\ostm{s} export higher-level \mth{s}: (1) \begt{()}: begins a transaction with unique id. (2) \tlu{($k$)} (or $l(k)$): does a lookup on data-item $k$ from shared memory. (3) \tins{($k,v$)} (or $i(k,v)$): inserts the value of data-item $k$ as $v$ in its local log. (4) \tdel{($k$)} (or $d(k)$): deletes the data-item $k$. (5) \tryc{()}: validates the transaction. After successful validation, the actual effects of \emph{STM\_insert()} and \emph{STM\_delete()} will be visible in the shared memory and transaction returns $\commit$. Otherwise, it will return $\abort$. We represent \emph{\tlu{()}}, and \emph{\tdel{()}} as \emph{return-value ($rv{}$)} methods because both methods return the value from hash-table. We represent \emph{\tins{()}}, and \emph{\tdel{()}} as \emph{update} ($upd{}$) \mth{s} as on successful \emph{STM\_tryC()} both methods update the shared memory. Methods \textit{rv{()}} and \textit{\tryc}$()$ may return $\abort$. For a transaction $T_i$, we denote all the objects accessed by its $rv_i()$ and $upd_i()$ \mth{s} as $rvSet_i$ and $updSet_i$, respectively. 

\noindent
\textbf{Valid and Legal History:} If the  successful $rv_j(k, v)$ (i.e., $v \neq \abort$) method of a transaction $T_j$  \emph{returns} the value from any of previously committed transaction $T_i$ that has performed $upd()$ on key $k$ with value $v$  then such $rv_j(k, v)$ method is \emph{valid}. If all the \emph{rv()} methods of history $H$ is valid then $H$ is valid history \cite{Peri+:OSTM:Netys:2018}.

If the successful $rv_j(k, v)$ (i.e., $v \neq \abort$) method of a transaction $T_j$  \emph{returns} the value from  previous closest committed transaction $T_i$ that $k\in updSet_i$ ($T_i$ can also be $T_0$) and updates the $k$ with value $v$ then such $rv_j(k, v)$ method is \emph{legal}.   If all the \emph{rv()} methods of history $H$ is legal then $H$ is legal history \cite{Peri+:OSTM:Netys:2018}.  A legal history is also valid history.

Two histories H and H$'$ are \emph{equivalent} if they have the same set of events. H and H$'$ are \emph{multi-version view equivalent} \cite[Chap. 5]{WeiVoss:TIS:2002:Morg} if they are valid and equivalent. H and H$'$ are \emph{view equivalent} \cite[Chap. 3]{WeiVoss:TIS:2002:Morg} if they are legal and equivalent. Additional definitions are in \apnref{ap-model}.

\cmnt{
\noindent
\textbf{Notion of Equivalence:} If two histories $H$ and $H'$ have same set of events then $H$ and $H'$ are equivalent to each other. There exist three types of equivalence  with respect to two histories $H$ and $H'$. (1) \emph{Multi-Version View Equivalent} \cite[Chap. 5]{WeiVoss:TIS:2002:Morg} or \emph{\mvve}: if two history $H$ and $H'$ are valid and $H$ is equivalent to $H'$ then it satisfies MVVE. (2) \emph{View Equivalent} \cite[Chap. 3]{WeiVoss:TIS:2002:Morg} or \emph{\vie}: two histories  $H$ and $H'$ are legal and $H$ is equivalent to $H'$  then it satisfies VE. (3) \emph{Conflict Equivalent} \cite[Chap. 3]{WeiVoss:TIS:2002:Morg} or \emph{\ce}: two histories  $H$ and $H'$ are legal and have same object-conflicts order, i.e., $\oconf(H) = \oconf(H')$ then it satisfies CE. VE and CE use only single-version corresponding to each key that makes it implicitly legal. Additional definitions are in \apnref{ap-model}.
}





\cmnt{

\noindent
\textbf{History:} It is a sequence of invocations and responses of different transactional methods. In other words, a \emph{history} $H$ is a sequence of events represented as $\evts{H}$. $H$ internally invokes multiple transactions by multiple threads concurrently. Each transaction calls higher-level \mth{s} and each method comprises of read/write events. Here, we consider \emph{sequential history} in which invocation on each transactional method follows the immediate matching response. It helps to make each transactional method as an atomic event. We denote the total order of transactional method as $<_H$, so history is represented as $\langle \evts{H},<_H \rangle$. 

In this paper, we consider only \emph{well-formed} histories in which a new transaction will not begin until the invocation of previous transaction has not been committed or aborted. History $H$ comprises of the set of transactions as $\txns{H}$. The set of \emph{committed} and \emph{aborted} transactions in $H$ is denoted as $\comm{H}$ and $\aborted{H}$ respectively. So, the set of \emph{incomplete} or \emph{live} transactions in $H$ is represented as $\incomp{H} = \live{H} = (\txns{H}-\comm{H}-\aborted{H})$. 

\noindent
\textbf{Object-Conflict (or \oconf) and Transaction Real-Time Order:} Object-conflict order depends on the methods accessed by the transactions. So, for two transactions $T_i$ and $T_j$ \oconf{s} are defined as  $T_i\prec_H^{\oconf} T_j$, if (1) \textit{$rv_i(k,v) <_H STM\_tryC_j()$}, $k\in updSet(T_j)$ and $v \neq \abort$; (2) \textit{$STM\_tryC_i() <_H rv_j(k,v)$}, $k\in updSet(T_i)$ and $v \neq \abort$; (3) \textit{$STM\_tryC_i() <_H STM\_tryC_j()$} and ($updSet(T_i) $$\cap$$ updSet(T_j) \neq\emptyset$); then the \oconf order respects from $T_i$ to $T_j$. So, it can be seen that \oconf{s} is defined only for successfully executed methods. 


Consider two transactions $T_i,T_j \in \txns{H}$, if $T_i$ terminates, i.e. either committed or aborted before \emph{$STM\_begin_j()$} of $T_j$ then $T_i$ and $T_j$ respects real-time order represented as $T_i\prec_H^{RT} T_j$. 

\noindent
\textbf{Valid and Legal History:} If the  successful $rv_j(k, v)$ (i.e., $v \neq \abort$) method of a transaction $T_j$  \emph{returns} the value from any of previously committed transaction $T_i$ that has performed $upd()$ on key $k$ with value $v$  then such $rv_j(k, v)$ method is \emph{valid}. If all the \emph{rv()} methods of history $H$ is valid then $H$ becomes valid history.

If the successful $rv_j(k, v)$ (i.e., $v \neq \abort$) method of a transaction $T_j$  \emph{returns} the value from  previous closest committed transaction $T_i$ that $k\in updSet_i$ ($T_i$ can also be $T_0$) and updates the $k$ with value $v$ then such $rv_j(k, v)$ method is \emph{legal}.   If all the \emph{rv()} methods of history $H$ is legal then $H$ becomes legal history.  A legal history will also be a valid history.

\noindent
\textbf{Notion of Equivalence:} If two histories $H$ and $H'$ have same set of events then $H$ and $H'$ are equivalent to each other. There exist three types of equivalence  with respect to two histories $H$ and $H'$. (1) \emph{Multi-version view equivalent} \cite[Chap. 5]{WeiVoss:TIS:2002:Morg} or \emph{\mvve}: if two history $H$ and $H'$ are valid and $H$ is equivalent to $H'$ then it satisfies MVVE. (2) \emph{View equivalent} \cite[Chap. 3]{WeiVoss:TIS:2002:Morg} or \emph{\vie}: two histories  $H$ and $H'$ are legal and $H$ is equivalent to $H'$  then it satisfies VE. (3) \emph{Conflict equivalent} \cite[Chap. 3]{WeiVoss:TIS:2002:Morg} or \emph{\ce}: two histories  $H$ and $H'$ are legal and have same object-conflicts order, i.e., $\oconf(H) = \oconf(H')$ then it satisfies CE. VE and CE use only single-version corresponding to each key that makes it implicitly legal. 

\noindent
\textbf{MVSR, VSR, and CSR:} A history $H$ is in Multi-Version View Serializable (or MVSR) \cite[Chap. 5]{WeiVoss:TIS:2002:Morg}, if there exist a serial history $S$ such that $S$ is MVVE to $H$. It keeps multiple versions with respect to each key. A history $H$ is in View Serializable (or VSR) \cite[Chap. 3]{WeiVoss:TIS:2002:Morg}, if there exist a serial history $S$ such that $S$ is VE to $H$. It has shown that verifying the membership of MVSR and VSR in the database is NP-Complete \cite{Papad:1979:JACM}. So, researchers came across with an efficient equivalence notion which is Conflict Serializable (or CSR) \cite[Chap. 3]{WeiVoss:TIS:2002:Morg}. It is a sub-class of VSR which uses conflict graph characterization to verify the membership in polynomial time. A history $H$ is in CSR, if there exist a serial history $S$ such that $S$ is CE to $H$.

\noindent
\textbf{Serializability and Opacity:} Serializability\cite{Papad:1979:JACM} is a popular correctness criteria in databases. But it considers only \emph{committed} transactions. This property is not suitable for STMs. Hence, Guerraoui and Kapalka propose a new correctness criteria opacity \cite{GuerKap:Opacity:PPoPP:2008}  for STMs which considers \emph{aborted} transactions along with \emph{committed} transactions as well. A history $H$ is opaque \cite{GuerKap:Opacity:PPoPP:2008,tm-book}, if there exist an equivalent serial history $S$ with (1) set of events in $S$ and complete history of $H$ are same (2) $S$ satisfies the properties of legal history and (3) The real-time order of $S$ and $H$ are preserved.   

\noindent
\textbf{Linearizability:} A linearizable \cite{HerlihyandWing:1990:LCC:ACM} history $H$ has following properties: (1) In order to get a valid sequential history the invocation and response events can be reordered.
(2) The obtained sequential history should satisfy the sequential specification of the objects. (3) The real-time order should respect in sequential reordering as in $H$.

\noindent
\textbf{Lock Freedom:} It is a non-blocking progress property in which if multiple threads are running for a sufficiently long time then at least one of the thread will always make progress.  Lock-free\cite{HerlihyShavit:Progress:Opodis:2011} guarantees system-wide progress but individual threads may starve.
}
\vspace{-.1cm}
\section{Proposed Mechanism}
\label{sec:pm}
This section describes our approach to the construction, data structures, and methods of concurrent BG, concurrent execution of \sctrn{s} by \mthr miner using optimistic object-based STMs, \mthr validator, and detection of a malicious miner.

\vspace{-.2cm}
\subsection{The Block Graph}
\label{sec:bg}
The \mthr miner executes the \sctrn{s} concurrently and stores the dependencies among them in a \blg. Each committed transaction corresponding to an \sctrn is a vertex in the \blg while edges capture the dependencies, based on the STM protocol. \Mthr miner uses \svotm and \mvotm to execute the \sctrn{s} concurrently, using timestamps. The challenge here is to construct the BG concurrently without missing any dependencies.


\svotm-based miner maintains three types of edges based on \oconf{s} between the transactions. An edge $T_i\rightarrow T_j$ between two transaction is defined when: \textbf{(1)} \textit{$rv_i(k,v) \text{ - } STM\_tryC_j()$ $edge:$} if $rv_i(k,v)<_HSTM\_tryC_j()$, $k\in updSet(T_j)$ and $v \neq \abort$; \textbf{(2)} \textit{$STM\_tryC_i() \text{ - } rv_j(k,v)$ $edge:$} if $STM\_tryC_i()<_H rv_j(k,v)$, $k\in updSet(T_i)$ and $v \neq \abort$; \textbf{(3)} \textit{$STM\_tryC_i() \text{ - } STM\_tryC_j()$ $edge:$} if $STM\_tryC_i() <_H STM\_tryC_j()$ and ($updSet(T_i)$ $\cap$ $updSet(T_j)\neq\emptyset$).


\mvotm-based miner maintains two types of edges based on \emph{multi-version \oconf{s}} (\emph{\mvoconf{s}}) \cite{Juyal+:MVOSTM:SSS:2018}. \textbf{(1)} \textit{return value from (rvf) edge:} if $STM\_tryC_i() <_H rv_j(k,v)$, $k\in updSet(T_i)$ and $v \neq \abort$ then there exist an \emph{rvf edge} from $T_i$ to $T_j$, i.e., $T_i \rightarrow T_j$; \textbf{(2)} \textit{multi-version (mv) edge:} consider a triplet, $STM\_tryC_i(), rv_m(k,v),$ $ STM\_tryC_j()$ in which ($updSet(T_i)$ $\cap$ $updSet(T_j)$ $\cap$ $rvSet(T_m)\neq \emptyset$), ($T_i$ and $T_j$ update the key $k$ with value $v$ and $u$ respectively) and ($u, v \neq \abort$); then there are two types of \emph{mv edge}: (a) if $STM\_tryC_i() <_H STM\_tryC_j()$ then there exist a  \emph{mv edge} from $T_m$ to $T_j$. (b)  if $STM\_tryC_j() <_H STM\_tryC_i()$ then there exist a  \emph{mv edge} from $T_j$ to $T_i$. 
We modified \svotm and \mvotm to capture \oconf{s} and \mvoconf{s} in the BG.

\noindent
\textbf{Data Structure for the Block Graph:} 
We use \emph{adjacency lists} to maintain the $BG(V, E)$. $V$ is the set of vertices (or SCTs) stored as a vertex list and $E$ is the set of edges (conflicts between SCTs) stored as edge list. Two lock-free methods build the BG (see details in \apnref{ap-dsBG}): \emph{addVertex()} adds a vertex and \emph{addEdge()} adds an edge in BG. To execute the \sctrn{s}, validator threads use three methods of block graph library: \searchg{()} identifies the independent vertex with indegree 0 to execute it concurrently, \remex{()} decrements the indegree of conflicting vertices and \searchl{()} identifies the independent vertex in thread local.


\cmnt{

Here $V$ is the set of vertices (\vrtnode{s}) is stored as a vertex list ($\vrtlist$). Similarly E is the set of Edges (\egnode{s}) is stored as edge list ($\eglist$ or conflict list) as shown in the \figref{graph}.(a), both $\vrtlist$ and $\eglist$ store between the two sentinel nodes \emph{Head}($-\infty$) and \emph{Tail}($+\infty$). Each \vrtnode{} maintains a tuple: \emph{$\langle$ts, scFun, indegree, egNext, vrtNext$\rangle$}. Here, \emph{ts} is the unique timestamp $i$ of the transaction $T_i$ to which this node corresponds to. scFun is the smart contract function executed by the transaction $T_i$ which is stored in \vrtnode. The number of incoming edges to the transaction $T_i$, i.e. the number of transactions on which $T_i$ depends, is captured by \emph{indegree}. Field \emph{egNext} and \emph{vrtNext} points the next \egnode{} and \vrtnode{} in the $\eglist$ and $\vrtlist$ respectively.      

Each \egnode{} of $T_i$ similarly maintains a tuple: \emph{$\langle$ts, vrtRef, egNext$\rangle$}. Here, \emph{ts} stores the unique timestamp $j$ of $T_j$ which has an edge coming from $T_i$ in the graph. BG maintains the conflict edge from lower timestamp transaction to higher timestamp transaction. This ensures that the \bg is acyclic. The \egnode{s} in $\eglist$ are stored in increasing order of the \emph{ts}. Field \emph{vrtRef} is a \emph{vertex reference pointer} which points to its own \vrtnode{} present in the $\vrtlist$. This reference pointer helps to maintain the \emph{indegree} count of \vrtnode{} efficiently. 

\begin{figure}
	\centerline{\scalebox{0.32}{\input{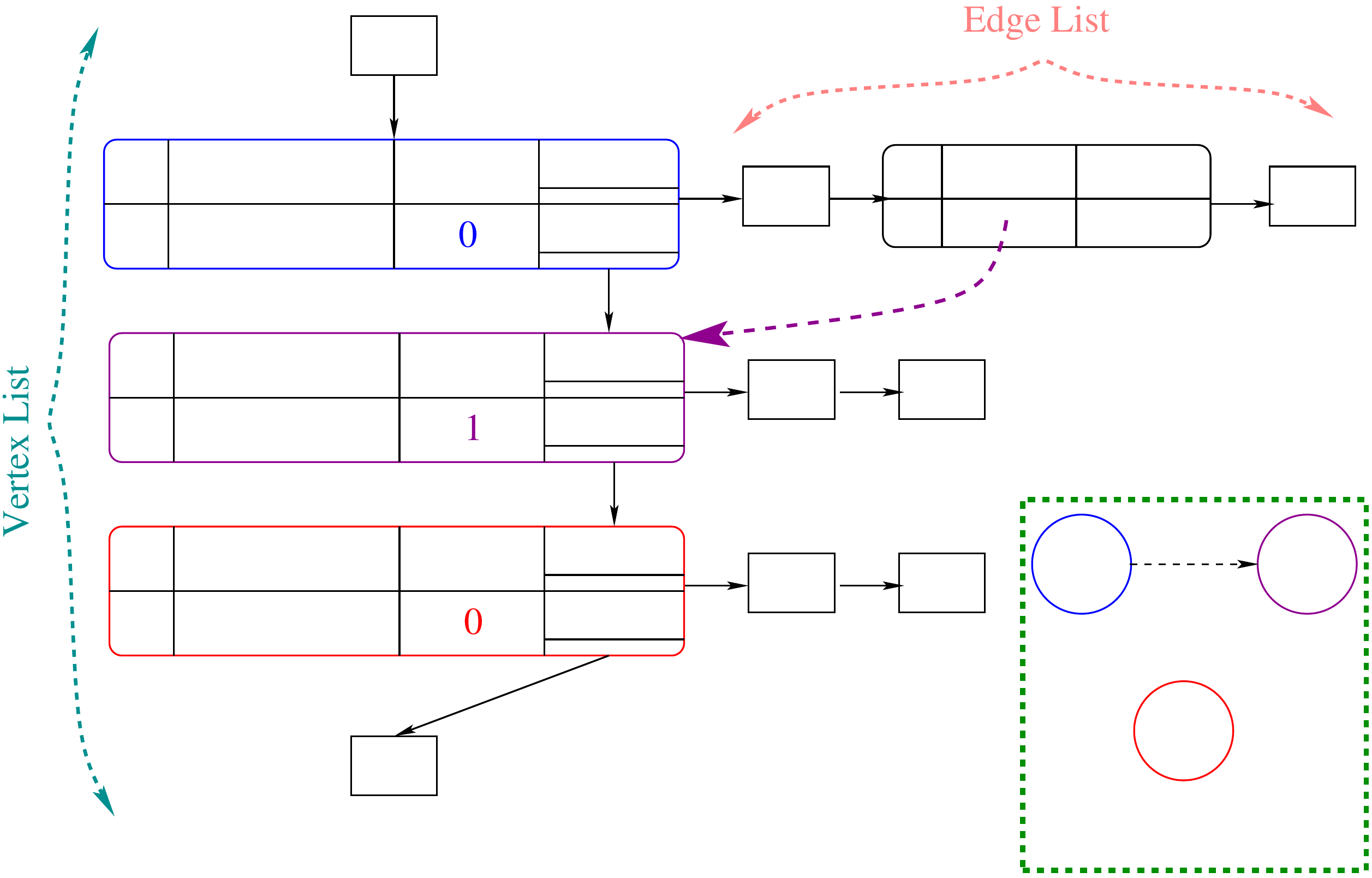_t}}}
	 \caption{Construction of Block Graph}
	\label{fig:graph}
\end{figure}

\figref{graph}.(b) demonstrates the high level overview of BG which consist of three transaction $T_1$, $T_2$ and $T_3$. Here, $T_1$, $T_2$ are in conflict while $T_3$ is independent. The underlying representation of it illustrated in \figref{graph}.(a). For each transactions ($T_1, T_2$ and $T_3$) there exists a \vrtnode{} in the $\vrtlist$ of BG along with their conflicts. Since there is en edge from $T_1$ to $T_2$, an \egnode{} corresponding to $T_2$ is in the $\eglist$ of $T_1$. As mentioned earlier, the conflict edges go from lower timestamp to higher timestamp to ensure acyclicity of the \bg. After adding the \egnode, the \emph{indegree} of the \vrtnode of $T_2$ in the $\vrtlist$ is incremented as shown in \figref{graph}.(a). 

\noindent
\textbf{Block Graph Library Methods Accessed by Concurrent Miner:} Concurrent miner uses multiple threads to build the \bg. Specifically, the concurrent miner uses two methods to build the \bg: \emph{addVertex()} and \emph{addEdge()}. These two methods are \emph{lock-free} \cite{HerlihyShavit:Progress:Opodis:2011}. Here, \emph{addVertex(i)}, as the names suggests adds a \vrtnode with $ts=i$ for respective $scFun$ to the $vrtList$ of the BG if such a vertex is not already present. This node is atomically added to $vrtList$ using CAS operations.


The $addEdge(u, v)$ \mth creates an \egnode for $v$ in $u$'s \vrtnode if it does not already exist. First, it identifies the \egnode{} in the $\eglist$ of \vrtnode{}. If \egnode{} does not exist then it creates the node and adds into the $\eglist$ of \vrtnode{} atomically using CAS. The edges from $u$ to $v$ captures the conflicts between these transactions. This implies that $v$ is dependent on $u$ and the scFun of $v$ has to be executed only after $u$'s execution. 


\noindent
\textbf{Block Graph Library Methods Accessed by Concurrent Validator:} Concurrent validator uses multiple threads to re-executes the \sctrn{s} concurrently and deterministically with the help of BG given by the concurrent miner. To execute the \sctrn{s}, validator threads use three methods of block graph library: \searchg{()}, \remex{()} and \searchl{()}. First a validator thread $Th_i$ invokes the \searchg() \mth which searches for a \vrtnode{} $n$ in the \bg having $indegree$ 0 (i.e., source node). Such a node corresponds to a \sctrn which does not depend on other transactions and hence can be executed independently without worrying about synchronization issues. On identifying $n$, $Th_i$ atomically tries to claim it if not already claimed by some other thread. It does this by performing a CAS \op on the $indegree$ to -1. After successful execution of scFun of $n$, $Th_i$ invokes \remex \mth which decrements the \emph{inedgree} count for all the nodes which are have an incoming edge from $n$. This list of nodes are maintained in the \emph{\eglist} of $n$. 

While decrementing the \emph{indegree} count of conflicting nodes if the validator thread $Th_i$ finds any other \vrtnode{} with the \emph{indegree} as 0 then it adds that a reference to that node in its thread-local log $thLog_i$. The $thLog_i$ is used for optimization so that $Th_i$ need not to search in the global \blg to find the next source node. If a reference to the source node exists in the local log of validator, it is identified by the \searchl{()} \mth. $Th_i$ on identifying such a node $n$, atomically claims $n$ (if not already claimed by another thread). Then it executes the scFun of $n$ and then \remex as explained above. Please refer to the detailed description of BG methods along with pseudocode in \apnref{ap-bgcode}.  
}

\vspace{-.1cm}
\subsection{\Mthr Miner}
\label{sec:cminer}
A miner $m$ receives requests to execute \sctrn{s} from different clients. The miner $m$ then forms a block with several \sctrn{s} (the precise number of SCTs depend on the blockchain), $m$ execute these \sctrn{s} while executing the non-conflicting \sctrn{s} concurrently to obtain the final state of the blockchain. Identifying the non-conflicting \sctrn{s} statically is not straightforward because smart contracts are written in a turing-complete language \cite{Dickerson+:ACSC:PODC:2017} (e.g., Solidity \cite{Solidity} for Ethereum). We use optimistic STMs to execute the \sctrn{s} concurrently as in Anjana et al. \cite{Anjana:OptSC:PDP:2019} but adapted to object-based STMs on hash-tables to identify the conflicts. 

\ignore{
The multi-threaded miner receives \sctrn{s} requests from multiple clients to execute them and forms a block while executing the non-conflicting \sctrn{s} concurrently. Identifying the non-conflicting \sctrn{s} is not straightforward because smart contracts are written in a turing-complete language \cite{Dickerson+:ACSC:PODC:2017} (such as Solidity \cite{Solidity} for Etherum). We use optimistic STMs to execute the \sctrn{s} concurrently as in Anjana et al. \cite{Anjana:OptSC:PDP:2019} but adapted to object-based STMs on hash tables. 
}

\vspace{2mm}
\begin{algorithm}[tb]
	\scriptsize 
	\caption{\Mthr Miner{(\sctl{[]}, STM)}: $m$ threads concurrently execute the \sctrn{s} from \sctl with 
		STMs.}
	\label{alg:cminer}	
	\begin{algorithmic}[1]
		\makeatletter\setcounter{ALG@line}{0}\makeatother
		\Procedure{\emph{\Mthr Miner{ (\sctl{[]}, STM)}}}{} \label{lin:cminer1}
		\State $curInd = \gind.get\&Inc()$; // Atomically read the index and increment it. \label{lin:index}
		\While{($curInd < \sctl.length$)} // Execute until all SCTs have not been executed \label{lin:lt-chk}
		\State $curTrn = \sctl[curInd]$; // Get the current \sctrn to execute  \label{lin:sctrn}
		\State $T_i$ = \begtrans{()}; // Begins a new transaction. Here $i$ is unique id \label{lin:beg-tx}
		\ForAll{($curStep \in curTrn.scFun$)} // scFun is a list of steps \label{lin:curStep}
		\State \textbf{switch}(curStep) \label{lin:swt-cur}						
		\State \quad case lookup($k$): \label{lin:case-lookup}						
		\State \quad \quad $v$ $\gets$ \tlu{($k$)}; // Lookup data-item $k$ from a shared memory \label{lin:lu-ret} 
		\State \quad \quad \textbf{if}{($v == \abort$)} \textbf{then} goto \Lineref{beg-tx};\textbf{end if} break;		
		
		\State \quad case insert($k, v$): // Insert data-item k into $T_i$ local memory with value $v$ \label{lin:case-ins} 
		\State \quad \quad \tins{($k, v$)}; \label{lin:stm-ins} break;
		
		\State \quad case delete($k$): \label{lin:case-del}
		\State \quad \quad $v$ $\gets$ \tdel{($k$)}; // Actual deletion of data-item k happens in STM\_tryC()    \label{lin:del-ret} 
		\State \quad \quad \textbf{if}{($v == \abort$)} \textbf{then} goto \Lineref{beg-tx}; \textbf{end if} break;
		
		\State \quad default: Execute the step normally // Any step apart from lookup, insert, delete
		\State \textbf{endswitch}
		\EndFor
		\State $v$ $\gets$ STM\_tryC(); // Try to commit the transaction $T_i$\label{lin:tryC25}  
		\State \textbf{if}{($v == \abort$)} \textbf{then} goto \Lineref{beg-tx}; \label{lin:tryC} \textbf{end if}
		\State addVertex(i); // Create vertex node for $T_i$ with scFun \label{lin:addv}
		\State BG(i, STMs); // Add the conflicts of $T_i$ to \bg \label{lin:confs}
		\State $curInd = \gind.get\&Inc()$; // Atomically read the index and increment it. \label{lin:next-ind}				
		\EndWhile \label{lin:end-main_while}
		\State build-block(); // Here the miner builds the block. \label{lin:block}
		\EndProcedure \label{lin:cminer27}		
	\end{algorithmic}
\end{algorithm}

\algoref{cminer} shows how \sctrn{s} are executed by an $m$ threaded miner. The input is an array of \sctrn{s}, $\sctl$ and a object-based STM, (\svotm or \mvotm). We assume that both libraries support the \blg \mth{s} described above. The \mthr miner uses a global index into the \sctl $\gind$ which is accessed by all the threads. A thread $Th_x$ first reads the current value of $\gind$ into a local value $curInd$ and increments $\gind$ atomically (\Lineref{index}). 


Having obtained the current index in $curInd$, $Th_x$ gets the corresponding \sctrn, $curTrn$ from $\sctl[]$ (\Lineref{sctrn}). $Th_x$ then begins a STM transaction corresponding to $curTrn$ (\Lineref{beg-tx}). For every hash-table insert, delete and lookup encountered while executing the scFun of $curTrn$, $Th_x$ invokes the corresponding STM \mth{s}: \emph{\tlu{()}}, \emph{\tins{()}}, \emph{\tdel{()}}, either on an \svotm or on an \mvotm. Otherwise, it executes the step normally. If any of these steps fail, $Th_x$ begins a new STM transaction (\Lineref{beg-tx}) and re-executes these steps. 
 
Upon successful completion of transaction $T_i$, $Th_x$ creates a vertex node for $T_i$ in the \bg (\Lineref{addv}). Then, $Th_x$ obtains the transactions (\sctrn{s}) with which $T_i$ is conflicting from the \ostm, and adds the corresponding edges to the BG (\Lineref{confs}). $Th_x$ then gets the index of the next \sctrn to execute (\Lineref{next-ind}). 

An important step here is how the underlying \ostm{s} (either \svotm or \mvotm) maintain the conflicts among the transactions which is used by $Th_x$ (see \apnref{ap-sobjds}). Both \svotm and the \mvotm use timestamps to identify the conflicts.	 


\ignore{
	\begin{figure}[H]
		\centerline{\scalebox{0.25}{\input{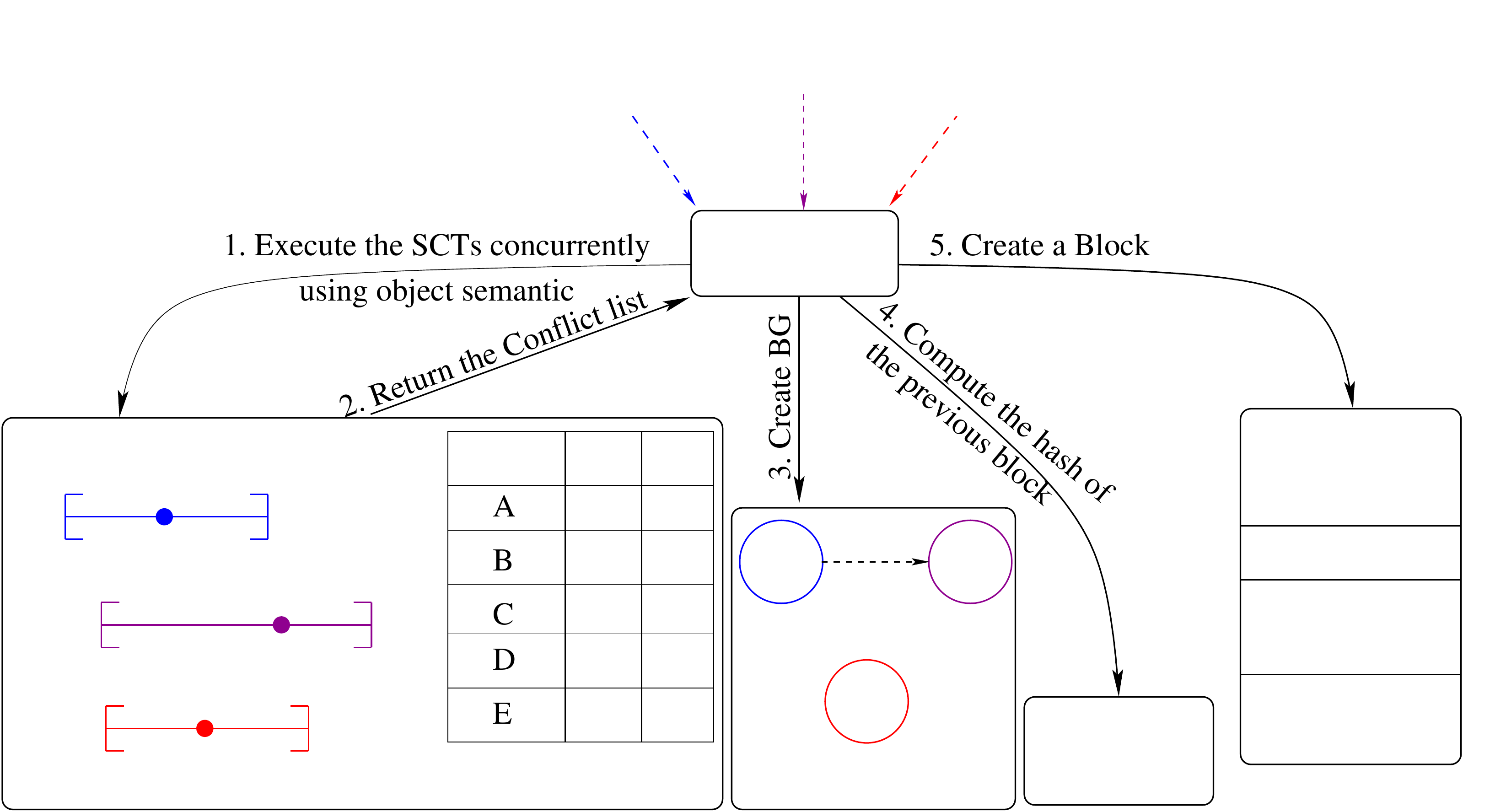_t}}}
		\caption{Working of Concurrent Miner}
		\label{fig:cminer}
	\end{figure}
	
	\begin{figure}
		\centerline{\scalebox{0.4}{\input{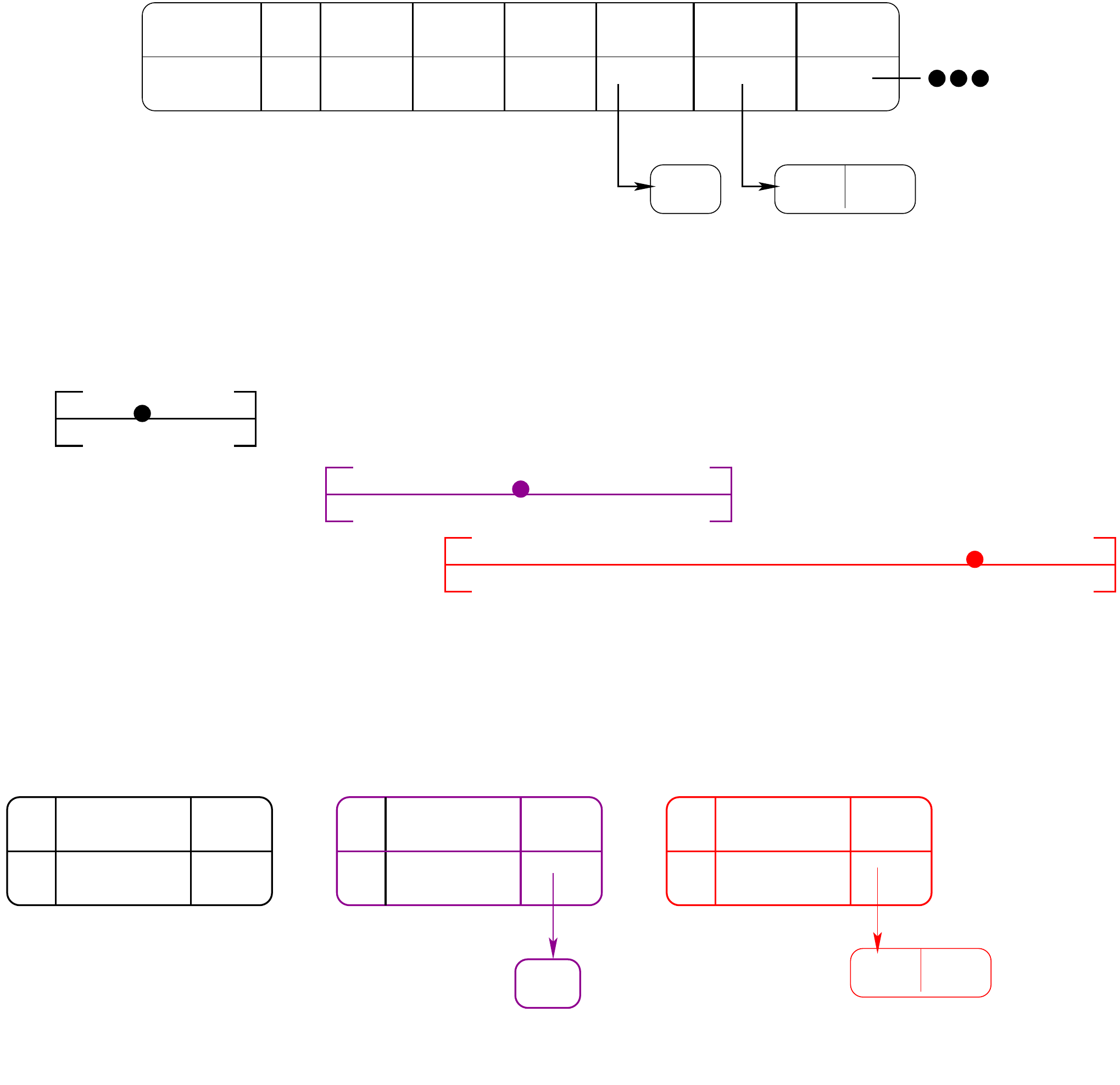_t}}}
		\caption{\svotm conflict list}
		\label{fig:svostm-cl}
	\end{figure}
	
	\begin{figure}
		\centerline{\scalebox{0.4}{\input{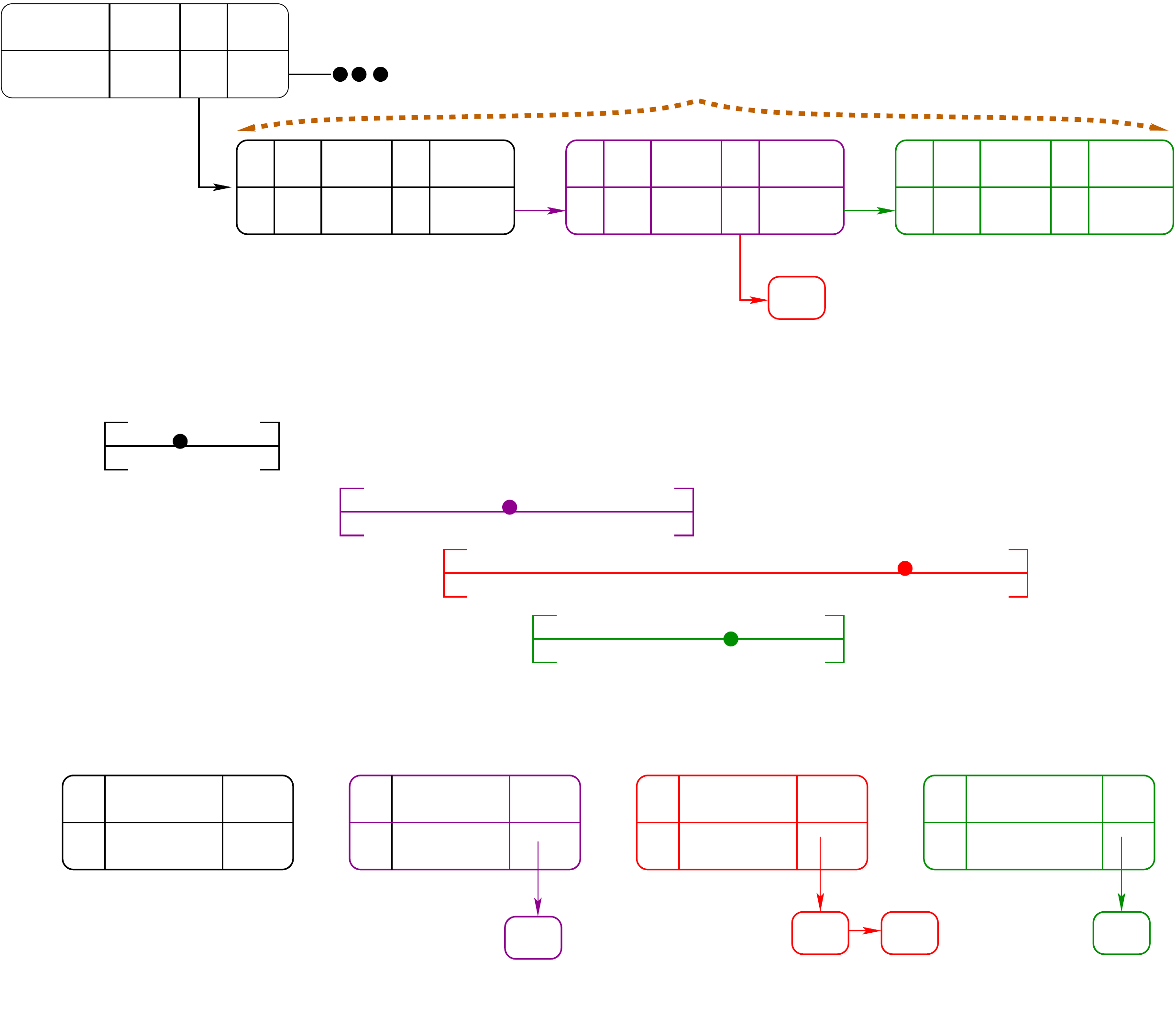_t}}}
		\caption{\mvotm conflict list}
		\label{fig:mvostm-cl}
	\end{figure}
}

Once all the \sctrn{s} of \sctl have been executed successfully and the BG is constructed concurrently, it is stored in the proposed block. The miner then stores the final state ($FS_m$) of the blockchain (which is the state of all shared data-items), resulting from the execution of \sctrn{s} of \sctl in the block. The miner then computes the \op{s} related to the blockchain. For Ethereum, this would constitute the hash of the previous block. Then the \mthr miner proposes a block which consists of all the \sctrn{s}, BG, $FS_m$ of all the shared data-items and hash of the previous block (\Lineref{block}). The block is then broadcast to all the other nodes in the blockchain. 


\setlength{\intextsep}{0pt}

\ignore{
\begin{algorithm}
	\scriptsize 
	\caption{\cminer{(\sctl, STM)}: $m$ threads concurrently execute the \sctrn{s} from \sct{} with 
		STMs.}
	\label{alg:cminer1}	
	\begin{algorithmic}[1]
		\makeatletter\setcounter{ALG@line}{0}\makeatother
		\Procedure{\emph{\cminer{ (\sctl, STM)}}}{}\label{lin:cminer2}
		\ForAll{(\sctrn $\in$ \sctl)} // Execute until successful execution of all SCTs \label{lin:cminer41} 
		\State $T_i$ = \begtrans{()}; // Get the unique timestamp $i$ for transaction $T_i$ \label{lin:cminer51}
		\ForAll{(curStep $\in$ \sctrn.scFun)} // scFun is a list of steps \label{lin:cminer61}
		\If{(curStep == lookup($k$))}\label{lin:cminer81}
		\State $v$ $\gets$ STM\_$lookup_i${($k$)}; /*Lookup data-item $k$ from a shared memory*/\label{lin:cminer101} 
		\If{($v$ == $abort$)} goto \Lineref{cminer5}; \label{lin:cminer111}
		\EndIf\label{lin:cminer121}
		\ElsIf{(curStep == insert($k, v$))} \label{lin:cminer131}
		\State STM\_$insert_i$($k, v$); /*Insert $k$ into $T_i$ local memory with value $v$*/\label{lin:cminer141} 
		\ElsIf{(curStep == delete($k$))} \label{lin:cminer151}
		\State STM\_$delete_i$($k$); /*Actual deletion of $k$ happens in STM\_tryC() */ \label{lin:cminer161}
		\Else{} 
		curStep is not lookup, insert and delete.\label{lin:cminer181}
		\EndIf
		\EndFor\label{lin:cminer191}
		\State $v$ $\gets$ STM\_\emph{$tryC_i()$}; /*Try to commit the transaction $T_i$*/\label{lin:cminer201}
		\If{($v == abort$)} goto \Lineref{cminer5};\label{lin:cminer211}
		\EndIf			\label{lin:cminer221}
		\State Create vertex node \vrtnode{} with $\langle$\emph{$i$, scFun, 0, nil, nil}$\rangle$ as a vertex of BG;	\label{lin:cminer231}
		\State $BG_i$(\emph{vrtNode}, STMs); /*Build BG with conflicts of $T_i$*/		\label{lin:cminer241}
		\State scFun $\gets$ \emph{curInd.get\&Inc}(\sctl); /*Get the next SCT*/ 
		\EndFor	\label{lin:cminer261}
		\EndProcedure\label{lin:cminer271}
		
	\end{algorithmic}
\end{algorithm}
}

\ignore{
To achieve the greater concurrency further, concurrent miner uses \emph{Multi-Version Object-based STM (MVOSTM)} \cite{Juyal+:MVOSTM:SSS:2018} protocol instead of \svotm protocol to execute the \sctrn{s} concurrently. MVOSTM maintains multiple versions corresponding to each shared data-item. 
Concurrent miner uses MVOSTM to capture the \emph{multi-version \oconf{s} (or \mvoconf)} to construct the BG as defined in \subsecref{bg}. It maintains the less number of dependency in the BG as compared to \svotm. So, construction of the BG by concurrent miner using MVOSTM is faster than \svotm. Hence, concurrent execution of \sctrn{s} by miner using MVOSTM reduces the number of aborts and surpasses the efficiency. Later, concurrent validators will also re-execute more \sctrn{s} concurrently and ensure better performance because of the lesser number of \emph{\mvoconf} in the BG. 
}

\apnref{ap-correctness} proves the following theorems:
\vspace{-.15cm}
\begin{theorem}
	\label{thm:BGdep}
	The BG captures all the dependencies between the conflicting nodes.
\end{theorem}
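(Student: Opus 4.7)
The plan is to fix two arbitrary conflicting committed transactions $T_i$ and $T_j$ in the history produced by the multi-threaded miner, and show that the BG contains a directed edge between them (oriented consistently, from lower timestamp to higher so that the graph is acyclic). Since conflicts between transactions come in two flavours depending on the underlying STM (the three cases of \oconf{} for \svotm, and the \emph{rvf}/\emph{mv} cases for \mvotm), the argument proceeds by a case analysis on the type of conflict that witnesses $T_i \prec^{\oconf}_H T_j$ (or the analogous \mvoconf{} relation).

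First, I would establish a structural invariant: whenever a transaction $T_i$ returns \ok{} from \tryc{()} on \Lineref{tryC25} of \algoref{cminer}, the underlying OSTM has already recorded (in its shared metadata, see \apnref{ap-sobjds}) the set of prior transactions with which $T_i$ conflicts. The call \texttt{BG(i, STMs)} on \Lineref{confs} then iterates over this recorded set and invokes \texttt{addVertex} and \texttt{addEdge} for every conflict. I would show, case by case, that each of the three \oconf{} patterns (rv--tryC, tryC--rv, tryC--tryC) is detected inside the STM: the responsible OSTM check (timestamp validation or commit-time conflict detection) fires during either $T_i$'s or $T_j$'s \tryc{}, and in both cases the later-committing transaction observes the former in the OSTM's metadata for the shared key. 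The analogous case analysis handles \emph{rvf} (a later lookup returning a value written by an earlier committer) and \emph{mv} edges (which depend on a triplet involving two writers and an intermediate reader of the same key) for \mvotm.

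Next, I would argue that the concurrent construction of the BG by several miner threads loses no edge. This requires two sub-claims. (a) \texttt{addVertex(i)} succeeds for every committed transaction: since vertex nodes are inserted via CAS on a lock-free vertex list, at most one thread wins and the losing threads observe an already-inserted node, so $T_i$'s vertex is present when any thread later scans for it during an \texttt{addEdge} call. (b) For any conflicting ordered pair $(T_i, T_j)$ with $i<j$, the thread processing $T_j$ calls \texttt{addEdge}$(i,j)$ \emph{after} $T_i$ has committed; because $T_j$'s commit strictly follows $T_i$'s commit (this is forced by all three \oconf{} cases and by the rvf/mv cases), the vertex for $T_i$ is already in the list by the linearization point of that \texttt{addEdge}, and the CAS-based insertion into $T_i$'s edge list guarantees the edge is eventually installed exactly once.

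The main obstacle will be the concurrency argument of step (b) in the \tryc{}--\tryc{} case, where two conflicting updaters may be attempting to commit essentially simultaneously. I will need to invoke the OSTM's own commit-time serialization (the per-key locks/CAS used inside \tryc{}, described in \apnref{ap-sobjds}) to linearize the two commits; whichever commits second must observe the first in the key's metadata and therefore record it as a conflict, which in turn forces the correct \texttt{addEdge} call from the earlier to the later timestamp. A secondary subtlety is the \mvotm{} \emph{mv}-edge case, where the conflict is witnessed by a third transaction $T_m$: I would show that $T_m$'s read of a specific version of $k$ causes the second writer's commit validation to consult $T_m$'s read set and insert the appropriate edge via \texttt{addEdge}, so that the \emph{mv} dependency is recorded by the later of $T_i$ and $T_j$ to commit.
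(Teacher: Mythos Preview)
Your proposal is substantially more detailed than the paper's own proof, which is essentially a three-sentence appeal to the correctness of the underlying STM: the paper simply observes that the BG is built from the conflict information maintained by \svotm{} or \mvotm{}, asserts that ``the underlying STM protocol ensures that all the dependencies have been covered correctly among the conflicting nodes of BG,'' and concludes. There is no case analysis on conflict type, no explicit treatment of the concurrent \texttt{addVertex}/\texttt{addEdge} races, and no discussion of the \emph{mv}-edge triplet; all of that is delegated wholesale to the cited STM papers.

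Your route is therefore genuinely different in granularity: you unpack exactly the argument the paper leaves implicit. What this buys you is an actual proof rather than a citation; what the paper's approach buys is brevity and modularity (if one trusts \cite{Peri+:OSTM:Netys:2018,Juyal+:MVOSTM:SSS:2018}, the theorem is immediate). One small imprecision in your plan: in step~(b) you claim that for every \oconf{} pair $(T_i,T_j)$ the commit of $T_j$ strictly follows the commit of $T_i$. The first \oconf{} case, $rv_i(k,v) <_H \textit{STM\_tryC}_j()$, does not by itself order $\textit{STM\_tryC}_i()$ before $\textit{STM\_tryC}_j()$. The edge is still captured, but by whichever of the two commits \emph{later} (that transaction sees the other in $cL_{list}$/$cU_{list}$), not necessarily by $T_j$'s thread; also note that \texttt{BG()} calls \texttt{addVertex} for \emph{both} endpoints before \texttt{addEdge}, so the ``$T_i$'s vertex is already present'' worry is moot. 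With that correction your argument goes through.
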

\vspace{-.25cm}
\begin{theorem}
	\label{thm:co-ostm}
	A history $H_m$ generated by the multi-threaded miner with \svotm satisfies co-opacity.
\end{theorem}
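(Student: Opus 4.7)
The plan is to leverage the co-opacity guarantee of the underlying \svotm together with \Thmref{BGdep}. The multi-threaded miner in \algoref{cminer} executes each \sctrn as an STM transaction, retrying upon abort until commit. Hence the committed projection of $H_m$ corresponds exactly to the sequence of STM transactions that successfully return from $STM\_tryC()$, and aborted transactions in $H_m$ are exactly the ones re-issued via the \textbf{goto} on \Lineref{tryC}. So $H_m$ is well-formed and every transaction $T_i$ in $H_m$ is a transaction of the underlying \svotm.

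First, I would appeal to the fact that the \svotm of Peri et al.\ \cite{Peri+:OSTM:Netys:2018} is itself co-opaque: the sub-history consisting of all invocation/response events of $\tlu, \tins, \tdel$, and $\tryc$ issued by the miner threads admits a legal serial history $S$ that is conflict-equivalent to $H_m$ and preserves its real-time order. In particular, every successful $rv_j(k,v)$ returns the value written by the closest committed preceding $upd$-method on $k$ in $S$, which is precisely \validity and \legality in our model.

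Second, I would argue that the additional bookkeeping performed by the miner outside the STM (reading $\gind$, calling $addVertex$ and the BG-updating logic on \Lineref{confs}) does not introduce any new events on shared data-items: these operations touch only the BG structure, which is disjoint from the hash-table data-items accessed by the \sctrn{s}. Therefore, the set $\oconf(H_m)$ is determined entirely by the STM-level events, so conflict-equivalence of $H_m$ with $S$ carries over unchanged. \Thmref{BGdep} then guarantees that every pair in $\oconf(H_m)$ is reflected as an edge in the BG, which in turn shows that the topological order of the committed vertices of the BG is a valid linearization witnessing co-opacity (this piece will be needed later for the validator's correctness but is not strictly required here).

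The main obstacle is handling the real-time order requirement in the presence of concurrent commits by multiple miner threads. Two transactions $T_i, T_j$ are ordered by real-time in $H_m$ only if one completes before the other begins; since miner threads invoke $\begt$ only after picking up a fresh $curInd$, and $\tryc$ is the linearization point chosen by the underlying \svotm, I expect the real-time ordering to line up with the order in which $\tryc$ returns commit. Formalizing this will require showing that the \svotm's internal timestamping (at commit time) gives an order consistent with any real-time separation of transactions issued by the miner, after which co-opacity of $H_m$ follows directly from that of the inner STM history.
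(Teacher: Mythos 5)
Your proposal is correct and follows essentially the same route as the paper: since every \sctrn is executed as a transaction of the underlying \svotm (with aborted attempts retried), $H_m$ is a history generated by that STM, and the co-opacity guarantee of \svotm from \cite{Peri+:OSTM:Netys:2018} applies directly. The extra observations you make (that the BG bookkeeping touches no shared data-items, and the worry about real-time order under concurrent commits) are already subsumed by that guarantee, so no additional formalization is needed beyond the appeal to the underlying protocol.
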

\vspace{-.25cm}
\begin{theorem}
	\label{thm:o-mvostm}
	A history $H_m$ generated by multi-threaded miner with \mvotm satisfies opacity.
\end{theorem}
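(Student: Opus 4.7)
My plan is to exhibit, for any history $H_m$ produced by the multi-threaded miner running on top of \mvotm, an equivalent legal sequential history $S$ that respects the real-time order of $H_m$, which is exactly the definition of opacity. The backbone of the construction will be the block graph itself: by \thmref{BGdep} the BG records every rvf- and mv-edge between committed transactions, so a topological ordering of the committed vertices of the BG is a natural candidate for the serial order of $\comm{H_m}$.

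First I would restrict attention to the sub-history $H_m^C$ induced by committed transactions and argue that the BG built over $\comm{H_m}$ is a DAG. An rvf-edge $T_i \rightarrow T_j$ is inserted only when $STM\_tryC_i() <_{H_m} rv_j(k,v)$, so it points forward in commit-time. An mv-edge is built from the triple $\langle STM\_tryC_i(),\,rv_m(k,v),\,STM\_tryC_j() \rangle$ and, by the case split in \secref{bg}, is oriented consistently with the commit order of $T_i$ and $T_j$. Combining these two observations, every edge in the BG is consistent with an underlying order on the commit timestamps assigned by \mvotm, so no directed cycle can appear. Let $S_C$ be any topological extension of this DAG to a total order over $\comm{H_m}$; equivalently sequence the committed transactions in the order their vertices appear in any valid topological sort.

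Next I would argue that $S_C$ is multi-version view equivalent to $H_m^C$. Equivalence of events is immediate, since $S_C$ reshuffles whole committed transactions without deleting or inserting method calls. For validity I would show that every $rv_m(k,v)$ still returns a value written by some committed transaction $T_i$ that updates $k$, using the rvf-edge $T_i \rightarrow T_m$ to place $T_i$ before $T_m$ in $S_C$, and using the mv-edges to forbid any other updater $T_j$ on $k$ from sitting between $T_i$ and $T_m$ in $S_C$; this is exactly the Multi-Version Serializability criterion and is the main technical obstacle in the argument. Real-time order preservation follows because every real-time edge $T_i \prec^{RT} T_j$ forces $STM\_tryC_i() <_{H_m} STM\_begin_j()$, which is compatible with the orientation of both rvf- and mv-edges that could have been added between these two transactions, so a topological extension that also respects $\prec^{RT}$ exists.

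Finally I would lift $S_C$ to a serial history $S$ that also accounts for the aborted and live-at-abort transactions required by opacity, using the standard ``per-aborted-transaction insertion'' trick: for each aborted $T_a$ I would pick the latest point in $S_C$ consistent with every successful $rv_a(k,v)$ call it made (such a point exists because \mvotm returns $\abort$ from a lookup whenever no legal version can be served, hence every completed $rv_a$ was served from a committed version that is already placed in $S_C$), insert a solo serial execution of $T_a$'s completed prefix immediately before its $\abort$ event, and then remove it before proceeding. The resulting $S$ is legal, equivalent to $H_m$, and real-time preserving, which yields opacity of $H_m$. I expect the cycle-freeness of the BG in the presence of mv-edges, together with the legality check for the inserted aborted transactions, to be the most delicate part of the argument and the place where the proof will spend most of its text.
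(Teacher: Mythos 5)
Your proposal takes a genuinely different route from the paper. The paper's own proof of \thmref{o-mvostm} is a one-step delegation: miner threads perform every shared access through the \mvotm library, and every history generated by \mvotm was already proven \opq in \cite{Juyal+:MVOSTM:SSS:2018}, so $H_m$ inherits opacity; the \blg plays no role in that theorem (it only matters for the validator-equivalence theorems). You instead try to re-derive opacity from scratch by topologically sorting the \blg and establishing multi-version view equivalence directly, i.e., you are in effect re-proving the correctness of the \mvotm protocol itself. That is a much heavier burden, and it is exactly where the sketch has real gaps rather than omitted routine detail.

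Concretely: (a) your acyclicity argument --- ``every edge in the BG is consistent with an underlying order on the commit timestamps'' --- is false for mv-edges. In the paper's own illustration of the \mvotm data structure, $T_7$ looks up the version written by $T_5$, while $T_{10}$ writes a later version of the same key and commits \emph{before} $T_7$; the resulting mv-edge is $T_7 \rightarrow T_{10}$, i.e., from the later-committing to the earlier-committing transaction, so ``points forward in commit-time'' does not hold. Acyclicity in fact comes from the timestamp discipline of \mvotm (edges are oriented from lower to higher begin-timestamp, and the protocol aborts any transaction whose dependency would run backwards), which means your serialization argument must lean on the protocol's validation rules, not on commit order. (b) The step that separates opacity from mere MVSR --- inserting each aborted incarnation of an \sctrn at a single serialization point consistent with \emph{all} of its successful lookups and with the real-time order --- is precisely the technical content of the cited \mvotm opacity proof, and your last paragraph only asserts that such a point exists. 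The \blg cannot help you there, since it contains only committed transactions (\thmref{BGdep} says nothing about aborted ones). Either carry out that protocol-level argument, or do what the paper does: observe that Algorithm~\ref{alg:cminer} routes every lookup/insert/delete and tryC through \mvotm, so $H_m$ is a history generated by \mvotm and is \opq by \cite{Juyal+:MVOSTM:SSS:2018}.
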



\vspace{-.2cm}
\subsection{\Mthr Validator}
\label{sec:cvalidator}

The validator re-executes the \sctrn{s} deterministically relying on the \blg provided by the miner in the block. \blg consists of dependency among the conflicting \sctrn{s} and restrict validator threads to execute them serially to avoid the \emph{False Block Rejection (FBR) error} while non-conflicting SCTs execute concurrently to obtain greater throughput. The validator uses \searchg{()}, \searchl{()}, and \remex{()} \mth{s} of the BG library as described in \secref{bg}. 

\ignore{
High level overview of \algoref{val} shows the execution of \sctrn{s} by concurrent validator with the help of BG. First, multiple validator threads concurrently identify the source node (\emph{indegree} 0) in the BG using \emph{globalSearch()} at \Lineref{val5}. After identifying the source node, thread claims it (sets \emph{indegree} to -1) atomically so that other concurrent validator threads can not claim it. Then it executes the scFun of \sctrn corresponding to the source node. After successful execution of scFun, it decrements the \emph{indegree} count of conflicting node of source node using \emph{remExNode()} at \Lineref{val7}. While decrementing the \emph{indegree} of conflicting node, if node became source node then validator thread stores that node in its thread local log $thLog$ to execute next \sctrn at \Lineref{val10} efficiently. 


\begin{figure}
	\centerline{\scalebox{0.23}{\input{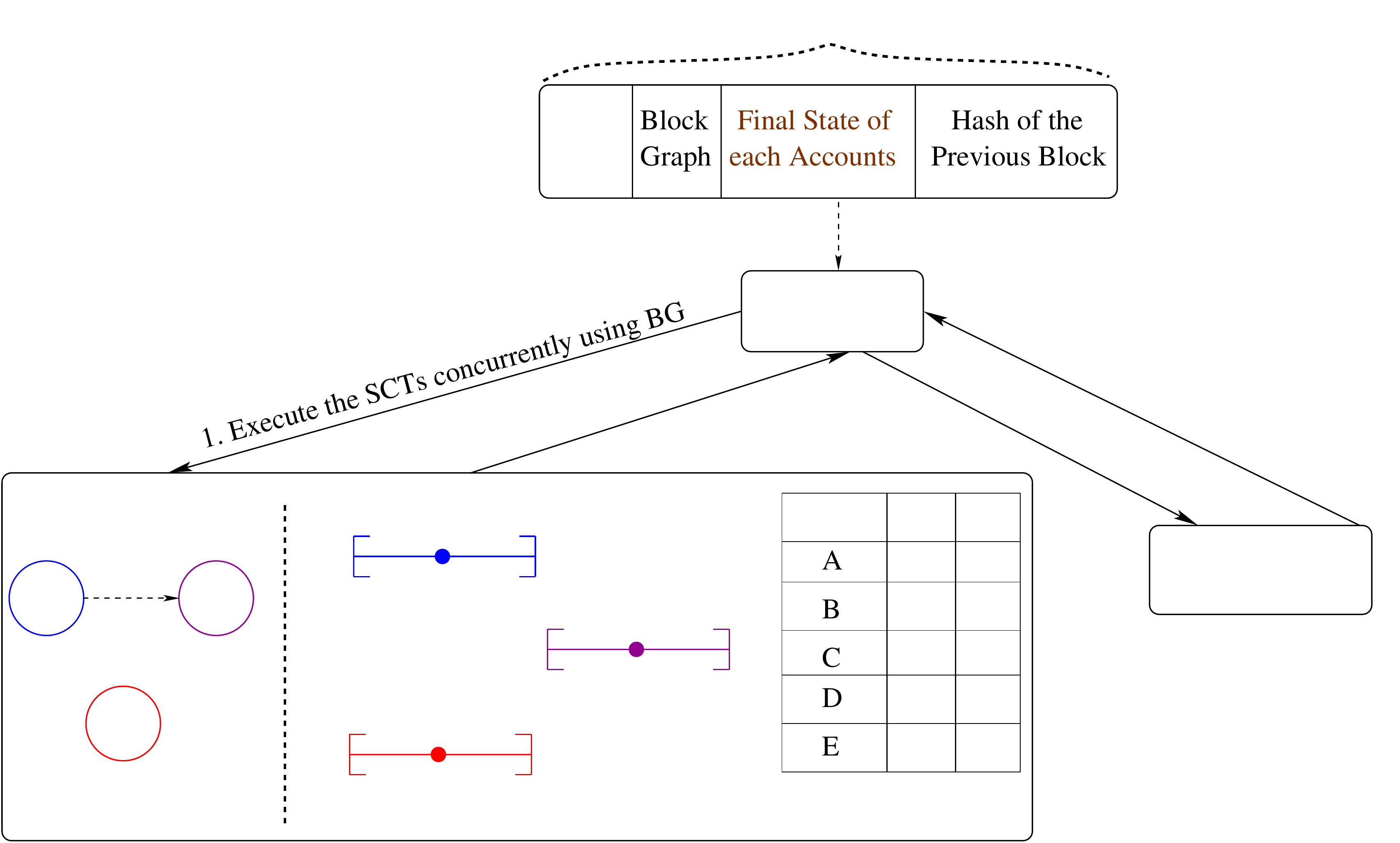_t}}}
	\caption{Working of Concurrent Validator}
	\label{fig:cval}
\end{figure}

\figref{cval}.(a) represents the BG in which conflicting edge is from transaction $T_1$ to $T_2$ because they accessed the same account as B. So concurrent validator needs to respect the conflicting order of the transaction to avoid the \emph{false block rejection (or FBR)} error and execute them serially as depicted in \figref{cval}.(b). However, transaction $T_3$ don't have any conflict with other transactions so it can be executed concurrently.
} 

After successful execution of the \sctrn{s}, validator threads compute the final state ($FS_v$) of the blockchain which is the state of all shared data items. If it matches the final state $FS_m$ provided by the miner then the validator accepts the block. If a majority of the validators accept the block, then it is added to the blockchain. Detailed description appears in \apnref{ap-cvalidator}.

\apnref{ap-correctness} proves the following theorems:





\cmnt{
	\begin{algorithm}
		\scriptsize
		\label{alg:cvalidator} 	
		\caption{\cvalidator(): Concurrently $V$ threads are executing atomic units of smart contract with the help of $CG$ given by the miner.}
		\begin{algorithmic}[1]
			\makeatletter\setcounter{ALG@line}{69}\makeatother
			\Procedure{\cvalidator()}{} \label{lin:cvalidator1}
			\State /*Execute until all the atomic units successfully completed*/\label{lin:cvalidator2}
			\While{(\nc{} $<$ size\_of(\aul))}\label{lin:cvalidator3}
			\State \vnode{} $\gets$ $CG$.\vh;\label{lin:cvalidator4}
			\State \searchl();/*First search into the thread local \cachel*/\label{lin:cvalidator5}
			\State \searchg(\vnode);/*Search into the \confg*/\label{lin:cvalidator6}
			\EndWhile \label{lin:cvalidator7}
			\EndProcedure\label{lin:cvalidator8}
		\end{algorithmic}
	\end{algorithm}

	\begin{algorithm}
		\scriptsize
		\label{alg:searchl} 	
		\caption{\searchl(): First thread search into its local \cachel{}.}
		\begin{algorithmic}[1]
			\makeatletter\setcounter{ALG@line}{77}\makeatother
			\Procedure{\searchl()}{}\label{lin:searchl1}
			\While{(\cachel{}.hasNext())}/*First search into the local nodes list*/\label{lin:searchl2}
			\State cacheVer $\gets$ \cachel{}.next(); \label{lin:searchl3} 
			\If{( cacheVer.\inc.CAS(0, -1))} \label{lin:searchl4}
			\State \nc{} $\gets$ \nc{}.$get\&Inc()$; \label{lin:searchl5}
			\State /*Execute the atomic unit of cacheVer (or cacheVer.$AU_{id}$)*/ \label{lin:searchl6}
			\State  \exec(cacheVer.$AU_{id}$);\label{lin:searchl7}
			\While{(cacheVer.\eh.\en $\neq$ cacheVer.\et)} \label{lin:searchl8}
			\State Decrement the \emph{inCnt} atomically of cacheVer.\emph{vref} in the \vl{}; \label{lin:searchl9} 
			\If{(cacheVer.\emph{vref}.\inc{} == 0)}\label{lin:searchl10}
			\State Update the \cachel{} of thread local log, \tl{}; \label{lin:searchl11}
			\EndIf\label{lin:searchl12}
			\State cacheVer $\gets$ cacheVer.\en;\label{lin:searchl13}
			\EndWhile\label{lin:searchl14}
			\Else\label{lin:searchl15}
			\State Remove the current node (or cacheVer) from the list of cached nodes; \label{lin:searchl16}
			\EndIf\label{lin:searchl17}
			
			\EndWhile\label{lin:searchl18}
			\State return $\langle void \rangle$;\label{lin:searchl19}
			\EndProcedure\label{lin:searchl20}
		\end{algorithmic}
	\end{algorithm}

	\begin{algorithm}
		\scriptsize
		\label{alg:searchg} 	
		\caption{\searchg(\vnode): Search the \vnode{} in the \confg{} whose \inc{} is 0.}
		\begin{algorithmic}[1]
			\makeatletter\setcounter{ALG@line}{97}\makeatother
			\Procedure{\searchg(\vnode)}{} \label{lin:searchg1}
			\While{(\vnode.\vn{} $\neq$ $CG$.\vt)}/*Search into the \confg*/ \label{lin:searchg2}
			\If{( \vnode.\inc.CAS(0, -1))} \label{lin:searchg3}
			\State \nc{} $\gets$ \nc{}.$get\&Inc()$; \label{lin:searchg4}
			\State /*Execute the atomic unit of \vnode (or \vnode.$AU_{id}$)*/\label{lin:searchg5}
			\State  \exec(\vnode.$AU_{id}$);\label{lin:searchg6}
			\State \enode $\gets$ \vnode.\eh;\label{lin:searchg7}
			\While{(\enode.\en{} $\neq$ \enode.\et)}\label{lin:searchg8}
			\State Decrement the \emph{inCnt} atomically of \enode.\emph{vref} in the \vl{};\label{lin:searchg9} 
			\If{(\enode.\emph{vref}.\inc{} == 0)}\label{lin:searchg10}
			\State /*\cachel{} contains the list of node which \inc{} is 0*/\label{lin:searchg11}
			\State Add \enode.\emph{verf} node into \cachel{} of thread local log, \tl{}; \label{lin:searchg12}
			\EndIf \label{lin:searchg13}
			\State \enode $\gets$ \enode.\en; \label{lin:searchg14}
			\EndWhile\label{lin:searchg15}
			\State \searchl();\label{lin:searchg16}
			\Else\label{lin:searchg17}
			\State \vnode $\gets$ \vnode.\vn;\label{lin:searchg18}
			\EndIf\label{lin:searchg19}
			\EndWhile\label{lin:searchg20}
			\State return $\langle void \rangle$;\label{lin:searchg21}
			\EndProcedure\label{lin:searchg22}
		\end{algorithmic}
	\end{algorithm}
}

\vspace{-.15cm}
\begin{theorem}
	\label{thm:hmve1}
	A history $H_m$ generated by the \mthr miner with \svotm and history $H_v$ generated by a \mthr validator are view equivalent.
\end{theorem}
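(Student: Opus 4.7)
\textbf{Proof plan for Theorem \ref{thm:hmve1}.} The plan is to reduce view equivalence of $H_m$ and $H_v$ to the fact that both histories correspond to topological orderings of the same block graph, which (by \thmref{BGdep}) encodes every $\oconf$ between committed transactions. Because any two linearizations of the same conflict DAG must agree on the ``last writer'' seen by each successful $rv()$ method, both histories must be legal and identical up to reordering of non-conflicting operations. This will give us view equivalence.

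First, I would restrict attention to the committed sub-histories $H_m^c$ and $H_v^c$. Since the miner loops each SCT until it commits (\Lineref{beg-tx}--\Lineref{tryC} of \algoref{cminer}), every SCT in \sctl appears as a committed transaction in $H_m^c$; and the validator re-executes exactly those SCTs stored as vertices of the BG, which are precisely the committed ones. Hence $\txns{H_m^c} = \txns{H_v^c}$, and both histories consist of the same $rv()$ and $upd()$ operations issued by these transactions. Next, I would invoke \thmref{co-ostm} to obtain a legal serial history $S_m$ that is conflict-equivalent to $H_m^c$, and observe that by \thmref{BGdep} the conflict order of $S_m$ respects every edge of the BG; in other words, $S_m$ is a topological ordering of the BG. Symmetrically, the validator, as described in \secref{cvalidator}, only executes a vertex once all its BG-predecessors have completed (via \searchg{} and \remex), so the commit order in $H_v^c$ yields a serial equivalent $S_v$ that is also a topological ordering of the BG.

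The central lemma I would prove is: any two topological orderings of the BG generate legal histories that return the same value for every $rv()$ operation. The argument is: fix a key $k$ and a transaction $T_j$ with a successful $rv_j(k,v)$. Let $T_i$ be the closest predecessor in either ordering with $k \in \updSet{T_i}$. By the three edge rules of \secref{bg}, the BG has an edge $T_i \to T_j$, and for every other committed updater $T_\ell$ of $k$ the BG orders $T_\ell$ either before $T_i$ or after $T_j$. Consequently, in every topological sort of the BG, the latest committed updater of $k$ preceding $T_j$ is the same $T_i$, so $rv_j(k,v)$ returns the same value in both $S_m$ and $S_v$. This makes both serial histories legal (using the same ``last writer'' for each read) and equivalent (same set of events), i.e., view equivalent. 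Since $H_m^c \equiv S_m$ and $H_v^c \equiv S_v$ as conflict-equivalent histories inherit the same read-from relation, $H_m$ and $H_v$ are view equivalent.

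The main obstacle will be the last step: carefully justifying that the BG's edges are \emph{complete} enough to force every pair of conflicting committed transactions into a fixed relative order across all topological sorts. This is exactly what \thmref{BGdep} provides, but its application requires showing that any reordering of non-conflicting transactions cannot change the reads-from relation -- which in turn depends on the fact that non-conflicting transactions either access disjoint keys or are both pure $rv()$ on their common key. A secondary subtlety is handling aborted transactions in $H_m$: they perform no updates visible to other transactions (by \coopty), so discarding them when comparing to $H_v$ preserves legality and does not introduce spurious events.
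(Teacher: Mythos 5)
Your proposal is correct and takes essentially the same route as the paper's own proof, which likewise argues that $H_v$ arises from a topological sort of the BG, that the BG captures all object-conflicts of $H_m$ (so the two histories are equivalent), and that the return-value-from relation is preserved, making $H_v$ legal and hence the histories view equivalent. Your write-up simply makes explicit the details the paper leaves implicit (restriction to committed transactions, and the lemma that every topological sort of the BG yields the same reads-from relation).
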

\vspace{-.25cm}
\begin{theorem}
		\label{thm:hmmvve}
	A history $H_m$ generated by the \mthr miner with MVOSTM and history $H_v$ generated by a \mthr validator are multi-version view equivalent. 
\end{theorem}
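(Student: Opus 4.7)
\noindent
\textbf{Proof Plan for Theorem \ref{thm:hmmvve}.} The plan is to establish multi-version view equivalence by verifying its two constituent conditions: (a) $H_m$ and $H_v$ are equivalent, i.e., contain the same set of events, and (b) both histories are valid, and moreover every successful rv method returns the same value in $H_v$ as in $H_m$.

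First, I would argue equivalence of events. Both miner and validator process the same set of SCTs packed into the block, each corresponding to one transaction. The validator retrieves the scFun stored at each BG vertex and re-executes it, invoking the same sequence of \tlu{}, \tins{}, \tdel{}, and \tryc{} methods that the miner invoked on the corresponding transaction. Since the BG is built only from committed transactions of $H_m$, and since the validator never aborts while following the BG schedule, $\evts{H_m} = \evts{H_v}$.

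Next, for validity and matching return values, I would lean on \thmref{BGdep}, which guarantees the BG captures every mv-oconflict of $H_m$. In particular, for every successful $rv_j(k,v)$ in $H_m$ that reads from a committed writer $T_i$, the BG contains an rvf edge $T_i \to T_j$; and for every other writer $T_l$ of $k$ whose order relative to $T_i$ matters to $T_j$'s read, an mv edge is present that orders $T_l$ consistently with $H_m$. The validator's \searchg{} and \searchl{} routines claim a vertex only once its indegree has reached $0$, and \remex{} decrements indegrees only after the source vertex's scFun completes. Hence, in $H_v$, for any transaction $T_j$, every predecessor in the mv-oconflict subgraph of $T_j$ (rvf and mv together) has committed before $T_j$ starts. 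Combined with MVOSTM's version-selection rule, which binds $rv_j(k)$ to the version written by the committed writer whose timestamp is largest among those preceding $T_j$, this ensures $rv_j(k)$ in $H_v$ returns the same value $v$ written by the same $T_i$ as in $H_m$.

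The main obstacle is handling the mv edges carefully: with multiple committed writers of $k$ in $H_v$, we must argue that the rvf and mv edges together uniquely pin down, for each reader, which writer it reads from. I would prove this by induction on the topological order in which validator threads claim vertices from BG. In the base case, source vertices have no mv-oconflict predecessors and their rv methods trivially agree with $H_m$ (they read from $T_0$ or from the initial state). In the inductive step, when a thread claims $T_j$, all rvf and mv predecessors of $T_j$ have already completed in $H_v$, so the MVOSTM version chain for every $k \in rvSet(T_j)$ has exactly the prefix of committed writers needed to make the selection rule return the same version as in $H_m$. This establishes (b), and combined with (a), yields multi-version view equivalence of $H_m$ and $H_v$.
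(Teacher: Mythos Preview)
Your proof plan follows essentially the same structure as the paper's: establish equivalence of events via the BG (same \sctrn{s}, same vertex set), then argue validity by showing the BG's rvf and mv edges preserve the reads-from relation. The paper's own proof is far terser---it simply asserts that topological order on the BG yields equivalence, and that the rvf edges force each writer $T_i$ to commit before the corresponding $rv_j(k)$ in $H_v$, hence $H_v$ is valid. Your inductive argument on topological order and your explicit treatment of mv edges make the reasoning considerably tighter than the paper's.

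There is one conceptual slip worth fixing. You write that ``MVOSTM's version-selection rule \ldots\ ensures $rv_j(k)$ in $H_v$ returns the same value.'' But the validator does not run \mvotm\ at all: its \exec{} procedure reads and writes shared memory directly, with no timestamps and no version list. So in $H_v$, a lookup on $k$ simply returns whatever value was last written to $k$ in shared memory. The correct step at this point is purely graph-theoretic: the rvf edge $T_i \to T_j$ forces $T_i$ to complete before $T_j$ starts, and for every other writer $T_l$ of $k$, the mv edges place $T_l$ either before $T_i$ (edge $T_l \to T_i$, so $T_i$'s write overwrites $T_l$'s) or after $T_j$ (edge $T_j \to T_l$, so $T_l$ has not yet executed when $T_j$ reads). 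Hence the last write to $k$ preceding $T_j$'s lookup in $H_v$ is by $T_i$, with no appeal to \mvotm{}'s timestamp rule needed---or available---on the validator side. With that rephrasing your induction goes through and the conclusion matches the paper's.
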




\subsection{Detection of Malicious Miners by Smart \Mthr Validator (SMV)}
\label{sec:malminer}

We propose a technique to handle edge missing BG (\emb) and Faulty Bin (FBin) caused by the malicious miner as explained in \secref{intro}. A malicious miner $mm$ can remove some edges from the BG and set the final state in the block accordingly. A \mthr validator executes the \sctrn{s} concurrently relying on the BG provided by the $mm$ and results the same final state. Hence, incorrectly accepts the block. Similarly, if a majority of the validators accept the block then the state of the blockchain becomes inconsistent. For instance, a double spending can be executed.

A similar inconsistency can be caused by a $mm$ in bin-based approach: $mm$ can maliciously add conflicting \sctrn{s} to the concurrent bin resulting in \fb error. This may cause \mthr validator $v$ to access shared data items concurrently leading to synchronization errors. To prevent this, the \scv checks to see if two concurrent threads end up accessing the same shared data item concurrently. If this situation is detected, then the miner is malicious. 

To identify such situations, \scv uses $counters$, inspired by the \emph{basic timestamp ordering (\bto)} protocol in databases \cite[Chap. 4]{WeiVoss:TIS:2002:Morg}. \scv keeps track of each global data item that can be accessed across multiple transactions by different threads. Specifically, \scv maintains two global counters for each key of hash-table (shared data item) $k$ (a) $\guc{k}$ - global update counter (b) $\glc{k}$ - global lookup counter. These respectively keep track of number of \textbf{updates} and \textbf{lookups} that are concurrently performed by different threads on $k$. Both counters are initially 0. 


When an \scv thread $Th_x$ is executing an \sctrn $T_i$ it maintains two local variables corresponding to each global data item $k$ which is accessible only by $Th_x$ (c) $\luc{k}{i}$ - local update counter  (d) $\llc{k}{i}$ - local lookup counter. These respectively keep track of number of updates and lookups performed by $Th_x$ on $k$ while executing $T_i$. These counters are initialized to 0 before the start of $T_i$.  

Having described the counters, we will explain the algorithm at a high level. Suppose the next step to be performed by $Th_x$ is:

\vspace{-.25cm}
\begin{enumerate}
    \item $lookup(k)$: Thread $Th_x$ will check for equality of the local and global update counters, i.e., $(\luc{k}{i} == \guc{k})$. If they are not same then \scv will report the miner as malicious. Otherwise, (i) $Th_x$ will atomically increment $\glc{k}$. (ii) $Th_x$ will increment $\llc{k}{i}$. (iii) Perform the lookup on the key $k$ from shared memory. 
    \ignore{
    \begin{enumerate}
        \item $Th_x$ will atomically increment $\glc{k}$.
        \item $Th_x$ will increment $\llc{k}{i}$.
        \item Perform the lookup on the key $k$. 
    \end{enumerate}
    }

    \item $update(k, val)$: Here $Th_x$ wants to update (insert/delete) $k$ with value $val$. So, $Th_x$ will check for the equality of both global, local update and lookup counters, i.e., $(\luc{k}{i} == \guc{k})$ and $(\llc{k}{i} == \glc{k})$. If they are not same then \scv will report the miner as malicious. Otherwise, (i) $Th_x$ will atomically increment $\guc{k}$. (ii) $Th_x$ will increment $\luc{k}{i}$. (iii) Perform the update on the key $k$ with value $val$ on shared memory. 
\end{enumerate}
\vspace{-.23cm}

Once $T_i$ terminates, $Th_x$ will atomically decrements $\guc{k}, \glc{k}$ by the value of $\luc{k}{i}, \llc{k}{i}$, respectively. Then $Th_x$ will reset $\luc{k}{i}, \llc{k}{i}$ to 0. 

The reason for performing these steps and the correctness of the algorithm is as follows: if $Th_x$ is performing a lookup on $k$ then no other thread should be performing an update on $k$. Here, $\guc{k}$ represents the number of updates to $k$ currently executed by all the threads while $\luc{k}{i}$ represents the number of updates to $k$ on behalf of $T_i$ by $Th_x$. Thus the value of $\gucntr$ should be same as $\llcntr$. Otherwise, some other thread is also concurrently performing the updates to $k$. Similarly, if $Th_x$ is performing an update on $k$, then no other thread should be performing an update or lookup on $k$. This can be verified by checking if $\lucntr$, $\llcntr$ are respectively same as $\glcntr$, $\gucntr$. 

\vspace{-.15cm}
\begin{theorem}
	\label{thm:Hmm}
	Smart \Mthr Validator rejects malicious blocks with \blg that allow concurrent execution of dependent SCTs.
\end{theorem}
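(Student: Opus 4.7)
The plan is to argue by contradiction. Suppose a malicious miner produces a block whose \blg omits an edge between two SCTs $T_i$ and $T_j$ that conflict on some shared key $k$ (so at least one of them performs an update on $k$), and suppose the Smart \Mthr Validator nonetheless accepts the block. Because the \blg contains no edge between $T_i$ and $T_j$, the SMV schedules them concurrently on two distinct threads $Th_x$ and $Th_y$, per the dispatch rules of \secref{cvalidator}. I will show that at least one of the counter guards of \secref{malminer} must fail, contradicting acceptance.

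The first step is to establish the counter invariant: at any quiescent moment and for every key $k$,
\[
\guc{k} \;=\; \sum_{T_\ell \text{ in-progress}} \luc{k}{\ell}, \qquad \glc{k} \;=\; \sum_{T_\ell \text{ in-progress}} \llc{k}{\ell}.
\]
This follows by straightforward induction on the sequence of counter events: each successful \tlu{()} or update atomically increments both the relevant global counter and the caller's local counter by one, and termination of a transaction atomically subtracts the final local counter values from the globals before resetting the locals. A direct corollary is that in any solo execution on $k$ by a single thread $Th_x$ running $T_i$, one has $\luc{k}{i} = \guc{k}$ and $\llc{k}{i} = \glc{k}$, so the guards pass.

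Next I would look at the first pair of conflicting operations $op_i$ (by $Th_x$ on behalf of $T_i$) and $op_j$ (by $Th_y$ on behalf of $T_j$) that touch $k$, where at least one of $op_i, op_j$ is an update. Without loss of generality let $op_i$'s atomic guard fire first in real time; if it already fails, we are done. Otherwise $Th_x$ atomically increments either $\guc{k}$ (if $op_i$ is an update) or $\glc{k}$ (if $op_i$ is a lookup), and by the invariant above this makes the corresponding global counter strictly exceed $Th_y$'s local counter, and remain so until $T_i$ terminates. A three-way case analysis on $(op_i, op_j)\in\{\text{update},\text{lookup}\}^2$ (excluding lookup--lookup, which is not a conflict) then shows that $Th_y$'s guard for $op_j$, which demands $\luc{k}{j}=\guc{k}$ always and additionally $\llc{k}{j}=\glc{k}$ when $op_j$ is an update, must fail. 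Hence SMV reports the miner malicious and rejects the block, contradicting the assumption.

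The main obstacle is the degenerate case in which $T_i$ completes and decrements its counters before $Th_y$ ever reaches $op_j$. In that run the two operations do not temporally overlap, the schedule is effectively serial, and the missing edge is benign on this particular execution. I would handle this by reading the theorem as asserting rejection whenever SMV actually produces a concurrent interleaving of the conflicting operations—which the malicious \blg permits and which some SMV scheduling will realize—together with the observation that in a blockchain setting the block is broadcast to many validators, so any systematically under-constrained \blg is caught almost surely. The remainder of the argument is routine verification of the counter invariant and the case analysis sketched above.
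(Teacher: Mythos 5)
Your proposal is correct and takes essentially the same route as the paper: the paper's own proof of this theorem is little more than a pointer to the counter guards in the lookup/insert/delete cases of the validator's execution routine, asserting that those checks flag any concurrent access to a common key by dependent transactions, and your global-versus-local counter invariant plus the case analysis on the first conflicting pair is a detailed formalization of that same mechanism. Your closing caveat about non-overlapping schedules matches the reading the paper implicitly adopts (detection is claimed for executions in which the dependent transactions are actually interleaved concurrently), so it does not put you at odds with the paper's argument.
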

\vspace{-.15cm}
The same \scv technique can be applied to identify the \emph{faulty bin} error as explained in \secref{intro}. See \apnref{ap-mm} for detailed description along with the pseudo code of smart \mthr validator and \apnref{ap-correctness} for proof of \thmref{Hmm}.

\cmnt{

It added some then it can do some malicious activity with the BG while adding some unnecessary edges or deleting some important edges from the BG. Accordingly, malicious miner updates the final state of each shared data-items as well. So that a validator can not identify the malicious behavior of the proposed block and malicious miner can do some bad activity such as double-spending. For the sake of understanding, we assume three accounts (shared data-objects) $A_1, A_2$ and $A_3$ with initial balance \$100 in each account. Now, transaction $T_1$ wants to transfer \$50 from account $A_1$ to $A_2$ and transaction $T_2$ wants to transfer \$60 from account $A_1$ to $A_3$. Due to insufficient balance in $A_1$, one of the transaction should return abort. But being a malicious miner, it removes the dependency from $T_1$ to $T_2$ and allows both the transactions to run concurrently while doing the double-spending.  

Concurrent validator re-executes the \sctrn{s} concurrently and deterministically with the help of BG given by the concurrent miner. So, if a malicious miner embeds an incorrect BG. If the validators trust the BG that is provided from the miner, then they may reach different conclusions until additional signals are generated by other participants of the network which informs them that the BG was incorrect and therefore the block is invalid. This probably negates the asymmetry afforded to validators of having the miner pre-compute the BG. There needs to be some way to know the BG that they are being given is correct to avoid this problem.

So, being a concurrent execution of \sctrn{s}  developer, we  should have to take care about it by proposing an efficient, concurrent and  smart validator which can identify the behavior of malicious miner and reject the maliciously proposed block. To identify it, we propose a smart concurrent validator which uses the concept of counters and identifies the malicious behavior of miner and rejects the proposed block. The detailed description along with the  pseudo-code is available in \apnref{}. We have also performed some experiential analysis and observe that proposed concurrent validator never accepts the malicious block as shown in \figref{}.   

}


\section{Experimental Evaluation}
\label{sec:result}
The goal of this section is to demonstrates the performance gains by proposed \mthr miner and validator against state-of-the-art miners and validators. To evaluate our approach, we considered Ethereum smart contracts. In Ethereum blockchain, contracts are written in Solidity \cite{Solidity} language and are executed on the \emph{Ethereum Virtual Machine (EVM)} \cite{ethereum:url}. EVM does not support multi-threading \cite{ethereum:url,Dickerson+:ACSC:PODC:2017}. So, we converted the smart contracts of Ethereum as described in Solidity documentation \cite{Solidity} into C++ multi-threaded contracts similar to the approach of \cite{Anjana:OptSC:PDP:2019,Dickerson+:ACSC:PODC:2017}. Then we integrated them into object-based STM framework (\svotm and \mvotm) for concurrent execution of \sctrn{s} by the miner. 

We chose a diverse set of smart contracts described in Solidity documentation \cite{Solidity} as benchmarks to analyze the performance of our proposed approach as was done in \cite{Anjana:OptSC:PDP:2019,Dickerson+:ACSC:PODC:2017}. The selected benchmark contracts are (1) \emph{Coin}: a financial contract, (2) \emph{Ballot}: an electronic voting contract, (3) \emph{Simple Auction}: an auction contract, and (4) finally, a \emph{Mix} contract: combination of three contracts mentioned above in equal proportion in which block consists of multiple \sctrn{s} belonging to different smart contracts and seems more realistic. 



We compared the proposed \svotm and \mvotm miner with state-of-the-art \mthr: \bto \cite{Anjana:OptSC:PDP:2019}, \mvto \cite{Anjana:OptSC:PDP:2019}, Speculative Bin (or \spec) \cite{Vikram&Herlihy:EmpSdy-Con:Tokenomics:2019}, Static Bin (or \stat) \cite{Vikram&Herlihy:EmpSdy-Con:Tokenomics:2019}, and Serial miner.\footnote[2]{Code is available here: https://github.com/PDCRL/ObjSC} We could not compare our work with Dickerson et al. \cite{Dickerson+:ACSC:PODC:2017} as their source code is not available in public domain. We converted the code of \stat and \spec \cite{Vikram&Herlihy:EmpSdy-Con:Tokenomics:2019} from Java to C++ for comparing with our algorithms. 

Concurrent execution of \sctrn{s} by the validator does not use any STM protocol; however it uses the \blg provided by the \mthr miner, which does use STM. To identify malicious miners and prevent any malicious block from being added to the blockchain, we proposed Smart \Mthr Validator (SMV) for \svotm, \mvotm as \svotm SMV, \mvotm SMV. Additionally, we proposed SMV for state-of-the-art validators as  \bto SMV, \mvto SMV, SpecBin SMV, and StaticBin SMV and analysed the performance.

\vspace{1mm}
\noindent
\textbf{Experimental Setup:} The experimental system consists of two sockets, each comprised of 14 cores 2.60 GHz Intel (R) Xeon (R) CPU E5-2690, and each core supports 2 hardware threads. Thus the system supports a total of 56 hardware threads. The machine runs Ubuntu 16.04.2 LTS operating system and has 32GB RAM.


To analyze the performance, we evaluated the speedup achieved by each contract on two workloads. In the first workload (W1), the number of \sctrn{s} varied from 50 to 300 while the number of threads fixed is at 50. The maximum number of \sctrn{s} in a block of Ethereum is approximately 250 \cite{EthereumAvgBlockSize,Dickerson+:ACSC:PODC:2017}, 
but is growing over time. In the second workload (W2), the number of threads varied from 10 to 60, while the number of \sctrn{s} is fixed at 100. The average number of \sctrn{s} in a block of Ethereum is around 100 \cite{EthereumAvgBlockSize}. The hash-table size and shared data-items are fixed to 30 and 500 respectively for both workloads. For accuracy, results are averaged over 26 runs in which the first run is discarded and considered as a warm-up run. The results of serial execution is treated as the baseline for evaluating the speedup. This section describes the detailed analysis for the mix contract and analysis of Coin, Ballot and Simple auction benchmark contracts are in \apnref{ap-rresult}.


\begin{figure}[!t]
	\centering
	{\includegraphics[width=.73\textwidth]{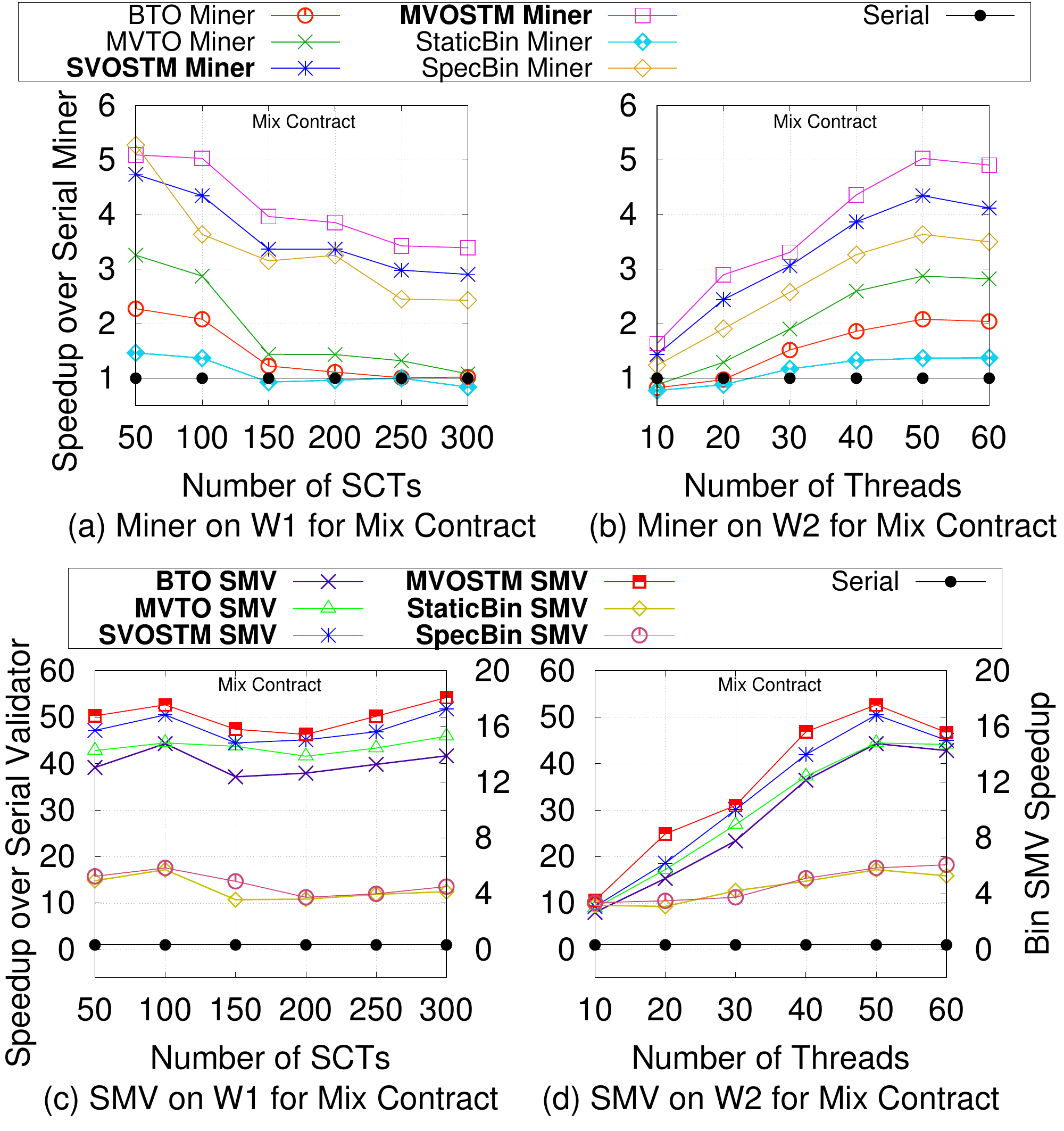}}\vspace{-.35cm}
	\caption{\Mthr and SMVs Speedup over Serial Miner and Validator for Mix Contract on W1 and W2}
	\label{fig:w12-mix}
\end{figure}

\noindent
\textbf{Experimental Results:} \figref{w12-mix} (a) and \figref{w12-mix} (b) show the speedup of \mvotm, \svotm, \mvto, \bto, \spec, and \stat miner over serial miner for mix contract on workloads W1 and W2, respectively.\footnote{In the figures, legend items in bold.} The average speedup achieved by \mvotm, \svotm, \mvto, \bto, \spec, and \stat miner over serial miner is 3.91$\times$, 3.41$\times$, 1.98$\times$, 1.5$\times$, 3.02$\times$, and 1.12$\times$, respectively.

As shown in \figref{w12-mix} (a), increasing the number of \sctrn{s} leads to high contention (because shared data-items are fixed to 500). So the speedup of \mthr miner reduces. \mvotm and \svotm miners outperform \spec miner because \mvotm and \svotm miners use optimistic object-based STMs to execute \sctrn{s} concurrently and construct the \blg whereas \spec uses locks to execute \sctrn{s} concurrently and constructs two bins using the pessimistic approach. \spec miner does not release the locks until the construction of the concurrent bin, which gives less concurrency. However, for the smaller numbers of \sctrn{s} in a block, \spec is slightly better than \mvotm and \svotm miners, which can be observed in the \figref{w12-mix} (a) at 50 \sctrn{s}. \mvotm and \svotm miners outperform \mvto and \bto miners because both of them are consider rwconflicts. It can also be observed that \mvotm miner outperforms all other STM miners as it has fewer conflicts, which gets reflected (see \figref{w1-w2-bg-mix}) as the least number of dependencies in the \blg as compared to other STM miners. For the multi-version (\mvotm and \mvto) miners, we did not limit the number of versions because the number of \sctrn{s} in a block is finite. The speedup by \stat miner is worse than serial miner for more than 100 \sctrn{s} because it takes time for \emph{static conflict prediction} before executing \sctrn{s}.  


\figref{w12-mix} (b) shows that speedup achieved by \mthr miner increases while increasing the number of threads, limited by the number of hardware threads available on the underlying experimental setup. Since, our system has 56 logical threads, the speedup decreases beyond 56 threads. \mvotm miner outperforms all other miners with similar reasoning, as explained for \figref{w12-mix} (a). Another observation is that when the number of threads is less, the serial miner dominates \bto and \mvto miner due to the overhead of the STM system.



The average number of dependencies in \blg by all the STM miners presented in \figref{w1-w2-bg-mix}. It shows that \blg constructed by the \mvotm has the least number of edges for all the contracts on both workloads. However, there is no \blg for bin-based approaches (both \spec and \stat). So, from the block size perspective, bin-based approaches are efficient. But the speedup of the validator obtained by the bin-based approaches is significantly lesser than STM validators. 

\figref{w12-mix} (c) and \figref{w12-mix} (d) show the speedup of Smart \Mthr Validators (SMVs) over serial validator on the workloads W1 and W2, respectively. The average speedup achieved by \mvotm, \svotm, \mvto, \bto, \spec, and \stat decentralized SMVs are 48.45$\times$, 46.35$\times$, 43.89$\times$, 41.44$\times$, 5.39$\times$, and 4.81$\times$ over serial validator, respectively. 


It can be observed that decentralized  \mvotm SMV is best among all other STM validators due to fewer dependencies in the \blg. Though the block size is less in bin-based approaches as compared to STM based approaches due to the absence of \blg, however, STM validators outperform bin-based validators because STM validators precisely determines the concurrent \sctrn{s} based on \blg. In contrast, bin-based validator gives less concurrency using a lock-based pessimistic approach. 

The speedup of SMV is significantly higher than \mthr miner because the miner has to execute the \sctrn{s} concurrently either using STMs (including the retries of aborted transactions) and constructs the \blg or prepare two bins (concurrent and sequential bin using locks in \spec and static analysis in \stat). On the other hand, the validator executes the \sctrn{s} concurrently and deterministically relying on \blg (without any retries) or bins provided by miner.

\begin{figure}[!t]%
	\centering
	{\includegraphics[width=.73\textwidth]{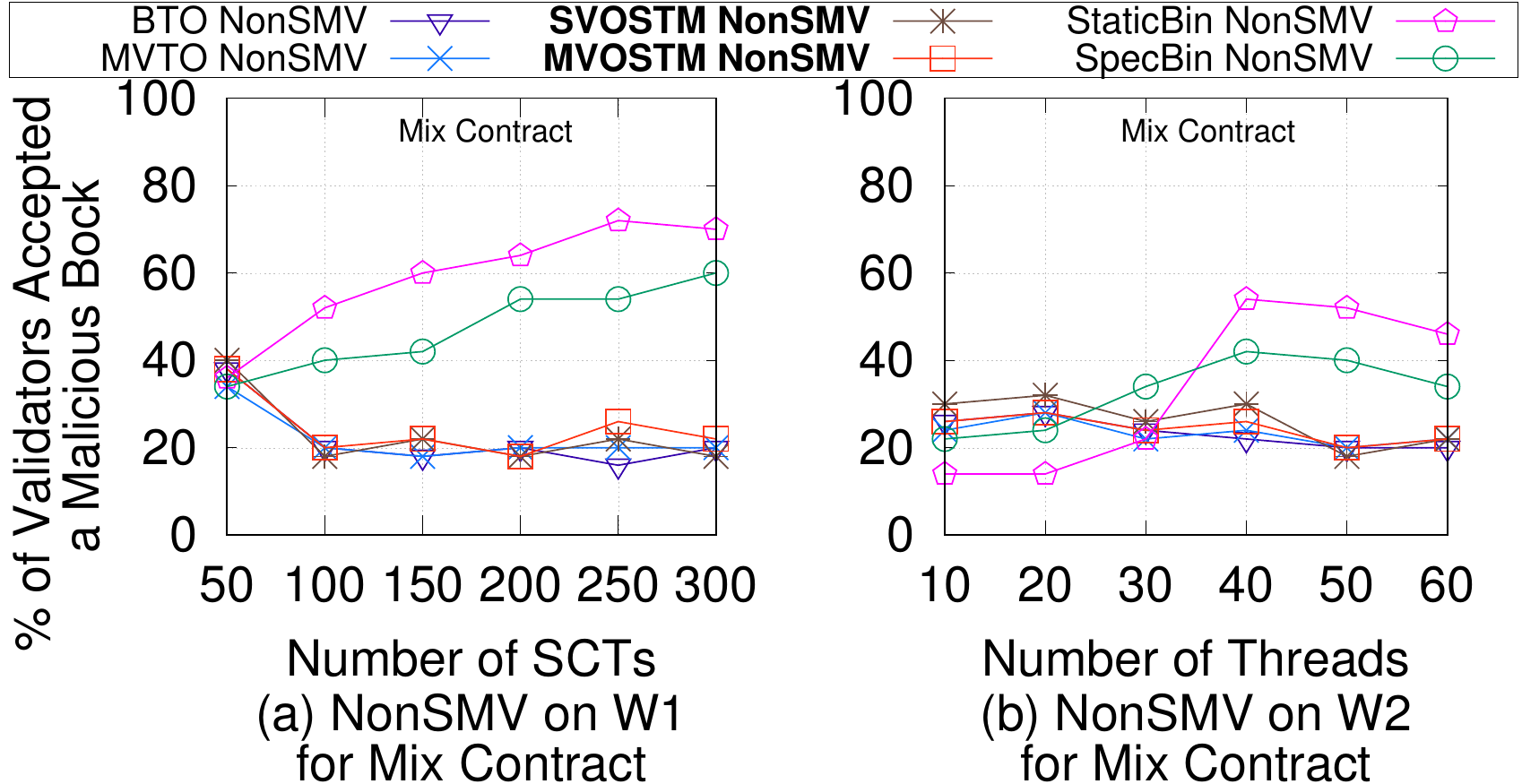}}\vspace{-.35cm}
	\caption{{\% of average \mthr validator (NonSMV) accepted a malicious block for Mix Contract on W1 and W2}}
	\label{fig:w1w2mixmminer}

	{\includegraphics[width=.73\textwidth]{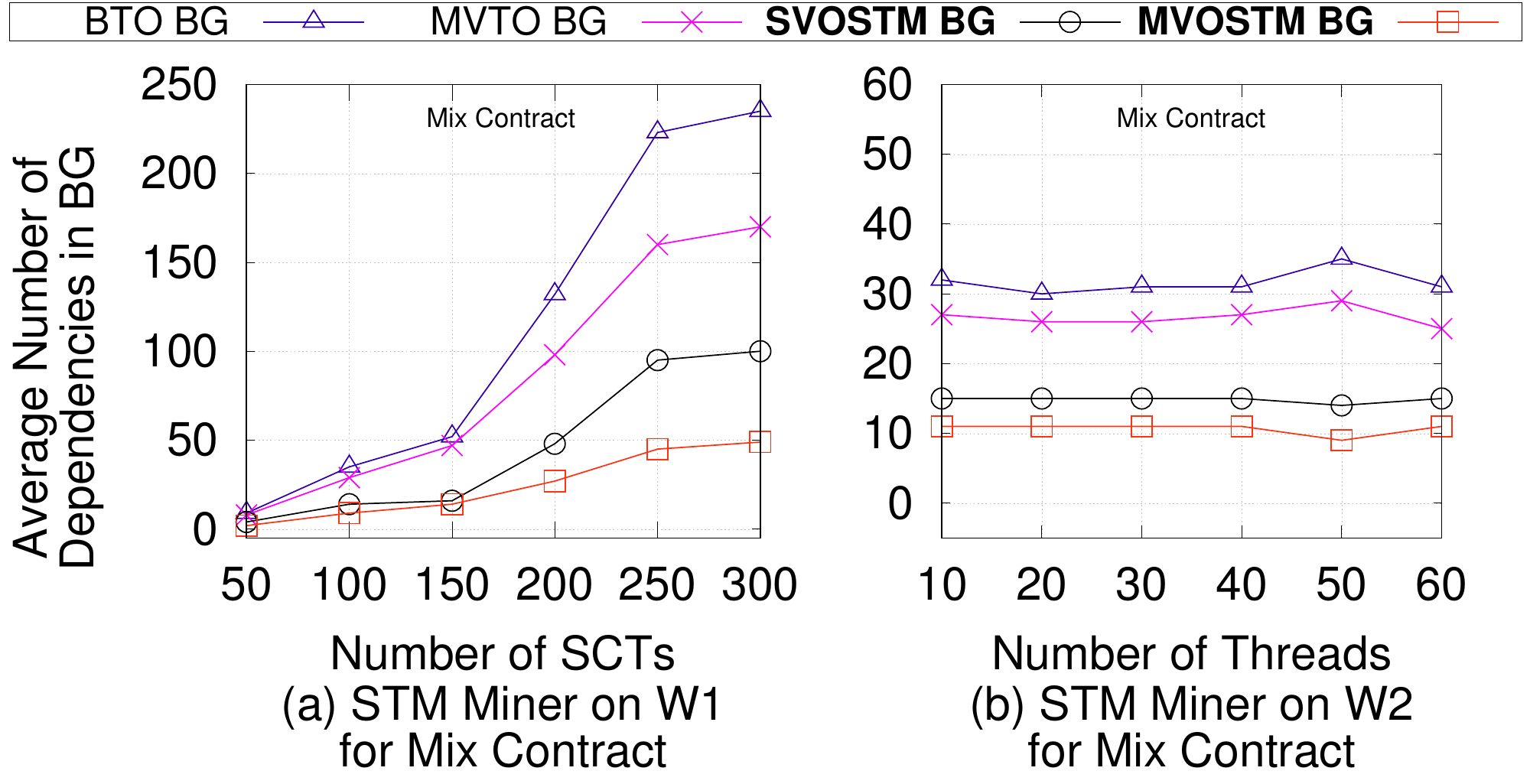}}\vspace{-.35cm}
	\caption{Average Number of Dependencies in BG for Mix Contract on W1 and W2}
	\label{fig:w1-w2-bg-mix}
\end{figure}

A malicious miner may cause either \emb or \fb errors in a block. \figref{w1w2mixmminer} illustrates the percentage of validators without \scv logic embedded, i.e., NonSMVs accepting a malicious block on workloads W1 and W2, respectively. Here, we considered 50 validators and ran the experiments for the mix contract. The \figref{w1w2mixmminer} shows that less than $50$\% of validators (except bin-based NonSMV) accept a malicious block. However, \spec and \stat NonSMVs show more than $50$\% acceptance of malicious blocks. Though, it is to be noted that the acceptance of even a single malicious block result in the blockchain going into inconsistent state.


To solve this problem, we developed a Smart \Mthr Validator (SMV), which identifies the malicious miner (described in \secref{malminer}). We prove that the SMV detects malicious block with the help of $counter$ and rejects it. In fact all the validators shown in \figref{w12-mix} (c) \& (d) are SMV based. Another advantage of SMV is that once it detects a malicious miner during the concurrent execution of \sctrn{s}, it can immediately reject the block and need not execute the remaining \sctrn{s} in the block thus saving time.

\apnref{ap-rresult} presents additional experiments that cover the average number of dependencies in the \blg and additional space required to store the \blg into the block. In addition to W1 and W2, we consider a third workload, W3 in which the number of shared data-items varied from 100 to 600 while the number of threads, \sctrn{s}, and hash-table size is fixed to 50, 100, and 30, respectively. We have shown that the performance of SMV validators for Mix contract on W3 and several other experiments for all the benchmarks. We compared the time taken by the SMV and NonSMV. We analyzed the speedup of fork-join validator for all the three workloads. We showed the actual time (microseconds) taken by all the miners and validators on W1 for the aforementioned four smart contract benchmarks in Tables \ref{tab:w1-coin-miner} - \ref{tab:w1-mix-val}.

\vspace{-.2cm}
\section{Conclusion and Future Directions}
\label{sec:con}

This paper presents a framework for the concurrent execution of smart contracts by miner and validator, which has achieved better performance using object semantics. In blockchains that follow order-execute model \cite{Androulaki+:Hyperledger:Eurosys:2018} such as Ethereum \cite{ethereum:url}, Bitcoin \cite{Nakamoto:Bitcoin:2009}, each Smart Contract Transaction (\sctrn) is executed in two different contexts: first by the \mthr miner to propose a block and later by the \mthr validator to verify the proposed block by the miner as part of the consensus. To avoid FBR errors, the miner on concurrent execution of \sctrn{s} capture the dependencies among them in the form of a \blg as in \cite{Anjana:OptSC:PDP:2019,Dickerson+:ACSC:PODC:2017}. The validator then re-executes the \sctrn{s} concurrently while respecting the dependencies recorded in the \blg to avoid \fbr errors.

The miner executes the \sctrn{s} concurrently using STMs that exploit the object semantics: Single-Version Object-based STM (\svotm) and Multi-Version Object-based STM (\mvotm). The dependencies among the \sctrn{s} collected during this execution are used by the miner threads to construct the \blg concurrently. Due to the use of object semantics, the number of edges in the \blg is smaller, which benefits both miners and validators by enabling them to execute \sctrn{s} quickly in a concurrent setting. 

Another interesting aspect that we considered in this paper is the issue of malicious miners. Suppose that in the \blg approach, a malicious miner proposes an incorrect \blg which does not have all the edges resulting in edge missing \blg (\emb) error. With the bin-based approach, the miner could place the conflicting transactions in the concurrent bin \cite{Vikram&Herlihy:EmpSdy-Con:Tokenomics:2019} resulting in faulty bin (\fb) error. To handle malicious miner, we have proposed a smart \mthr validator (\scv) which can identify these errors and reject the corresponding blocks. 

Proposed \svotm and \mvotm miner achieve on average speedup of 3.41$\times$ and 3.91$\times$ over serial miner respectively. Proposed \svotm and \mvotm decentralized validator outperform with an average speedup of 46.35$\times$ and 48.45$\times$ over serial validator, respectively on Ethereum smart contracts. 

\vspace{1mm}
\noindent
\textbf{Future Directions: } There are several directions for future work. A natural question is whether the size of \blg can become an overhead. Currently, the average number of \sctrn{s} in a block is $\approx$ 100 in Ethereum. So, storing \blg inside the block does not consume much space. The \blg constructed by \mvotm{s} has fewer dependencies as compared with state-of-the-art \sctrn{} execution as shown in \figref{w1-w2-bg-mix}. However, the number of \sctrn{s} in a block can increase over time and as a result the \blg size can grow, and storing it will consume more space. Hence, constructing storage optimal \blg is an interesting challenge. Or achieving the concurrent execution of \sctrn{s} correctly without incurring any extra storage overhead without compromising with the speedup will be another interesting direction. So, a related relevant question is what the optimal storage required for achieving the best possible speedup?

Another interesting research direction is optimizing power consumption. Nowadays, multi-core systems are ubiquitous while serial execution fails to harness the power of multiple cores. So, as discussed in the paper concurrent execution of \sctrn{s} by invoking multiple threads on a multi-core system ensures better performance than serial. But, multi-threading on the multi-core system consumes more power. Additional power is consumed by the multiple miner and validator threads to propose and validate the blocks concurrently. Hence, we would like to explore trade-off between harnessing the number of cores and power consumption. 


Finally, since \emph{Ethereum Virtual Machine (EVM)} \cite{ethereum:url} does not support multi-threading, it is not possible to test the proposed approach on Ethereum. So, another research direction is to design multi-threaded EVM. We plan to test our proposed approach on other blockchains such as Bitcoin \cite{Nakamoto:Bitcoin:2009}, EOS \cite{eos:url} which follow the order-execute model and support multi-threading. 


\cmnt{

\begin{figure*}
	\includegraphics[width=\textwidth, height=5.5cm]{figs/w2validator.pdf}
	\caption{(W2: Varying Threads) Concurrent Decentralised Validator Speedup Over Serial Validator}
	\label{fig:w2-validator}
\end{figure*}

\begin{figure*}
	\includegraphics[width=\textwidth, height=5.5cm]{figs/w1validatorFJ.pdf}
	\caption{(W1: Varying \sctrn{s}) Concurrent Fork-join Validator Speedup Over Serial Validator}
	\label{fig:w1-validatorFJ}
\end{figure*}

\begin{figure*}
	\includegraphics[width=\textwidth, height=5.5cm]{figs/w2validatorFJ.pdf}
	\caption{(W2: Varying Threads) Concurrent Fork-join Validator Speedup Over Serial Validator}
	\label{fig:w2-validatorFJ}
\end{figure*}
}

\cmnt{
To exploits the multi-core processors, we have proposed the concurrent execution of \scontract{} by miners and validators which improves the throughput. First, Initially, miner executes the smart contracts concurrently using optimistic STM protocol as BTO. To reduce the number of aborts and improves the efficiency further, concurrent miner uses STM \mvto  protocol which maintains multiple versions corresponding to each data-object. After that it forms a block graph concurrently, which ensures non-conflicting transactions can run parallelly. Finally, concurrent miner proposes a block which consists of set of transactions, block graph, hash of previous block and final state of each shared data-objects. Later, the validators re-execute the same \SContract{} transactions concurrently and deterministically with the help of block graph given by miner which capture the conflicting relations among the transactions to verify final state. If the validation is successful then proposed block appended into the blockchain and miner gets incentive otherwise discard the proposed block. Overall,\bto and \mvto miner performs
3.6 and 3.7 speedups over serial miner respectively. Along with,\bto and \mvto  validator outperform average 40.8x and 47.1x than serial validator respectively. }

\bibliographystyle{plain}
\bibliography{citations}

\clearpage
\appendix
\section*{Appendix}
\label{apn:appendix}


\noindent This section is organized as follows:
\begin{table}
\centering
\label{tab:ap-Org}
\resizebox{.9\textwidth}{!}{%
\begin{tabular}{|c |m{11cm}|}
\hline
\textbf{Section No.}  & \textbf{\hspace{5cm}Section Name}\\\hline
        \hspace{.5cm}\apnref{ostm-adv}\hspace{.5cm} & Advantage of \ostm{s} over RWSTMs\\
		\apnref{ap-model} & Remaining System Model\\
		\apnref{ap-pm} & Detailed Proposed Mechanism\\
		\apnref{ap-correctness} & Correctness of BG, \Mthr Miner and Validator\\
		\apnref{ap-rresult} & Detailed Experimental Evaluation\\
		\hline
	\end{tabular}
}	\captionsetup{justification=centering}
	\label{tbl:appen}
\end{table}

\section{Advantage of \ostm{s} over RWSTMs}
\label{apn:ostm-adv}


\cmnt{
\noindent
\subsection{Challenge in Concurrent execution of SCTs}
\label{apn:ap-challenge} 

There are challenges with concurrent execution. \figref{vfalse} illustrates the difficulty. Let us consider two accounts $A, B$ having \$10 as the \emph{Initial State (or IS)}. Suppose there are two \sctrn{s}, $T_1, T_2$ where $T_1$ is transferring \$10 from account $A$ to $B$ while $T_2$ is transferring \$20 from $B$ to $A$. Since both the \sctrn{s} are accessing common accounts (A and B) to transfer the amount, the order of SCTs execution becomes important. Suppose the miner executes them concurrently with the equivalent effect being $T_1$ followed by $T_2$ as shown in \figref{vfalse}(b). In this case, the \emph{Final State (or FS)} of $A$ will have \$20, while $B$ will have \$0. On the other hand, suppose validators execute in a different concurrent order which is equivalent to $T_2$ followed by $T_1$ as shown in \figref{vfalse}(c). $T_2$ executes first, but due to insufficient balance in B's account, a validator, say $v$ rejects this SCT. Then after executing $T_1$, $v$ transfers \$10 from A to B. With this execution, the final state of $A$ will have \$0 while $B$ will have \$20. Thus on receiving such a block, a validator will see that the final state in the block given by the miner is different from what it obtained and hence, falsely reject the block. We refer this problem as \emph{False Block Rejection} (or \emph{\fbr}) error. This can negate the benefits of concurrent executions.

\setlength{\intextsep}{0pt}
\begin{figure*}
	\centering
	\centerline{\scalebox{0.38}{\input{figs/validFalse.pdf_t}}}
	\caption{Challenges in Concurrent execution of SCTs}
	\label{fig:vfalse}
\end{figure*}

}


\setlength{\intextsep}{0pt}
\vspace{4mm}
\begin{figure*}
	\centerline{\scalebox{0.27}{\input{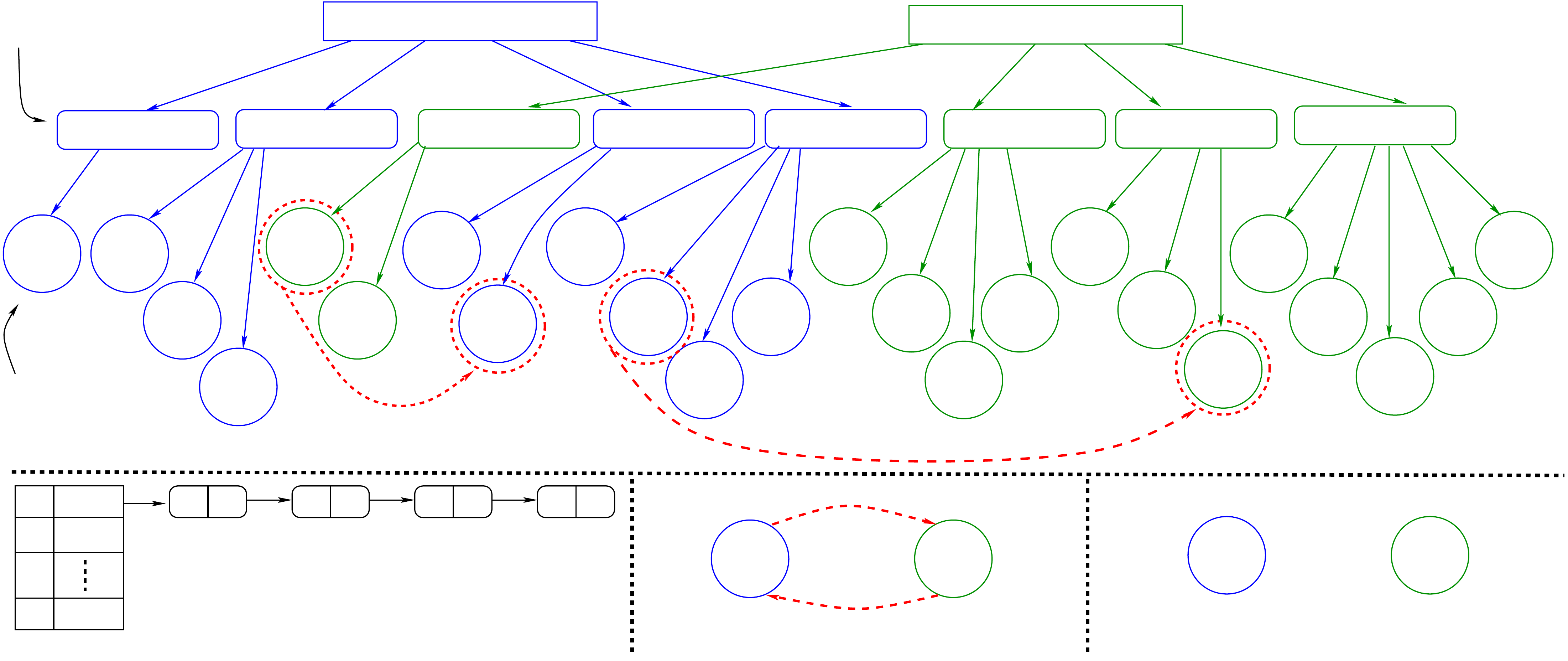_t}}}\vspace{-.25cm}
	 \caption{Advantage of \ostm{s} over \rwstm{s} on \sctrn{s}}
	\label{fig:ex1}
\end{figure*}

We now illustrate the advantage of \ostm{s} over RWSTMs. Consider an \ostm for hash-table which invokes the following methods: (1) \begt{()}, (2) \tlu{($k$)} (or $l(k)$), (3) \tins{($k,v$)} (or $i(k,v)$), (4) \tdel{($k$)} (or $d(k)$), and (5) \tryc{()} explained in \secref{model}

Consider \figref{ex1}, which demonstrates the advantage of \ostm{s} over RWSTMs while executing \sctrn{s} concurrently by multiple miner threads. \figref{ex1} (a) shows two transactions $T_1$ and $T_2$ in the form of a tree structure which is working on a hash-table with $B$ buckets. \figref{ex1} (b) illustrates a bucket of the hash-table with four accounts (shared data-items) $A_1, A_2, A_3$ and $A_4$ which are accessed by these transactions. Accounts are stored in the form of a list. Thus to access account $A_4$, a thread has to accesses $A_1, A_2, A_3$ before access it.


Suppose $T_1$ wants to send \$50 from account $A_1$ to $A_3$ and $T_2$ wants to send \$70 from account $A_2$ to $A_4$. Before performing these transfers, the respective \sctrn{s} verify that each account has sufficient balance. After checking, the \sctrn $T_1$ deletes \$50 from $A_1$ and adds it to $A_3$. At a lower-level, these operations involve reading and writing to both accounts $A_1$ and $A_3$. The execution is shown in \figref{ex1} (a) in form of a tree following the notation used by Weikum et al. \cite[Chap 6]{WeiVoss:TIS:2002:Morg}. Here, level 0 (or $L_0$) shows the operations as read and write while $L_1$ shows higher-level \op{s} insert, delete and lookup. 

\noindent
Consider the execution at $L_0$ of \figref{ex1} (a). The dotted red circles represent conflicting operations: $r_2(A_1)$ conflicts with $w_1(A_1)$ while $r_1(A_2)$ conflicts with $w_2(A_2)$. As a result, this execution cannot be serialized as we cannot find any equivalent serial schedule because of cyclic conflict among $T_1$ and $T_2$ as shown in \figref{ex1} (c). Hence for \sbty \cite{Papad:1979:JACM} (or \opty \cite{GuerKap:Opacity:PPoPP:2008}) either $T_1$ or $T_2$ has to abort. However, execution at level $L_1$ depicts that both transactions are working on different accounts and the higher-level methods (insert and lookup) are isolated. So, we can prune \cite[Chap 6]{WeiVoss:TIS:2002:Morg} this tree and isolate the transaction executions \cite{Peri+:OSTM:Netys:2018} at the higher-level with equivalent serial schedule $T_1T_2$ or $T_2T_1$ as shown in \figref{ex1} (d). Essentially not all the conflicts of lower-level or read-write level matter at higher-level. In a typical execution, \emph{object-conflicts (or \oconf{s}))} \cite{Peri+:OSTM:Netys:2018} are fewer than \emph{read-write conflicts (or \rwconf{s})}. Therefore, \ostm{s} provides greater concurrency while reducing the number of aborts than RWSTMs.

\section{Remaining System Model}
\label{apn:ap-model}

\noindent
This section describes the remaining execution model and the notions of STMs used in this paper. 


\cmnt{
\noindent
\textbf{Software Transactional Memory Systems (STMs):} STMs \cite{Shavit:1995:STM:224964.224987,KuzSat:NI:TCS:2016} are a convenient concurrent programming interface for a programmer to access the shared memory using multiple threads without worrying about synchronization issues such as deadlock, livelock, priority inversion, etc. Threads invoke the \mth{s} of STM systems to handle these issues. 

STM systems can be classified as \rwstm{s} and SVOSTMs (or OSTMs) \cite{Peri+:OSTM:Netys:2018}. A typical \rwstm exports the \mth{s}: \emph{\begt{()}}, \emph{\tread{($x$)}}, \emph{\twrite{($x, v$)}} and \emph{\tryc{()}}. As the name suggests, \rwstm{s} work at the level of read and writes. Transaction $T_i$ starts with \begt{()} and completes when any of its methods return abort (or $\abort$) or commit (or $\commit$). The \mth{s} \tread{()}, \twrite{()} are read and write onto \emph{t-objects} or \emph{\tobj{s}} ($x$ in this case) while \tryc{()} validates the \mth{s} of the transaction. The \tread{()} and \tryc{()} \mth{s} may return $\abort{}$. For a transaction $T_i$, we denote all the \tobj{s} accessed by its \tread{()} and \twrite{()} \mth{s} as \emph{read-set} or $\rs_i$ and \emph{write-set} or $\ws_i$ respectively. 

SVOSTMs export higher-level \mth{s}. In this paper, we are concerned only with hash table based SVOSTMs. So the \mth{s} exported are \begt{()}, \emph{\tlu{($ht, k$)}}, \emph{\tins{($ht, k, v$)}}, \emph{\tdel{($ht, k, v$)}}, and \tryc{()}. Here $ht$ represents a hash table object and $k$ as key in the object. We represent \emph{\tlu{()}}, and \emph{\tdel{()}} as \emph{return-value (or $rv{}$)} methods because both the methods return the value from hash table. We represent \emph{\tins{()}}, and \emph{\tdel{()}} as \emph{update} (or $upd{}$) \mth{s} as on successful \emph{STM\_tryC()} both the methods update the hash table. Methods \textit{rv{()}} and \textit{\tryc}$()$ may return $\abort$. Similar to read and write sets, for a transaction $T_i$, we denote all the \tobj{s} accessed by its $rv_i()$ and $upd_i()$ \mth{s} as $rvSet_i$ and $updSet_i$ respectively.

}

\noindent
\textbf{History:} It is a sequence of invocations and responses of different transactional methods. In other words, a \emph{history} \cite{KuzSat:NI:TCS:2016,Papad:1979:JACM} $H$ is a sequence of events represented as $\evts{H}$. $H$ internally invokes multiple transactions by multiple threads concurrently. Each transaction calls higher-level \mth{s}, and each method comprises of read/write events. Here, we consider \emph{sequential history} in which invocation on each transactional method follows the immediate matching response. It helps to make each transactional method as an atomic event. We denote the total order of the transactional method as $<_H$, so history is represented as $\langle \evts{H},<_H \rangle$. 

In this paper, we consider only \emph{well-formed} histories in which a new transaction will not begin until the invocation of previous transaction has not been committed or aborted. History $H$ comprises of the set of transactions as $\txns{H}$. The set of \emph{committed} and \emph{aborted} transactions in $H$ is denoted as $\comm{H}$ and $\aborted{H}$ respectively. So, the set of \emph{incomplete} or \emph{live} transactions in $H$ is represented as $\incomp{H} = \live{H} = (\txns{H}-\comm{H}-\aborted{H})$. 

\noindent
\textbf{Transaction Real-Time Order:} 
Consider two transactions $T_i,T_j \in \txns{H}$, if $T_i$ terminates, i.e. either committed or aborted before \emph{$STM\_begin_j()$} of $T_j$ then $T_i$ and $T_j$ respects real-time \cite{Papad:1979:JACM} order represented as $T_i\prec_H^{RT} T_j$.

\noindent
\textbf{MVSR, VSR, and CSR:} A history $H$ is in \emph{Multi-Version View Serializable (or MVSR)} \cite[Chap. 5]{WeiVoss:TIS:2002:Morg}, if there exists a serial history $S$ such that $S$ is multi-version view equivalent to $H$. It keeps multiple versions with respect to each key. A history $H$ is in \emph{View Serializable (or VSR)} \cite[Chap. 3]{WeiVoss:TIS:2002:Morg}, if there exists a serial history $S$ such that $S$ is view equivalent to $H$. It has shown that verifying the membership of MVSR and VSR in the database is NP-Complete \cite{Papad:1979:JACM}. So, researchers came across with an efficient equivalence notion which is \emph{Conflict Serializable (or CSR)} \cite[Chap. 3]{WeiVoss:TIS:2002:Morg}. It is a sub-class of VSR which uses conflict graph characterization to verify the membership in polynomial time. A history $H$ is in CSR if there exists a serial history $S$ such that $S$ is conflict equivalent to $H$.

\noindent
\textbf{Serializability and Opacity:} Serializability\cite{Papad:1979:JACM} is a popular correctness criteria in databases. But it considers only \emph{committed} transactions. This property is not suitable for STMs. Hence, Guerraoui and Kapalka propose a new correctness criteria opacity \cite{GuerKap:Opacity:PPoPP:2008}  for STMs which considers \emph{aborted} transactions along with \emph{committed} transactions as well. A history $H$ is opaque \cite{GuerKap:Opacity:PPoPP:2008,tm-book}, if there exist an equivalent serial history $S$ with (1) set of events in $S$ and complete history of $H$ are same (2) $S$ satisfies the properties of legal history and (3) The real-time order of $S$ and $H$ are preserved.   

\noindent
\textbf{Linearizability:} A linearizable \cite{HerlihyandWing:1990:LCC:ACM} history $H$ has the following properties: (1) In order to get a valid sequential history, the invocation and response events can be reordered.
(2) The obtained sequential history should satisfy the sequential specification of the objects. (3) The real-time order should respect in sequential reordering as in $H$.

\noindent
\textbf{Lock Freedom:} It is a non-blocking progress property in which if multiple threads are running for a sufficiently long time, then at least one of the threads will always make progress.  Lock-free \cite{HerlihyShavit:Progress:Opodis:2011} guarantees system-wide progress, but individual threads may starve.

\cmnt{
\noindent

\textbf{Smart Contract:} Clients send transactions to miners in the form of complex code known as smart contracts. It provides several complex services such as managing the system state, ensuring rules, or credentials checking of the parties involved, etc \cite{Dickerson+:ACSC:PODC:2017}. For more understanding, we have described \emph{Coin Smart Contract} from Solidity documentation \cite{Solidity}. It is a sub-currency contract which implements the simplistic form of a cryptocurrency and is used to transfer coins from one account to another account. \algoref{cc1} shows the functionality of the coin contract, where \textit{mint()}, \textit{send()}, and \textit{get\_balance()} are the functions of the contract. These functions can be called by the miners or through other contracts. It permanently initializes by the contract creator (or contract deployer) address to a special public state variable \textit{minter} \Lineref{c2}. Accounts are realized using Solidity mapping data structure essentially a $\langle \emph{key-value} \rangle$ pair at \Lineref{c3}, where a key is the unique Ethereum address and value is unsigned integer depicts the coins (or balance) in respective account. Initially, the contract deployer (aka \textit{minter}) creates new coins and allocate it to each receiver at \Lineref{c9}.  

\vspace{.2cm}
\setlength{\intextsep}{0pt}
\begin{algorithm}[H]
	\scriptsize
	\caption{Coin(): A sub-currency contract used to depict the simplest form of a cryptocurrency.}
		\label{alg:cc1}
	\begin{algorithmic}[1]
	\makeatletter\setcounter{ALG@line}{42}\makeatother
	\Procedure{$Coin()$}{} \label{lin:c1}
		\State address public minter;/*Minter is a unique public address*/\label{lin:c2}
		\State /*Map $\langle \emph{key-value} \rangle$ pair of hash table as $\langle \emph{address-balance} \rangle$*/
		\State mapping(address $=>$ uint) balances. \label{lin:c3} 
        
		\State \textbf{Constructor()} public\label{lin:c4}
		    \State{\hspace{.3cm} minter = msg.sender.} /*Set the sender as minter*/\label{lin:c5}
	
		\Function {}{}mint(address receiver, uint amount )\label{lin:c6}
    		\If{(msg.sender == minter)} \label{lin:c7}
    		\State /*Initially, add the balance into receiver account*/
    		\State balances[receiver] += amount. \label{lin:c9}    
    		\EndIf
    		
		\EndFunction

		\Function {}{}send(address receiver, uint amount)\label{lin:c10}
		\State /*Sender don't have sufficient balance*/
    		\If{(balances[msg.sender] $<$ amount)} \label{lin:c11}
    		    return $\langle fail \rangle$;\label{lin:c12}
    		\EndIf
    		\State balances[msg.sender] -= amount;\label{lin:c13}
    		\State balances[receiver] += amount;\label{lin:c14}
		\EndFunction

        \Function{}{}get\_balance(address account)\label{lin:c15}
		    \State return $\langle$balance$\rangle$;
		 \EndFunction
	\EndProcedure

	\end{algorithmic}
\end{algorithm}
Further, in \emph{send()} function, to transfer the coin from sender account to receiver account, function ensures that the sender has sufficient balance in his account at \Lineref{c11}. If sufficient balance found in senders account, the coin transferred from sender account to receiver account. By calling \textit{get\_balance()}, anyone can query the specific account balance at \Lineref{c15}.

\noindent
History $H$ internally invokes multiple transactions by multiple threads concurrently. Each transaction calls object level methods and each method comprises of read/write events. To ensure the correct concurrent execution of $H$, it should ensure desired correctness criteria as \emph{co-opacity} and \emph{opacity}. We define some terminology to understand correctness criteria as follows:

\noindent
\textbf{Conflict Order and Transaction Real-Time Order:} Conflict order depends on the methods accessed by the transactions. So, the conflicts are defined as follows for two transactions $T_i$ and $T_j$ accessing same key $k$: (1) \textit{$rv_i() <_H STM\_tryC_j()$} (2) \textit{$STM\_tryC_i() <_H rv()$} (3) \textit{$STM\_tryC_i() <_H STM\_tryC_j()$} then the conflict order respects from $T_i$ to $T_j$. If $T_i$ terminates, i.e. either committed or aborted before \emph{$STM\_begin_j()$} then $T_i$ and $T_j$ respects real-time order between the transactions. 

\noindent
\textbf{Valid and Legal History:} If the \emph{rv()} method of a transaction $T_i$  \emph{returns} the value from any of previously committed transaction then such \emph{rv()} method is known as valid. Whereas, if the \emph{rv()} method of a transaction $T_i$  \emph{returns} the value from  previous closest committed transaction then such \emph{rv()} method is known as legal.  If all the \emph{rv()} methods of history $H$ is valid then $H$ becomes valid history. If all the \emph{rv()} methods of history $H$ is legal then $H$ becomes legal history.  A legal history will also be a valid history.

\noindent
\textbf{Notion of Equivalence:} If two histories $H$ and $H'$ have same set of events then $H$ and $H'$ are equivalent to each other. There exist three types of equivalence  with respect to two histories $H$ and $H'$. (1) \emph{Multi-version view equivalent} \cite[Chap. 5]{WeiVoss:TIS:2002:Morg} or \emph{\mvve}: Along with equivalence, if $H$ and $H'$ are valid then it satisfies MVVE. (2) \emph{View equivalent} \cite[Chap. 3]{WeiVoss:TIS:2002:Morg} or \emph{\vie}: Two histories  $H$ and $H'$ are legal and equivalent then it satisfies VE. (3) \emph{Conflict equivalent} \cite[Chap. 3]{WeiVoss:TIS:2002:Morg} or \emph{\ce}: Two histories  $H$ and $H'$ are legal and have same conflicts order then it satisfies CE. VE and CE use only single version corresponding to each key that makes it implicitly legal. 

\noindent
\textbf{MVSR, VSR, and CSR:} A history $H$ is in Multi-Version View Serializable (or MVSR) \cite[Chap. 5]{WeiVoss:TIS:2002:Morg}, if there exist an equivalent multi-version view serializable history $S$. It keeps multiple versions with respect to each keys. A history $H$ is in View Serializable (or VSR) \cite[Chap. 3]{WeiVoss:TIS:2002:Morg}, if there exist an equivalent view serializable history $S$. It has shown that verifying the membership of MVSR and VSR in database is NP-Complete \cite{Papad:1979:JACM}. So, researchers came across with a efficient equivalence notion which is Conflict Serializable (or CSR) \cite[Chap. 3]{WeiVoss:TIS:2002:Morg}. It is a sub-class of VSR which uses conflict graph characterization to verify the membership in polynomial time. A history $H$ is in CSR, if there exist an equivalent conflict serializable history $S$.

\noindent
\textbf{Serializability and Opacity:} Serializability\cite{Papad:1979:JACM} is a popular correctness criteria in databases. But it considers only \emph{committed} transactions. This property is not suitable for STMs. Hence, Guerraoui and Kapalka propose a new correctness criteria opacity \cite{GuerKap:Opacity:PPoPP:2008}  for STMs which considers \emph{aborted} transactions along with \emph{committed} transactions as well. A history $H$ is opaque \cite{GuerKap:Opacity:PPoPP:2008,tm-book}, if there exist an equivalent serial history $S$ with (1) set of events in $S$ and complete history of $H$ are same (2) $S$ satisfies the properties of legal history and (3) The real-time order of $S$ and $H$ are preserved.   

\noindent
\textbf{Linearizability:} A linearizable \cite{HerlihyandWing:1990:LCC:ACM} history $H$ has following properties: (1) In order to get a valid sequential history the invocation and response events can be reordered.
(2) The obtained sequential history should satisfies the sequential specification of the objects. (3) The real-time order should respect in sequential reordering as in $H$.

\noindent
\textbf{Lock Freedom:} It is a non-blocking progress property in which if multiple threads are running for sufficiently long time then at least one of the thread will always make progress.  Lock-free\cite{HerlihyShavit:Progress:Opodis:2011} guarantees system-wide progress but individual threads may starve.

}

\section{Detailed Proposed Mechanism}
\label{apn:ap-pm}
This section describes the data structure and methods of concurrent BG in \apnref{ap-dsBG}. Then we describe the data structure of \svotm and \mvotm in \apnref{ap-sobjds}. Later, we describes the execution of \sctrn{s} by \Mthr Validator rely on the BG provided by the miner in \apnref{ap-cvalidator} and detection of malicious miner by \emph{Smart \Mthr Validator} in \apnref{ap-mm}.

\begin{figure}[t!]
	\centerline{\scalebox{0.35}{\input{figs/graph.pdf_t}}}\vspace{-.25cm}
	\caption{Construction of Block Graph}
	\label{fig:graph}
\end{figure}

\noindent
\subsection{Data Structure of the Block Graph} 
\label{apn:ap-dsBG}
We use \emph{adjacency list} to maintain the Block Graph $BG(V, E)$ inspired from \cite{Anjana:OptSC:PDP:2019,Chatterjee+:NbGraph:ICDCN:2019}. Here $V$ is the set of vertices (\vrtnode{s}) is stored as a vertex list ($\vrtlist$). Similarly E is the set of Edges (\egnode{s}) is stored as edge list ($\eglist$ or conflict list) as shown in the \figref{graph} (a), both $\vrtlist$ and $\eglist$ store between the two sentinel nodes \emph{Head}($-\infty$) and \emph{Tail}($+\infty$). Each \vrtnode{} maintains a tuple: \emph{$\langle$ts, scFun, indegree, egNext, vrtNext$\rangle$}. Here, \emph{ts} is the unique timestamp $i$ of the transaction $T_i$ to which this node corresponds to. scFun is the smart contract function executed by the transaction $T_i$ which is stored in \vrtnode. The number of incoming edges to the transaction $T_i$, i.e. the number of transactions on which $T_i$ depends, is captured by \emph{indegree}. Field \emph{egNext} and \emph{vrtNext} points the next \egnode{} and \vrtnode{} in the $\eglist$ and $\vrtlist$ respectively.      

Each \egnode{} of $T_i$ similarly maintains a tuple: \emph{$\langle$ts, vrtRef, egNext$\rangle$}. Here, \emph{ts} stores the unique timestamp $j$ of $T_j$ which has an edge coming from $T_i$ in the graph. BG maintains the conflict edge from lower timestamp transaction to higher timestamp transaction. This ensures that the \bg is acyclic. The \egnode{s} in $\eglist$ are stored in increasing order of the \emph{ts}. Field \emph{vrtRef} is a \emph{vertex reference pointer} which points to its own \vrtnode{} present in the $\vrtlist$. This reference pointer helps to maintain the \emph{indegree} count of \vrtnode{} efficiently.

\figref{graph} (b) demonstrates the high level overview of BG which consist of three transaction $T_1$, $T_2$ and $T_3$. Here, $T_1$, $T_2$ are in conflict while $T_3$ is independent. The underlying representation of it illustrated in \figref{graph} (a). For each transactions ($T_1, T_2$ and $T_3$) there exists a \vrtnode{} in the $\vrtlist$ of BG along with their conflicts. Since there is en edge from $T_1$ to $T_2$, an \egnode{} corresponding to $T_2$ is in the $\eglist$ of $T_1$. As mentioned earlier, the conflict edges go from lower timestamp to higher timestamp to ensure acyclicity of the \bg. After adding the \egnode, the \emph{indegree} of the \vrtnode of $T_2$ in the $\vrtlist$ is incremented as shown in \figref{graph} (a). 

\noindent
\textbf{Block Graph Library Methods Accessed by \Mthr Miner:} \Mthr miner uses multiple threads to build the \bg. Specifically, the \mthr miner uses two methods to build the \bg: \emph{addVertex()} and \emph{addEdge()}. These two methods are \emph{lock-free} \cite{HerlihyShavit:Progress:Opodis:2011}. Here, \emph{addVertex(i)}, as the names suggests adds a \vrtnode with $ts=i$ for respective $scFun$ to the $vrtList$ of the BG if such a vertex is not already present. This node is atomically added to $vrtList$ using CAS operations.


The $addEdge(u, v)$ \mth creates an \egnode for $v$ in $u$'s \vrtnode if it does not already exist. First, it identifies the \egnode{} in the $\eglist$ of \vrtnode{}. If \egnode{} does not exist then it creates the node and adds into the $\eglist$ of \vrtnode{} atomically using CAS. The edges from $u$ to $v$ captures the conflicts between these transactions. This implies that $v$ is dependent on $u$ and the scFun of $v$ has to be executed only after $u$'s execution. 


\noindent
\textbf{Block Graph Library Methods Accessed by \Mthr Validator:} \Mthr validator uses multiple threads to re-executes the \sctrn{s} concurrently and deterministically with the help of BG given by the \mthr miner. To execute the \sctrn{s}, validator threads use three methods of block graph library: \searchg{()}, \remex{()} and \searchl{()}. First a validator thread $Th_i$ invokes the \searchg() \mth which searches for a \vrtnode{} $n$ in the \bg having $indegree$ 0 (i.e., source node). Such a node corresponds to a \sctrn, which does not depend on other transactions and hence can be executed independently without worrying about synchronization issues. On identifying $n$, $Th_i$ atomically tries to claim it if not already claimed by some other thread. It does this by performing a CAS \op on the $indegree$ to -1. After successful execution of scFun of $n$, $Th_i$ invokes \remex \mth which decrements the \emph{inedgree} count for all the nodes which are have an incoming edge from $n$. This list of nodes is maintained in the \emph{\eglist} of $n$. 

While decrementing the \emph{indegree} count of conflicting nodes if the validator thread $Th_i$ finds any other \vrtnode{} with the \emph{indegree} as 0 then it adds that a reference to that node in its thread-local log $thLog_i$. The $thLog_i$ is used for optimization so that $Th_i$ needs not to search in the global \blg to find the next source node. If a reference to the source node exists in the local log of validator, it is identified by the \searchl{()} \mth. $Th_i$ on identifying such a node $n$, atomically claims $n$ (if not already claimed by another thread). Then it executes the scFun of $n$ and then \remex as explained above. A detailed description of BG methods, along with pseudocode is as follows:


$BG$(\emph{vrtNode}, STM): Miner builds a BG based on \oconf{} given by the STM for all \sctrn{s}. BG takes the \oconf{} from the STM at \Lineref{cg3} for \emph{vrtNode} of \sctrn{}. If $T_i$ have a conflict with $T_j$ then its adds both \sctrn{} \emph{vrtNode} in the BG at \Lineref{cg6} and \Lineref{cg7} using \emph{addVertex()}. To maintain the dependency among the \sctrn{} $T_i$ and $T_j$, the conflict edge goes from lower timestamp transaction ($T_i$) to higher timestamp transaction ($T_i$) to avoid the deadlock. It adds an edge using \emph{addEdge()} at \Lineref{cg9} or \Lineref{cg11}.
\vspace{1mm}
\begin{algorithm}[!htb]
    \scriptsize
    \label{alg:cg} 
    \caption{$BG$(\emph{vrtNode}, STM)}
    \begin{algorithmic}[1]
        \makeatletter\setcounter{ALG@line}{26}\makeatother
        \Procedure{$BG$(\emph{vrtNode}, STM)}{} \label{lin:cg1}
        \State /*STM provides \cl{} of committed transaction $T_i$*/\label{lin:cg2}
        \State conflist = STM\_\gconfl(\emph{vrtNode}.$ts_i$);\label{lin:cg3}
        \State /*$T_i$ $T_j$ are in conflict and $T_j$ exists in conflict list of $T_i$*/\label{lin:cg4}
        \ForAll{($ts_j$ $\in$ conflist)}\label{lin:cg5}
        \State \addv(\emph{$ts_j$}); \label{lin:cg6}
        \State \addv(\emph{vrtNode.$ts_i$});\label{lin:cg7}
        \If{($ts_j$  $>$ \emph{vrtNode}.$ts_i$)}\label{lin:cg8}
        \State \adde(\emph{vrtNode}.$ts_i$, $ts_j$);\label{lin:cg9}
        \Else\label{lin:cg10}
        \State \adde($ts_j$, \emph{vrtNode}.$ts_i$);\label{lin:cg11}
        \EndIf  \label{lin:cg12}
        \EndFor\label{lin:cg13}
        \EndProcedure\label{lin:cg14}
    \end{algorithmic}
\end{algorithm}

\noindent
\addv{($ts_i$)}: This BG method is called by the \mthr miner. First, it identifies the correct location of \vgn{} for transaction $T_i$ in the BG at \Lineref{addv2}. If \vgn{} is not exist in BG then it creates a \vgn node of $T_i$ at \Lineref{addv4}. Finally, It adds the \vgn{} of transaction $T_i$ in the \vl{[]} of BG atomically at \Lineref{addv5} with the help of \emph{compare and swap} operation. If CAS fails then \addv{()} again identifies the location of \vgn{} node in the \vl{[]} with the help of current vertex predecessor node (\emph{vrtpred}) at \Lineref{addv10}. Eventually, \emph{vrtNode} will be the part of BG. This method of the BG is \emph{lock-free}.

\begin{algorithm}[!htb]
    \scriptsize
    \label{alg:addv}     
    \caption{\addv{($ts_i$)}}   
    \begin{algorithmic}[1]
        \makeatletter\setcounter{ALG@line}{40}\makeatother
        \Procedure{\addv{($ts_i$)}}{} \label{lin:addv1}
        \State Search $\langle$\vp, \vc{}$\rangle$ of \vgn{} of $ts_i$ in \vl{[]} of $BG$;\label{lin:addv2}
        \If{(\vc.$ts_i$ $\neq$ \vgn.$ts_i$)}\label{lin:addv3}
        \State Create new BG Node (or \vgn) of $ts_i$ in \vl{[]};\label{lin:addv4}
        \If{(\vp.\vn.CAS(\vc, \vgn))}\label{lin:addv5}
        
        \State /*\vgn{} successfully added in \vl{[]}*/ \label{lin:addv6}
        \State return$\langle$\emph{Vertex added}$\rangle$;  \label{lin:addv7}
        \EndIf\label{lin:addv8}
        \State /*Start with current \vp{} to search the new $\langle$\vp, \vc{}$\rangle$*/\label{lin:addv9}
        \State goto \Lineref{addv2};  \label{lin:addv10}
        \Else\label{lin:addv11}
        \State /*\vgn{} is already exist in \vl{[]}*/ \label{lin:addv12}
        \State return$\langle$\emph{Vertex already exist}$\rangle$; \label{lin:addv13}
        \EndIf\label{lin:addv14}
        \EndProcedure\label{lin:addv15}
    \end{algorithmic}
\end{algorithm}

\noindent
\adde{\emph{($conflictNode_1$, $conflictNode_2$)}}: This BG method is called by the concurrent miner. First, It identifies the location of $conflictNode_2$ in the \el{[]} of $conflictNode_1$ at \Lineref{adde2}. If \egn{} of $conflictNode_2$ is not part of BG then it creates a \egn{} at \Lineref{adde4}. Atomically, it adds an \egn{} in the \el{[]} of $conflictNode_1$ with the help of CAS at \Lineref{adde5}. After successful addition of \egn{} it increments the \emph{indegree} atomically with the help of \egn.\emph{vrtRef} pointer to maintain the dependency of it at \Lineref{adde6}. If CAS fails than \adde{()} again identifies the location of \egn{} node in the \el{[]} with the help of current edge predecessor node (\emph{egPred}) at \Lineref{adde11}. Eventually, \emph{egNode} will be the part of BG. This method of the BG is \emph{lock-free}.

\begin{algorithm}
    \scriptsize
    \label{alg:adde}     
    \caption{\adde{\emph{($conflictNode_1$, $conflictNode_2$)}}}
    \begin{algorithmic}[1]
        \makeatletter\setcounter{ALG@line}{55}\makeatother
        \Procedure{\adde{\emph{($conflictNode_1$, $conflictNode_2$)}}}{}\label{lin:adde1}
        \State Search $\langle$\ep, \ec{}$\rangle$ of \emph{$conflictNode_2$} in \el{[]} of the \emph{$conflictNode_1$} vertex in $BG$;\label{lin:adde2}
        \If{(\ec.$ts_i$ $\neq$ $conflictNode_2$.$ts_i$)}\label{lin:adde3}
        \State Create new BG Node (or \egn) in \el{[]};\label{lin:adde4} 
        \If{(\ep.\en.CAS(\ec, \egn))}\label{lin:adde5}
        \State Increment the \inc{} atomically of \egn.\emph{vrtRef} in \vl{[]};\label{lin:adde6}
        \State /*$conflictNode_2$ is successfully inserted*/\label{lin:adde7}
        \State return$\langle$\emph{Edge added}$\rangle$;  \label{lin:adde8}
        \EndIf\label{lin:adde9}
        \State /*Start with current \ep{} to search the new $\langle$\ep, \ec{}$\rangle$*/\label{lin:adde10}
        \State goto \Lineref{adde4}; \label{lin:adde11}
        \Else\label{lin:adde12}
        \State /*$conflictNode_2$ is already exist in \el{[]}*/\label{lin:adde13}
        \State return$\langle$\emph{Edge already present}$\rangle$; \label{lin:adde14}
        \EndIf\label{lin:adde15}
        \EndProcedure\label{lin:adde16}
    \end{algorithmic}
\end{algorithm}

\noindent
\searchl{($thlog_i$)}: This BG method is called by the \mthr validator. Validator thread identifies the source node in threads local log $thLog_i$ at \Lineref{sl2}. If it finds any source node in $thLog_i$ then it claims that node and atomically sets its \emph{indegree} field to -1 so that no other \mthr validator threads claim this node at \Lineref{sl3}. After claiming of source node it executes smart contract function (associated with the identified source node) using \emph{executeScFun()} at \Lineref{sl6}.

\begin{algorithm}
    \scriptsize
    \caption{\searchl{($thlog_i$)}}
    \begin{algorithmic}[1]
        \makeatletter\setcounter{ALG@line}{71}\makeatother
        \Procedure{\searchl{($thlog_i$)}}{} \label{lin:sl1}
        \State Identify local log vertex($llVertex$) with indegree 0 in $thLog_i$.\label{lin:sl2}
        \If{($llVertex$.\inc.CAS(0, -1))} \label{lin:sl3} 
        \State \nc{} $\gets$ \nc{}.$get\&Inc()$; \label{lin:sl4}
        \State /*Concurrently execute \sctrn{} corresponds to $llVertex$*/.\label{lin:sl5}
        \State \exec{($llVertex$.scFun)}.\label{lin:sl6}
        \State return$\langle$ $llVertex$ $\rangle$;\label{lin:sl7}
        \Else\label{lin:sl8}
        \State return$\langle nil \rangle$;\label{lin:sl9}
        \EndIf\label{lin:sl10}
        \EndProcedure \label{lin:sl11}
    \end{algorithmic}
\end{algorithm}

\noindent
\searchg{(BG)}: The \mthr validator calls this BG method. Validator thread identifies the source node (with indegree 0) in BG at \Lineref{sg4}. If it finds any source node in BG, then it claims that node and atomically sets its \emph{indegree} field to -1 so that no other \mthr validator threads claim this node at \Lineref{sg5}. After claiming of source node it executes smart contract function (associated with the identified source node) using \emph{executeScFun()} at \Lineref{sg8}.

\begin{algorithm}
    \scriptsize
    \caption{\searchg{(BG)}}
    \begin{algorithmic}[1]
        \makeatletter\setcounter{ALG@line}{82}\makeatother
        \Procedure{\searchg{(BG)}}{} \label{lin:sg1}
        \State \vnode{} $\gets$ BG.\vh;    /*Start from the Head of the list*/\label{lin:sg2}
        \State /*Identify the \emph{vrtNode} with indegree 0 in BG*/\label{lin:sg3}
        \While{(\vnode.\vn{} $\neq$ BG.\vt)} \label{lin:sg4}
        \If{(\vnode.\inc.CAS(0, -1))}\label{lin:sg5}
        \State \nc{} $\gets$ \nc{}.$get\&Inc()$; \label{lin:sg6}
        \State /*Concurrently execute \sctrn{} corresponds to \emph{Node}*/.\label{lin:sg7}
        \State \exec{(\emph{vrtNode}.scFun)}.\label{lin:sg8}
        \State return$\langle \vnode \rangle$;\label{lin:sg9}
        \EndIf\label{lin:sg10}
        \State \vnode $\gets$ \vnode.\vn;    \label{lin:sg11}
        \EndWhile\label{lin:sg12}
        \State return$\langle nil \rangle$;\label{lin:sg13}
        \EndProcedure\label{lin:sg14}
    \end{algorithmic}
\end{algorithm}

\noindent
\emph{remExecNode(removeNode)}: This BG method is called by the \mthr validator. It atomically decrements the \emph{indegree} for each conflicting node of source node with the help of vertex reference pointer (\emph{vrtRef}) at \Lineref{ren3}. \emph{vrtRef} pointer helps to decrement the \emph{indegree} count of conflicting node efficiently because thread need not to travels from head of the \vl{[]} to identify the \vgn{} node for decrementing the \emph{indegree} of it. With the help of \emph{vrtRef}, it directly decrements the \emph{indegree} of \vgn. While decrementing the \emph{indegree} of \vgn{} if it identifies the new source node than it keeps that node information in thread local log $thLog_i$ at \Lineref{ren5}.

\begin{algorithm}
    \scriptsize
    \caption{remExecNode(removeNode)}
    \begin{algorithmic}[1]
        \makeatletter\setcounter{ALG@line}{96}\makeatother
        \Procedure{\emph{remExecNode(removeNode)}}{} \label{lin:ren1}
        \While{(removeNode.\en $\neq$ removeNode.\et)}    \label{lin:ren2}
        \State Atomically decrement the \emph{indegree} of conflicting node using removeNode.\emph{vrtRef} pointer.\label{lin:ren3} 
        \If{(removeNode.\emph{vrtRef}.\inc{} == 0)}\label{lin:ren4}
        \State Add removeNode.\emph{vrtRef} node into $thLog_i$.\label{lin:ren5}
        \EndIf\label{lin:ren6}
        \State removeNode $\gets$ removeNode.\en.\emph{vrtRef};    \label{lin:ren7}    
        \EndWhile \label{lin:ren8}
        \State return$\langle nil \rangle$; \label{lin:ren9}
        \EndProcedure    \label{lin:ren10}    
    \end{algorithmic}
\end{algorithm}

\noindent
\exec{(scFun)}: It executes the \sctrn{s} concurrently without the help of concurrency control protocol. First, it identifies the smart contract function (scFun) steps and executes them one after another at \Lineref{ex3}. If the current step (curStep) is lookup on key $k$, then it lookup the shared data item for key $k$ from the shared memory at \Lineref{ex5}. If curStep is insert on key $k$ with value as $v$, then it inserts the shared data item of key $k$ with value $v$ in the shared memory at \Lineref{ex7}. If curStep is delete on key $k$, then it deletes the shared data item of key $k$ from the shared memory at \Lineref{ex9}. All these curStep of scFun can run concurrently with the other validator threads because only non-conflicting transactions will execute concurrently with the help of BG given by the \mthr miner.
\vspace{2mm}
\begin{algorithm} 
    \scriptsize
    \label{alg:exec} 
    \caption{\exec{(scFun)}}
    \begin{algorithmic}[1]
        \makeatletter\setcounter{ALG@line}{106}\makeatother
        \Procedure{\exec{($scFun$)}}{} \label{lin:ex1}
        \While{(scFun.steps.hasNext())} /*scFun is a list of steps*/ \label{lin:ex2}
        \State curStep = scFun.steps.next(). /*Get next step to execute*/\label{lin:ex3}
        \Switch{(curStep)}
        \EndSwitch
        \Case{lookup($k$):}\label{lin:ex4}
        \State Lookup $k$ from a shared memory.\label{lin:ex5}
        \EndCase
        \Case{insert($k, v$):} \label{lin:ex6}
        \State Insert $k$ in shared memory with value $v$.\label{lin:ex7}
        \EndCase
        \Case{delete($k$):}\label{lin:ex8}
        \State Delete $k$ from shared memory.\label{lin:ex9}
        \EndCase
        \Case{default:}
        curStep is not lookup, insert and delete;\label{lin:ex10}
        \EndCase \label{lin:ex11}
        \EndWhile            \label{lin:ex12}
        \State return $\langle void \rangle$    \label{lin:ex13}
        \EndProcedure            \label{lin:ex14}
    \end{algorithmic}
\end{algorithm}


\cmnt{
\subsection{Concurrent Miner}
\label{apn:ap-cminer}
This section describes the efficient and concurrent execution of \sctrn{s} by miner using object semantics. The concurrent miner gets the \sctrn{s} from multiple clients to execute them and proposes a block. To utilize the multi-core resources, miner executes the non-conflicting \sctrn{s} concurrently. However, identifying the non-conflicting \sctrn{s} are not straightforward because smart contracts are written in Turing-complete language. Therefore, a concurrent miner takes the help of optimistic STMs to execute the \sctrn{s} concurrently. STMs are a convenient programming interface for programmers to access the shared memory concurrently using multiple threads without worrying about concurrency issues such as deadlock, livelock, priority inversion, etc. To achieve the greater concurrency miner uses \emph{Single-Version Object-based STM (\svotm)} protocol to execute the \sctrn{s} concurrently.

High level overview of \algoref{cminer} (\subsecref{cminer}) illustrates the execution of \sctrn{s} by concurrent miner using $m$ threads. 
A thread $Th_x$, first reads the current value of global index $\gind$ into a local value current index $curInd$ and increments $\gind$ atomically in \Lineref{index}. At \Lineref{sctrn}, $Th_x$ gets a $curTrn$ from list of \sctrn (or sctList[]). $Th_x$ gets the unique timestamp by STM\_begin() at \Lineref{beg-tx}. \figref{cminer} shows the working of concurrent miner in which clients 1, 2 and 3 sends the \sctrn $T_1, T_2$ and $T_3$ to concurrent miner. \figref{cminer}.(a) shows the concurrent execution of sctList[] which consists of three transactions $T_1, T_2$ and $T_3$ executing by three threads using STMs. Transaction $T_1$ runs a smart contract function (or scFun) of $curTrn$ as $send(A,B,\$20)$ to transfer \$20 from account A to B. Initially, each account has \$100 as initial state (IS) shown in \figref{cminer}.(b). \emph{Send} is a function of coin contract from Solidity Documentation \cite{Solidity} which consists of a set of steps to transfer the money from one account to other account. Before performing the transfers, it verifies that transferring account has sufficient balance or not. If balance is sufficient then it deletes the amount from one account and adds it to other account.

\begin{figure*}
	\centerline{\scalebox{0.4}{\input{figs/cminer.pdf_t}}}
	 \caption{Working of Concurrent Miner}
	\label{fig:cminer}
\end{figure*}

If current step (curStep) is lookup for shared data item $k$ at \Lineref{case-lookup} then miner thread takes the help of STM\_lookup($k$) to execute it concurrently. STM\_lookup($k$) identifies the location of $k$ in the shared memory and return the value as $v$ at \Lineref{lu-ret}. If $v$ is $abort$ means inconsistency has been occurred then  thread needs to restart from the STM\_begin() and get the new timestamp at \Lineref{beg-tx} to execute the $curTrn$ again else take the next curStep until the $curTrn$ executed successfully.

If curStep is insert for shared data item $k$ then miner thread takes the help of STM\_insert($k,v$) to execute it concurrently at \Lineref{stm-ins}. Concurrent miner uses optimistic \svotm{s} so effect of the STM\_insert($k,v$) will be visible in shared memory after successful STM\_tryC() at \Lineref{tryC25}. If curStep is delete for shared data item $k$ then miner thread takes the help of STM\_delete($k$) to execute it concurrently at \Lineref{del-ret}. Actually deletion of $k$ will happen after successful STM\_tryC(). 

After completion of all the curStep of scFun of $curTrn$, STM validates the scFun in STM\_tryC() at \Lineref{tryC25}. If validation will be successful for all the update methods (STM\_insert() and STM\_delete()) exist in the scFun then update the final state ($FS_m$) corresponding to each shared data item (or account) in the shared memory as shown in \figref{cminer}.(b). Otherwise, it returns $abort$ at \Lineref{tryC} and restart from the \Lineref{beg-tx} until it $commits$. After successful STM\_try(), \svotm maintains the \oconf{s} corresponding to each shared data item which helps the miner to build the BG concurrently. The internal details of maintaining the conflicts is shown in \apnref{ap-sobjds}. Similarly other threads execute the transactions $T_2$ and $T_3$ concurrently with the help of \svotm as shown in \figref{cminer}.(a) and update the $FS_m$ for each account in \figref{cminer}.(b).


After successful execution of each scFun miner thread creates a vertex node \emph{vrtNode} of \sctrn and add the dependencies  based on \oconf{s} provided by the \svotm to maintain the BG (at \Lineref{addv}) as shown in \figref{cminer}.(c). To build the BG, concurrent miner uses the \emph{addVertex()} and \emph{addEdge()} methods of block graph library as explained in \subsecref{bg}. 

Once the sctList[] has been executed successfully and done with the formation of BG, then miner computes the hash of the previous block as depicted in \figref{cminer}.(d). Eventually, concurrent miner proposes a block which consists of all the \sctrn{s}, BG, final state ($FS_m$) of all the shared data items and hash of the previous block as shown in \figref{cminer}.(e) and broadcast to other existing nodes in the distributed system to validate the proposed block. 

To achieve the greater concurrency further, concurrent miner uses \emph{Multi-Version Object-based STM (\mvotm)} \cite{Juyal+:MVOSTM:SSS:2018} protocol instead of \svotm protocol to execute the \sctrn{s} concurrently. \mvotm maintains multiple versions corresponding to each shared data item. 
Concurrent miner uses \mvotm to capture the \emph{multi-version \oconf{s}} (defined in \apnref{ap-model}) to construct the BG. The internal details of maintaining the conflicts is shown in \apnref{ap-sobjds}. It maintains the less number of dependency in the BG as compared to \svotm. So, construction of the BG by concurrent miner using \mvotm is faster than \svotm. Hence, concurrent execution of \sctrn{s} by miner using \mvotm reduces the number of aborts and surpasses the efficiency. Later, concurrent validators will also re-execute more \sctrn{s} concurrently and ensure better performance because of the lesser number of \mvoconf in the BG. 
\cmnt{
\setlength{\intextsep}{0pt}

\begin{algorithm}[H]
	\scriptsize 
	\caption{\cminer{(\sct, STM)}: $m$ threads concurrently execute the \sctrn{s} from \sct{} with the help of Optimistic STM.}
	\label{alg:cminer}	
	\begin{algorithmic}[1]
		\makeatletter\setcounter{ALG@line}{0}\makeatother
		\Procedure{\emph{\cminer{ (\sct, STM)}}}{}\label{lin:cminer1}
		\State /*Get the smart contract function (scFun) from \sct{} and atomically increment the \blank{.25cm} curInd to point the next scFun*/\label{lin:cminer2}
		\State scFun = \emph{curInd.get\&Inc}(\sct);  \label{lin:cminer3}
		\ForAll{(scFun $\in$ \sct)}\label{lin:cminer4}
		\State $T_i$ = \begtrans{()}; /*Get the unique timestamp $i$ for $T_i$*/\label{lin:cminer5}
		\ForAll{(curStep $\in$ scFun)} /*scFun is a list of steps*/\label{lin:cminer6}
		\If{(curStep == lookup($k$))}\label{lin:cminer8}
		\State /*Lookup data item $k$ from a shared memory*/
		\State $v$ $\gets$ STM\_$lookup_i${($k$)};\label{lin:cminer10} 
		\If{($v$ == $abort$)} goto \Lineref{cminer5}; \label{lin:cminer11}
		\EndIf\label{lin:cminer12}
		\ElsIf{(curStep == insert($k, v$))} \label{lin:cminer13}
		\State /*Insert $k$ into $T_i$ local memory with value $v$*/
		\State STM\_$insert_i$($k, v$);\label{lin:cminer14} 
		\ElsIf{(curStep == delete($k$))} \label{lin:cminer15}
		\State /*Actual deletion of $k$ happens in STM\_tryC() */
		\State STM\_$delete_i$($k$); \label{lin:cminer16}
		\Else{} 
		curStep is not lookup, insert and delete.\label{lin:cminer18}
		\EndIf
		\EndFor\label{lin:cminer19}
		\State $v$ $\gets$ STM\_\emph{$tryC_i()$}; /*Try to commit the transaction $T_i$*/\label{lin:cminer20}
		\If{($v == abort$)} goto \Lineref{cminer5};\label{lin:cminer21}
		\EndIf			\label{lin:cminer22}
		\State Create vertex node \vrtnode{} with $\langle$\emph{$i$, scFun, 0, nil, nil}$\rangle$ as a vertex of BG;	\label{lin:cminer23}
		\State $BG_i$(\emph{vrtNode}, STM); /*Build BG with conflicts of $T_i$*/		\label{lin:cminer24}
		\State scFun $\gets$ \emph{curInd.get\&Inc}(\sct); 
		\EndFor	\label{lin:cminer26}
		\EndProcedure\label{lin:cminer27}
		
	\end{algorithmic}
\end{algorithm}
}

\cmnt{
\noindent
Now, we have the following theorems
\vspace{-.1cm}
\begin{theorem}
	\label{thm:BGdep}
	All the dependencies between the conflicting nodes are captured in the BG.
\end{theorem}
\vspace{-.1cm}
\begin{theorem}
	\label{thm:co-ostm}
	Any history $H_m$ generated by concurrent miner using \svotm satisfies co-opacity.
\end{theorem}
\vspace{-.1cm}
\begin{theorem}
	\label{thm:o-mvostm}
	Any history $H_m$ generated by concurrent miner using \mvotm satisfies opacity.
\end{theorem}

\vspace{-.1cm}
\noindent
Please refer the \apnref{ap-correctness} for proof of \thmref{BGdep}, \thmref{co-ostm}, and \thmref{o-mvostm}.
}
}

\subsection{Data Structure of  \svotm and \mvotm}
\label{apn:ap-sobjds}
\cmnt{
\begin{figure}[H]
    \centering
    \includegraphics[width=.7\textwidth]{figs/ostmCL.pdf_t}
     \caption{Underlying Data Structure of Shared Data Items in \svotm}
    \label{fig:dsOSTM}
\end{figure}

\begin{figure}[H]
    \centering
    \includegraphics[width=.8\textwidth]{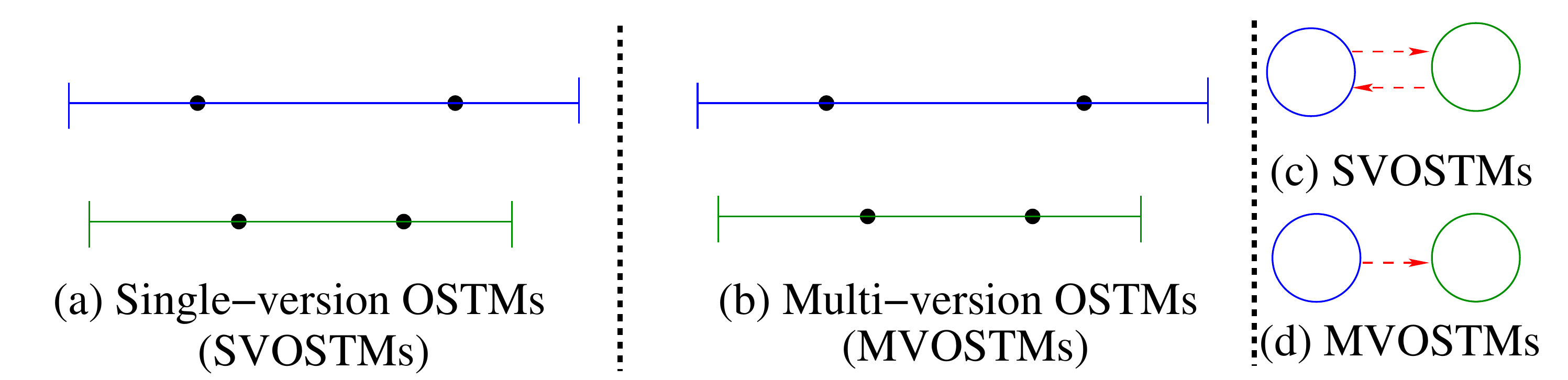_t}
    \caption{Underlying Data Structure of Shared Data Items in \mvotm}
    \label{fig:dsMVOSTM}
\end{figure}

}

\noindent 
This subsection describes the internal details about the data structure used to store shared data items in \svotm and \mvotm. 

As shown in \figref{ex1} (b) of \apnref{ostm-adv}, we have used a hash-table with a fixed size of buckets, where each bucket consists of a list of corresponding shared data items. The data structure used to store the data items depends on the protocol (\svotm, \mvotm).

\begin{figure}[!t]
    \centerline{\scalebox{0.38}{\input{figs/ostmCL.pdf_t}}}\vspace{-.25cm}
     \caption{Underlying Data Structure of Shared Data Items in \svotm}
\label{fig:confgostm}
\end{figure}

\vspace{1mm}
\noindent 
\textbf{Data Structure of \svotm: }\figref{confgostm} (a) demonstrates the structure of a shared data item in \svotm. Each shared data item consist of eight fields $\langle Account,$ $val,$ $Lock, max_L,$ $max_U, cL_{list}[], cU_{list}[], next\rangle$. Where \emph{Account} is a unique identifier which represents the shared data item or \emph{key} or \emph{account (e.g., $A_1$)}, \emph{val} field stores the value corresponding to data item. $Lock$ is use to provide synchronization among the operations of different transactions working on same data items. $Lock$ is acquired by the transaction before updating (inserting/deleting) the shared data item and released after the successful execution. Next two fields, i.e., \emph{max$_L$} and \emph{max$_U$} are counter variable 

initialize as 0. Whenever a transaction performs any operation (STM\_lookup()/ STM\_insert()/ STM\_delete()) on the shared data item, they update the corresponding  value of the \emph{max$_L$/max$_U$}. If current value of \emph{max$_L$/max$_U$} is smaller than the timestamp of the transaction $i$ then it updates \emph{max$_L$/max$_U$} with the $i$. Otherwise, transaction with timestamp $i$ returns abort and retry again. Field \emph{cL$_{list}[]$} and {cU$_{list}[]$} store the timestamp of all committed transactions (transaction ids) who has performed the STM\_lookup() and update operations (STM\_insert()/ STM\_delete()) on the shared data item respectively. These fields are used to generate the conflicts (dependencies) between the transactions. Finally, the \emph{next} field is point to the next shared data item in the list of the respective bucket. The shared data item in the list of the corresponding bucket is stored in the increasing order of the keys.

To understand conflicts generation for concurrent execution of SCTs in \svotm, we consider three transactions T$_0$, T$_5$, and T$_7$. In  \svotm, a transaction T$_i$ conflict with transaction T$_j$, if both are accessing a common data item $k$ and at least one of them is update operation (insert or delete). The conflicts of  \svotm is defined in \secref{bg}.

\cmnt{
\begin{enumerate}
    \item if T$_i$ is performing \emph{\tlu{(k)}} and  T$_j$ is performing \emph{\tins{(k)}} (or \tdel{(k)}) then there will be an edge from T$_i$ to T$_j$.
    \item if T$_i$ is performing \emph{\tins{(k)}} (or \tdel{(k)})  and  T$_j$ is performing \emph{\tlu{(k)}} then there will be an edge from T$_j$ to T$_i$.
    \item if T$_i$ is performing \emph{\tins{(k)}} (or  \tdel{(k)}) and  T$_j$ is performing \emph{\tins{(k)}} (or \tdel{(k)}) then there will be an edge from T$_i$ to T$_j$.
\end{enumerate}
}

\figref{confgostm} (b) shows the timeline view and respective operations of these transactions. Here, T$_0$ performed insert operation on shared data item as account $A_1$ with value $v_0$ (i.e., $i_{0}\langle A_1, v_{0}\rangle$), and committed successfully. Therefore, the timestamp (ts) of $T_0$ is inserted in {cU$_{list}[]$} as shown in \figref{confgostm} (a). Similarly, T$_5$, and T$_7$ are two concurrent execution performing insert (i.e., $i_{5}\langle A_1, v_{5}\rangle$) and lookup (i.e., $l_{7}\langle A_1, v_{5}\rangle$) operation respectively. Since T$_5$ performed insert on account $A_1$, the $max_U$ field is set to $5$ and later it committed successfully so its transaction id $T_5$ is inserted in {cU$_{list}[]$}. Further, T$_7$ performed lookup on $A_1$ so $max_L$ field is set to $7$ and committed successfully so its transaction id $T_7$ is inserted in {cL$_{list}[]$}.

\figref{confgostm} (c) illustrates the transactions conflict list. As shown in the concurrent execution, T$_0$ is the first transaction, so it does not find any conflict with other transactions; hence, its conflict list is empty. Later, T$_5$ committed, so T$_5$ conflict list consists of T$_0$ since both T$_0$ and T$_5$ performed update operation on $A_1$. A transaction which performs update operation (insert or delete) conflicts with all the transactions present in both {cU$_{list}[]$} and {cL$_{list}[]$} lists corresponding to shared data item. At the commit time of $T_5$, {cL$_{list}[]$} was empty so, $T_5$ conflicts with $T_0$ only. Finally, transaction T$_7$ committed with the lookup on $A_1$.  A transaction which performs lookup operation, conflicts with all the transactions present in the {cU$_{list}[]$} list. So, $T_7$ conflict list consists all the transactions present in {cU$_{list}[]$} which is T$_0$ and T$_5$. Hence, T$_7$ conflicts with T$_0$ and T$_5$.

\vspace{1mm}
\noindent 
\textbf{Data Structure of \mvotm: } \svotm stores only one version corresponding to each data item; however,  \mvotm maintains multiple versions. In the proposed framework, we have a fixed number of SCTs in each block, so we do not restrict the number of versions with each shared data item. \figref{confgmvostm} (a) shows the data structure used to store the shared data item in  \mvotm protocol. Here, each shared data item consists of four fields as $\langle Account, Lock, vl, next \rangle$. Where $Account$, \emph{Lock}, and \emph{next} field are same as defined earlier for \svotm. A new field \emph{vl} stands for version list, which maintains version created by update operations (insert and delete) on the shared data item.

To store the versions of the shared data item, we used a list (\emph{version list or vl)}. 
Here, each entry of the version list consists of five fields $\langle ts, val, max_L, rvl, vNext\rangle$. It stores the version in increasing order of transaction's timestamps. The first field \emph{ts} shows the timestamp of the transaction which created this version (see \figref{confgmvostm} (a)). The next field is \emph{val}, which stores the value corresponding to that version. Field \emph{max$_L$} is used to store the maximum id/ts of the transaction that has performed lookup on this version. A transaction $T_i$ looks up from the version $j$ such that $j$ is \emph{largest version timestamp} smaller than $i$. The \emph{rvl[]} stands for \emph{return value list}, which stores the timestamp of the committed transactions that have lookup from a particular version. Finally, the last field \emph{vNext} is used to store a pointer to the next available version in the version list. As \figref{confgmvostm} (a) illustrates account $A_1$ maintains of three versions as 0, 5, and 10. \emph{Version 0, 5, and 10} is created by transaction T$_0$, T$_5$, and T$_{10}$ respectively.

\begin{figure}[!t]
    \centerline{\scalebox{0.38}{\input{figs/mvostmCL.pdf_t}}}\vspace{-.25cm}
     \caption{Underlying Data Structure of Shared Data Items in \mvotm}
    \label{fig:confgmvostm}
\end{figure}

Consider \figref{confgmvostm} (b) to understand conflicts generation in \mvotm, which demonstrates the timeline view and respective operations of four transactions T$_0$, T$_5$, T$_7$, and T$_{10}$. Here, transactions T$_0$, T$_5$, and T$_{10}$ perform insert operation while T$_{7}$ performs a lookup on account A$_1$. Due to insert operation by T$_0$, T$_5$, and T$_{10}$ three different versions of account A$_1$ has been created, as shown in \figref{confgmvostm} (a). Here, transaction T$_0$ executed first, so it created version 0 and then T$_5$ performed insert operation ($i_{5}\langle A_1, v_{5}\rangle $); therefore version 5 is created. After that T$_7$ performed lookup on A$_1$ which returns the value as $v_5$ (i.e., $l_{7}\langle A_1, v_5\rangle$). After the successful commit of T$_7$, it inserts its ts $7$ in \emph{max$_L$} of version 5 as shown in \figref{confgmvostm} (a). Here, transaction T$_{10}$ began after the beginning of T$_{7}$ and committed before T$_{7}$ but still transaction T$_{7}$ is allowed to commit. Due to multiple versions, T$_{7}$ finds the older value of A$_1$ as $v_5$ created by $T_5$ and hence not abort; otherwise, in \svotm transaction T$_{7}$ has to return abort. 

\figref{confgmvostm} (c) Illustrates the conflict list of transactions. Here, T$_0$ is the first transaction and created version 0 of A$_1$, so it does not conflict with any other transaction; hence, its conflict list is empty. Next, transaction T$_5$, which created a new version of A$_1$ and committed successfully, so T$_5$ conflict list consists of T$_0$. So, while generating conflict list, for an update (insert or delete) operation on account A$_1$ transaction first checks if the \emph{rvl[]} list is empty for the largest version smaller than its ts (transaction ts), then if the list is empty it adds that version ts in its conflict list otherwise adds all the ts in \emph{rvl[]} list of that version. For a lookup operation, a transaction adds the ts/id of the version which it has looked up and also the next version in the version list (if available) in its conflict list. Next is transaction T$_{10}$  since it committed before T$_7$, so in T$_{10}$ conflict list, T$_{5}$ is added. The reason why only T$_{5}$ and not T$_7$, this is because \emph{rvl[]} list consist only committed transaction ts and at the time when T$_{10}$ committed T$_{7}$ was still live and not yet committed so \emph{rvl[]}  of T$_{5}$ was empty. Finally, T$_7$ committed and as it performed a lookup on account A$_1$ from version 5, so it adds T$_5$ and the next version ts, which is T$_{10}$ in its conflict list. Hence, T$_7$ conflicts with T$_5$ and T$_{10}$.

\subsection{\Mthr Validator}
\label{apn:ap-cvalidator}

\Mthr validator re-executes the \sctrn{s} concurrently and deterministically rely on the BG provided by the \mthr miner. To access the BG, validator uses \emph{globalSearch()}, \emph{localSearch()}, and \emph{remExNode()} methods of block graph library. The descriptions of all these methods are given in \apnref{ap-dsBG}. 

High level overview of \algoref{val} shows the execution of \sctrn{s} by \mthr validator with the help of BG. First, multiple validator threads concurrently identify the source node (\emph{indegree} 0) in the BG using \emph{globalSearch()} at \Lineref{val5}. After identifying the source node, thread claims it (sets \emph{indegree} to -1) atomically so that other \mthr validator threads can not claim it. Then it executes the scFun of \sctrn corresponding to the source node. After successful execution of scFun, it decrements the \emph{indegree} count of conflicting node of source node using \emph{remExNode()} at \Lineref{val7}. While decrementing the \emph{indegree} of conflicting node validator thread checks if it found new source node then it store that node in its thread local log $thLog$ to execute next \sctrn at \Lineref{val10} efficiently. 

\vspace{.5mm}
\begin{algorithm}
	\scriptsize
	\caption{\Mthr validator{(\sctl, BG)}: $v$ threads concurrently and deterministically executes the \sctrn{s} using BG.} 
	\label{alg:val}
	
	
	\begin{algorithmic}[1]
		\makeatletter\setcounter{ALG@line}{120}\makeatother
		\Procedure{\emph{\Mthr validator{(\sctl, BG)}}}{}  \label{lin:val1}
		\State /*Execute until all the \sctrn{s} successfully completed*/\label{lin:val2}
		\While{(\nc{} $<$ size\_of(\sctl))}  \label{lin:val3} /*Initially, \emph{sctCount}=0 to maintain count.*/
		
		\State \emph{vrtNode} = globalSearch{(BG)}; /*Identify the source node (\emph{indegree} 0) in the BG*/\label{lin:val5}
		
		
		\State remExecNode(\emph{vrtNode}); /*Decrement the \emph{indegree} of conflicting nodes*/\label{lin:val7}
		\While{($thLog$ $\neq$ $nil$)} /*Identify source node in thread local log ($thLog$)*/ \label{lin:val9}
		\State \emph{vrtNode} = localSearch{($thLog$)};\label{lin:val10}
		\State remExecNode(\emph{vrtNode});\label{lin:val11}
		\EndWhile \label{lin:val12}
		\EndWhile\label{lin:val13}
		\EndProcedure \label{lin:val14}
	\end{algorithmic}
\end{algorithm}

\cmnt{
\begin{figure*}
	\centerline{\scalebox{0.4}{\input{figs/cvalidator.pdf_t}}}
	 \caption{Working of Concurrent Validator}
	\label{fig:cval}
\end{figure*}

\figref{cval}.(a) represents the BG in which conflicting edge is from transaction $T_1$ to $T_2$ because they accessed the same account as B. So \mthr validator needs to respect the conflicting order of the transaction to avoid the \emph{false block rejection (or FBR)} error and execute them serially as depicted in \figref{cval}.(b). However, transaction $T_3$ do not have any conflict with other transactions so it can be executed concurrently. After successful execution of \sctrn{s} validator threads compute the final state ($FS_v$) for all the shared data items (or accounts) as shown in \figref{cval}.(c). 
}

Finally, validator thread compares the $FS_m$ given by the \mthr miner and $FS_v$ computed by itself corresponding to each shared data item. If final state matches and proposed block reaches the global consensus, then it is added into the blockchain and respective miner awarded with the incentive. 



\cmnt{
	\begin{algorithm}
		\scriptsize
		\label{alg:cvalidator} 	
		\caption{\cvalidator(): Concurrently $V$ threads are executing atomic units of smart contract with the help of $CG$ given by the miner.}
		\begin{algorithmic}[1]
			\makeatletter\setcounter{ALG@line}{69}\makeatother
			\Procedure{\cvalidator()}{} \label{lin:cvalidator1}
			\State /*Execute until all the atomic units successfully completed*/\label{lin:cvalidator2}
			\While{(\nc{} $<$ size\_of(\aul))}\label{lin:cvalidator3}
			\State \vnode{} $\gets$ $CG$.\vh;\label{lin:cvalidator4}
			\State \searchl();/*First search into the thread local \cachel*/\label{lin:cvalidator5}
			\State \searchg(\vnode);/*Search into the \confg*/\label{lin:cvalidator6}
			\EndWhile \label{lin:cvalidator7}
			\EndProcedure\label{lin:cvalidator8}
		\end{algorithmic}
	\end{algorithm}

	\begin{algorithm}
		\scriptsize
		\label{alg:searchl} 	
		\caption{\searchl(): First thread search into its local \cachel{}.}
		\begin{algorithmic}[1]
			\makeatletter\setcounter{ALG@line}{77}\makeatother
			\Procedure{\searchl()}{}\label{lin:searchl1}
			\While{(\cachel{}.hasNext())}/*First search into the local nodes list*/\label{lin:searchl2}
			\State cacheVer $\gets$ \cachel{}.next(); \label{lin:searchl3} 
			\If{( cacheVer.\inc.CAS(0, -1))} \label{lin:searchl4}
			\State \nc{} $\gets$ \nc{}.$get\&Inc()$; \label{lin:searchl5}
			\State /*Execute the atomic unit of cacheVer (or cacheVer.$AU_{id}$)*/ \label{lin:searchl6}
			\State  \exec(cacheVer.$AU_{id}$);\label{lin:searchl7}
			\While{(cacheVer.\eh.\en $\neq$ cacheVer.\et)} \label{lin:searchl8}
			\State Decrement the \emph{inCnt} atomically of cacheVer.\emph{vref} in the \vl{}; \label{lin:searchl9} 
			\If{(cacheVer.\emph{vref}.\inc{} == 0)}\label{lin:searchl10}
			\State Update the \cachel{} of thread local log, \tl{}; \label{lin:searchl11}
			\EndIf\label{lin:searchl12}
			\State cacheVer $\gets$ cacheVer.\en;\label{lin:searchl13}
			\EndWhile\label{lin:searchl14}
			\Else\label{lin:searchl15}
			\State Remove the current node (or cacheVer) from the list of cached nodes; \label{lin:searchl16}
			\EndIf\label{lin:searchl17}
			
			\EndWhile\label{lin:searchl18}
			\State return $\langle void \rangle$;\label{lin:searchl19}
			\EndProcedure\label{lin:searchl20}
		\end{algorithmic}
	\end{algorithm}

	\begin{algorithm}
		\scriptsize
		\label{alg:searchg} 	
		\caption{\searchg(\vnode): Search the \vnode{} in the \confg{} whose \inc{} is 0.}
		\begin{algorithmic}[1]
			\makeatletter\setcounter{ALG@line}{97}\makeatother
			\Procedure{\searchg(\vnode)}{} \label{lin:searchg1}
			\While{(\vnode.\vn{} $\neq$ $CG$.\vt)}/*Search into the \confg*/ \label{lin:searchg2}
			\If{( \vnode.\inc.CAS(0, -1))} \label{lin:searchg3}
			\State \nc{} $\gets$ \nc{}.$get\&Inc()$; \label{lin:searchg4}
			\State /*Execute the atomic unit of \vnode (or \vnode.$AU_{id}$)*/\label{lin:searchg5}
			\State  \exec(\vnode.$AU_{id}$);\label{lin:searchg6}
			\State \enode $\gets$ \vnode.\eh;\label{lin:searchg7}
			\While{(\enode.\en{} $\neq$ \enode.\et)}\label{lin:searchg8}
			\State Decrement the \emph{inCnt} atomically of \enode.\emph{vref} in the \vl{};\label{lin:searchg9} 
			\If{(\enode.\emph{vref}.\inc{} == 0)}\label{lin:searchg10}
			\State /*\cachel{} contains the list of node which \inc{} is 0*/\label{lin:searchg11}
			\State Add \enode.\emph{verf} node into \cachel{} of thread local log, \tl{}; \label{lin:searchg12}
			\EndIf \label{lin:searchg13}
			\State \enode $\gets$ \enode.\en; \label{lin:searchg14}
			\EndWhile\label{lin:searchg15}
			\State \searchl();\label{lin:searchg16}
			\Else\label{lin:searchg17}
			\State \vnode $\gets$ \vnode.\vn;\label{lin:searchg18}
			\EndIf\label{lin:searchg19}
			\EndWhile\label{lin:searchg20}
			\State return $\langle void \rangle$;\label{lin:searchg21}
			\EndProcedure\label{lin:searchg22}
		\end{algorithmic}
	\end{algorithm}
}
\cmnt{
\vspace{-.1cm}
\begin{theorem}
	\label{thm:hmve1}
	A history $H_m$ generated by \svotm protocol and $H_v$ are view equivalent.
	
\end{theorem}
\vspace{-.1cm}
\begin{theorem}
		\label{thm:hmmvve}
	A history $H_m$ generated by \mvotm protocol and $H_v$ are multi-version view equivalent.
\end{theorem}

\vspace{-.1cm}
\noindent
Due to lack of space, please refer the \apnref{ap-correctness} for proof of  \thmref{hmve1}, and \thmref{hmmvve}.
}

\subsection{Detection of Malicious Miner by Smart \Mthr Validator (SMV)}
\label{apn:ap-mm}
In this subsection, we propose a technique to detect malicious miner using \emph{Smart \Mthr Validator}.

As we have seen the functionality of \mthr validator in \apnref{ap-cvalidator}, it executes the SCTs concurrently rely on the BG provided by the \mthr miner. Suppose the miner that produces a block is malicious and does not add some edges to the \blg. This can result in the \bc systems entering inconsistent states due to \emph{double spend}. We motivate this with an example. Consider three bank accounts $A, B, C$ maintained on the \bc with the current balance being \$100 in each of them. Now consider two \sctrn{s} $T_i, T_j$ which are conflicting where (a) $T_i$ transfers \$50 from $A$ to $B$; (b) $T_j$ transfers \$60 from $A$ to $C$. Considering the initial balance of \$100 in $A$ account, both transactions cannot be executed. 

If a malicious miner, say $mm$ does not add an edge between these two transactions in the \blg then both these \sctrn{s} can execute concurrently by validators. Then such execution could result in the final state with the balances in the accounts $A, B, C$ as 40, 150, 160 respectively or 50, 150, 160.  As we can see, neither of these final states can be obtained from any serial execution and are not correct states. Suppose the miner $mm$ stores 40, 150, 160 for $A, B, C$  in the final state, and a validator $v$ on concurrent execution arrives at the same state. Then, $v$ will accept this block, which results in its state becoming inconsistent. If the majority of validators similarly accept this block, then the state of the \bc essentially has become inconsistent. We denote this problem as \emph{edge missing \blg} or \emph{\emb}. 





\noindent
\textbf{Counter Based Solution to Catch the Malicious Miner:} So, to avoid this issue, we propose a 
a \emph{Smart \Mthr Validator (\scv)}, which uses the concept of $counters$ and identifies the malicious behavior of miner and rejects the proposed malicious block. Our algorithm is inspired by \bto in databases \cite[Chap. 4]{WeiVoss:TIS:2002:Morg}. \scv keeps track of each global data item that can be accessed across multiple transactions by different threads. Specifically, \scv maintains two global counters for each key of hash-table (shared data item) $k$ - (a) $\guc{k}$ (b) $\glc{k}$. These respectively keep track of number of \textbf{updates} and \textbf{lookups} that are concurrently performed by different threads on $k$. Both these global counters are initialized to 0.


When a \scv thread $Th_x$ is executing an \sctrn $T_i$ then \scv similarly maintains two local variables corresponding to each global data item $k$ which is accessible only by $Th_x$ - (c) $\luc{k}{i}$ (d) $\llc{k}{i}$. These respectively keep track of number of updates and lookups performed by $Th_x$ on $k$ while executing $T_i$. These counters are initialized to 0 before the start of $T_i$.  

Having described the counters, we will explain the high level design of \scv approach is shown in \algoref{cminermc}. To access the BG, validator uses the block graph library methods \searchg{()}, \searchl{()}, and remExecNode() as explained in \apnref{ap-dsBG}. Internally they use the \exec{()} method to execute the smart contract function (scFun). First, it identifies the scFun steps and executes them one after another at \Lineref{mc1}. 



If current step (curStep) is lookup (at \Lineref{mc5}) on shared data-item key $k$ then it checks the $\guc{k}$ counter value. If $\guc{k}$ counter value is not equal to $k.lLC$ at \Lineref{mc6}, that means another concurrent conflicting thread is also working on the same key $k$, i.e., conflict edge among them are missing in BG given by the miner. Then \scv reports the miner is malicious.

If $\guc{k}$ counter value is zero means equal to $k.lLC$ then it atomically increments the $\glc{k}$ counter of key $k$ in shared memory at \Lineref{mc7}, so, any other concurrent conflicting thread checks the value as non zero it will detect the malicious miner. It also increments the local $k.lLC$ value by one at \Lineref{mc711}. Finally, validator thread lookups the key $k$ from the shared memory and return the value as $v$ at \Lineref{mc8}.

If curStep is insert on key $k$ with value as $v$ (at \Lineref{mc13}) then before inserting the key $k$ with value $v$ in the shared memory it checks both global counter values ($\glc{k}$ == $k.lLC$) $\&\&$ ($\guc{k}$ == $k.lUC$) at \Lineref{mc14}. If anyone of the counter value is not equal to corresponding to the local variable value, that means another concurrent conflicting thread is also working on the same key $k$, i.e., conflict edge among them is missing in BG given by the miner. Then \scv reports the miner is malicious.

If both global counter value is equal to corresponding local variables value, then it atomically increments the $\guc{k}$ counter of key $k$ in shared memory at \Lineref{mc15}, so, any other concurrent conflicting thread checks the value as non zero it will detect the malicious miner. It also increments the local $k.lUC$ value by one at \Lineref{mc722}. Finally, validator thread inserts the key $k$ with value $v$ in the shared memory at \Lineref{mc16}. Same things works if curStep is deleted on key $k$ at \Lineref{mc21}.  

After successful execution of each scFun, thread atomically decrements $\guc{k}, \glc{k}$ by the value of $\luc{k}{i}, \llc{k}{i}$ respectively at \Lineref{mc34}. Then thread will reset $\luc{k}{i}, \llc{k}{i}$ to 0. Thus with the help of $counter$, validator threads are able to detect the malicious miner, and straightforward reject that block. 




\begin{algorithm} [!htb]
	\scriptsize
	
	\caption{\exec($scFun$): Execute the smart contract function (scFun) with atomic global lookup/update counter. Initially, \emph{lookup counter ($\glc{k}$)} and \emph{update counter ($\guc{k}$}) value is 0 corresponding to each shared data-items key $k$. Each transaction maintains local \emph{$\llc{k}{i}$} and local \emph{$\luc{k}{i}$} as 0 in transaction local log, \emph{txLog} corresponding to each key.}
	\label{alg:cminermc} 
	\begin{algorithmic}[1]
		\makeatletter\setcounter{ALG@line}{131}\makeatother
		\While{(scFun.steps.hasNext())} /*Assume that scFun is a list of steps*/ \label{lin:mc1}
		\State curStep = scFun.steps.next(); /*Get the next step to execute*/ \label{lin:mc2}
		\Switch{(curStep)} \label{lin:mc3}
		\EndSwitch \label{lin:mc4}
		\Case{lookup($k$):} \label{lin:mc5}
		\If{($\guc{k}$ == \emph{$\llc{k}{i}$})} /*Check for update counter ($uc$) value*/ \label{lin:mc6}
		\State Atomically increment the lookup counter, $\glc{k}$; \label{lin:mc7}
		\State Increment \emph{$\llc{k}{i}$} by 1. /*Maintain \emph{$\llc{k}{i}$} in transaction local log \emph{txLog}*/ \label{lin:mc711}
		\State Lookup $k$ from a shared memory; \label{lin:mc8}
		\Else \label{lin:mc9}
		\State return $\langle$\emph{Miner is malicious}$\rangle$; \label{lin:mc10}
		\EndIf \label{lin:mc11}
		\EndCase \label{lin:mc12}
		\Case{insert($k, v$):}  \label{lin:mc13}
		\If{(($\glc{k}$ $==$ \emph{$\llc{k}{i}$}) \&\& ($\guc{k}$ == \emph{$\luc{k}{i}$}))} /*Check lookup/update counter value*/ \label{lin:mc14}
		\State Atomically increment the update counter, $\guc{k}$; \label{lin:mc15}
		\State Increment \emph{$\luc{k}{i}$} by 1. /*Maintain \emph{$\luc{k}{i}$} in transaction local log \emph{txLog}*/ \label{lin:mc722}
		\State Insert $k$ in shared memory with value $v$; \label{lin:mc16}
		\Else \label{lin:mc17}
		\State return $\langle$\emph{Miner is malicious}$\rangle$; \label{lin:mc18}
		\EndIf \label{lin:mc19}
		\EndCase \label{lin:mc20}
		\Case{delete($k$):}  \label{lin:mc21}
		\If{(($\glc{k}$ $==$ \emph{$\llc{k}{i}$}) \&\& ($\guc{k}$ == \emph{$\luc{k}{i}$}))} /*Check lookup/update counter value*/ \label{lin:mc22}
		\State Atomically increment the update counter, $\guc{k}$; \label{lin:mc23}
		\State Increment \emph{$\luc{k}{i}$} by 1. /*Maintain \emph{$\luc{k}{i}$} in transaction local log \emph{txLog}*/
		\State Delete $k$ in shared memory. \label{lin:mc24}
		\Else \label{lin:mc25}
		\State return $\langle$\emph{Miner is malicious}$\rangle$; \label{lin:mc26}
		\EndIf \label{lin:mc27}
		\EndCase \label{lin:mc28}
		\Case{default:} \label{lin:mc29}
		\State curStep is not lookup, insert and delete; \label{lin:mc30}
		\State execute curStep; \label{lin:mc31}
		\EndCase \label{lin:mc32}
		\EndWhile	 \label{lin:mc33}
		\State Atomically decrement the \emph{$\glc{k}$} and \emph{$\guc{k}$} corresponding to each shared data-item key $k$.	\label{lin:mc34}
		
	\end{algorithmic}
\end{algorithm}

\cmnt{

\begin{algorithm} [H]
	\scriptsize
	
	\caption{\exec($scFun$): Execute the smart contract function (scFun) with atomic lookup/update counter. Initially, \emph{lookup counter (lc)} and \emph{update counter(\gucntr)} value is 0 corresponding to each shared data-items key $k$. Each transaction maintains \emph{\llc{k}{i}} and \emph{\luc{k}{i}} as 0 in transaction local log, \emph{txLog} corresponding to each key.}
	\label{alg:cminermc} 
	\begin{algorithmic}[1]
		\makeatletter\setcounter{ALG@line}{155}\makeatother
		\While{(scFun.steps.hasNext())} /*Assume that scFun is a list of steps*/ \label{lin:mc1}
		\State curStep = scFun.steps.next(); /*Get the next step to execute*/ \label{lin:mc2}
		\Switch{(curStep)} \label{lin:mc3}
		\EndSwitch \label{lin:mc4}
		\Case{lookup($k$):} \label{lin:mc5}
		\If{($\guc{k}$ == \emph{\luc{k}{i}})} /*Check for update counter ($uc$) value*/ \label{lin:mc6}
		\If{(\emph{\llc{k}{i}} == 0)}
		\State Atomically increment the lookup counter, $\glc{k}$; \label{lin:mc7}
		\State \emph{\llc{k}{i}} sets 1. /*Maintain \emph{\llc{k}{i}} in transaction local log \emph{txLog}*/
		\State Lookup $k$ from a shared memory; \label{lin:mc8}
		\EndIf
		\State Lookup $k$ from a \emph{txLog}; 
		\Else \label{lin:mc9}
		\State return $\langle$\emph{Miner is malicious}$\rangle$; \label{lin:mc10}
		\EndIf \label{lin:mc11}
		\EndCase \label{lin:mc12}
		\Case{insert($k, v$):}  \label{lin:mc13}
		\If{(($\glc{k}$ $==$ \emph{\llc{k}{i}}) \&\& ($\guc{k}$ == \emph{\luc{k}{i}}))} /*Check lookup/update counter value*/ \label{lin:mc14}
		\If{(\emph{\luc{k}{i}} == 0)}
		\State Atomically increment the update counter, $\guc{k}$; \label{lin:mc15}
		\State \emph{\luc{k}{i}} sets 1. /*Maintain \emph{\luc{k}{i}} in transaction local log \emph{txLog}*/
		\EndIf
		\State Insert $k$ in shared memory with value $v$; \label{lin:mc16}
		\Else \label{lin:mc17}
		\State return $\langle$\emph{Miner is malicious}$\rangle$; \label{lin:mc18}
		\EndIf \label{lin:mc19}
		\EndCase \label{lin:mc20}
		\Case{delete($k$):}  \label{lin:mc21}
		\If{(($\glc{k}$ $==$ \emph{k-loc}) \&\& ($\guc{k}$ == \emph{\luc{k}{i}}))} /*Check lookup/update counter value*/ \label{lin:mc22}
		\If{(\emph{\luc{k}{i}} == 0)}
		\State Atomically increment the update counter, $\guc{k}$; \label{lin:mc23}
		\State \emph{\luc{k}{i}} sets 1. /*Maintain \emph{\luc{k}{i}} in transaction local log \emph{txLog}*/
		\EndIf
		\State Delete $k$ in shared memory. \label{lin:mc24}
		\Else \label{lin:mc25}
		\State return $\langle$\emph{Miner is malicious}$\rangle$; \label{lin:mc26}
		\EndIf \label{lin:mc27}
		\EndCase \label{lin:mc28}
		\Case{default:} \label{lin:mc29}
		\State curStep is not lookup, insert and delete; \label{lin:mc30}
		\State execute curStep; \label{lin:mc31}
		\EndCase \label{lin:mc32}
		\EndWhile	 \label{lin:mc33}
		\State Atomically decrement the \emph{lc} and \emph{uc} corresponding to each shared data-item key $k$.	\label{lin:mc34}			 
		\State return $\langle void \rangle$ \label{lin:mc35}	 
	\end{algorithmic}
\end{algorithm}

}

\cmnt{
\begin{algorithm}
	\scriptsize
	\caption{\searchl{($thlog_i$)}: This BG method is called by the concurrent validator. Validator thread identifies the source node in threads local log $thLog_i$ at \Lineref{sl2mc}. If it finds any source node in $thLog_i$ then it claims that node and atomically sets its \emph{indegree} field to -1 so that no other concurrent validator threads claim this node at \Lineref{sl3mc}. After claiming of source node it executes smart contract function (associated with the identified source node) using \emph{executeScFun()} at \Lineref{sl6mc}.}
	\begin{algorithmic}[1]
		\makeatletter\setcounter{ALG@line}{110}\makeatother
		\Procedure{\searchl{($thlog_i$)}}{} \label{lin:sl1mc}
		\State Identify local log vertex(llVertex) with indegree 0 in $thLog_i$.\label{lin:sl2mc}
		\If{(llVertex.\inc.CAS(0, -1))} \label{lin:sl3mc} 
		\State \nc{} $\gets$ \nc{}.$get\&Inc()$; \label{lin:sl4mc}
		\State /*Concurrently execute SCT corresponds to llVertex*/.\label{lin:sl5mc}
		\State \exec{(llVertex.scFun)}.\label{lin:sl6mc}
		\State return$\langle$llVertex$\rangle$;\label{lin:sl7mc}
		\Else\label{lin:sl8mc}
		\State return$\langle nil \rangle$;\label{lin:sl9mc}
		\EndIf\label{lin:sl10mc}
		\EndProcedure \label{lin:sl11mc}
	\end{algorithmic}
\end{algorithm}

\begin{algorithm}
	\scriptsize
	\caption{\searchg{(BG)}: This BG method is called by the concurrent validator. Validator thread identifies the source node (with indegree 0) in BG at \Lineref{sg4mc}. If it finds any source node in BG then it claims that node and atomically sets its \emph{indegree} field to -1 so that no other concurrent validator threads claim this node at \Lineref{sg5mc}. After claiming of source node it executes smart contract function (associated with the identified source node) using \emph{executeScFun()} at \Lineref{sg8mc}.}
	\begin{algorithmic}[1]
		\makeatletter\setcounter{ALG@line}{120}\makeatother
		\Procedure{\searchg{(BG)}}{} \label{lin:sg1mc}
		\State \vnode{} $\gets$ BG.\vh;	/*Start from the Head of the list*/\label{lin:sg2mc}
		\State /*Identify the \emph{vrtNode} with indegree 0 in BG*/\label{lin:sg3mc}
		\While{(\vnode.\vn{} $\neq$ BG.\vt)} \label{lin:sg4mc}
		\If{(\vnode.\inc.CAS(0, -1))}\label{lin:sg5mc}
		\State \nc{} $\gets$ \nc{}.$get\&Inc()$; \label{lin:sg6mc}
		\State /*Concurrently execute SCT corresponds to \emph{Node}*/.\label{lin:sg7mc}
		\State \exec{(\emph{vrtNode}.scFun)}.\label{lin:sg8mc}
		\State return$\langle \vnode \rangle$;\label{lin:sg9mc}
		\EndIf\label{lin:sg10mc}
		\State \vnode $\gets$ \vnode.\vn;	\label{lin:sg11mc}
		\EndWhile\label{lin:sg12mc}
		\State return$\langle nil \rangle$;\label{lin:sg13mc}
		\EndProcedure\label{lin:sg14mc}
	\end{algorithmic}
\end{algorithm}

\begin{algorithm}
	\scriptsize
	\caption{remExecNode(removeNode): This BG method is called by the concurrent validator. It atomically decrements the \emph{indegree} for each conflicting node of source node with the help of vertex reference pointer (\emph{vrtRef}) at \Lineref{ren3mc}. \emph{vrtRef} pointer helps to decrement the \emph{indegree} count of conflicting node efficiently because thread need not to travels from head of the \vl{[]} to identify the \vgn{} node for decrementing the \emph{indegree} of it. With the help of \emph{vrtRef}, it directly decrements the \emph{indegree} of \vgn. While decrementing the \emph{indegree} of \vgn{} if it identifies the new source node than it keeps that node information in thread local log $thLog_i$ at \Lineref{ren5mc}.}
	\begin{algorithmic}[1]
		\makeatletter\setcounter{ALG@line}{134}\makeatother
		\Procedure{\emph{remExecNode(removeNode)}}{} \label{lin:ren1mc}
		\While{(removeNode.\en $\neq$ removeNode.\et)}	\label{lin:ren2mc}
		\State Atomically decrement the \emph{indegree} of conflicting node using removeNode.\emph{vrtRef} \blank{.65cm} pointer.\label{lin:ren3mc} 
		\If{(removeNode.\emph{vrtRef}.\inc{} == 0)}\label{lin:ren4mc}
		\State Add removeNode.\emph{vrtRef} node into $thLog_i$.\label{lin:ren5mc}
		\EndIf\label{lin:ren6mc}
		\State removeNode $\gets$ removeNode.\en.\emph{vrtRef};	\label{lin:ren7mc}	
		\EndWhile \label{lin:ren8mc}
		\State return$\langle nil \rangle$; \label{lin:ren9mc}
		\EndProcedure	\label{lin:ren10mc}	
	\end{algorithmic}
\end{algorithm}

\begin{algorithm} 
	\scriptsize
	\label{alg:exec} 
	\caption{\exec{(scFun)}: It executes the SCTs concurrently without the help of concurrency control protocol. First, it identifies the smart contract function (scFun) steps and executes them one after another at \Lineref{ex3mc}. If current step (curStep) is lookup on key $k$ then it lookup the shared data-item for key $k$ from the shared memory at \Lineref{ex5mc}. If curStep is insert on key $k$ with value as $v$ then it insert the shared data-item of key $k$ with value $v$ in the shared memory at \Lineref{ex7mc}. If curStep is delete on key $k$ then it deletes the shared data-item of key $k$ from the shared memory at \Lineref{ex9mc}. All these curStep of scFun can run concurrently with the other validator threads because only non conflicting transaction will execute concurrently with the help of BG given by the concurrent miner.}
	\begin{algorithmic}[1]
		\makeatletter\setcounter{ALG@line}{144}\makeatother
		\Procedure{\exec{($scFun$)}}{} \label{lin:ex1mc}
		\While{(scFun.steps.hasNext())} /*scFun is a list of steps*/ \label{lin:ex2mc}
		\State curStep = scFun.steps.next(). /*Get next step to execute*/\label{lin:ex3mc}
		
		\If{(curStep == lookup(k))}\label{lin:ex4mc}
		\State Lookup $k$ from a shared memory.\label{lin:ex5mc}
		\ElsIf{(curStep == insert(k, v)}\label{lin:ex6mc}
		\State Insert $k$ in shared memory with value $v$.\label{lin:ex7mc}
		\ElsIf{(curStep == delete(k)}\label{lin:ex8mc}
		\State Delete $k$ from shared memory.\label{lin:ex9mc}
		\Else{} curStep is not lookup, insert and delete;\label{lin:ex10mc}
		\EndIf\label{lin:ex11mc}
		\EndWhile			\label{lin:ex12mc}
		\State return $\langle void \rangle$	\label{lin:ex13mc}
		\EndProcedure			\label{lin:ex14mc}
	\end{algorithmic}
\end{algorithm}

}

\section{Correctness of BG, \Mthr Miner and Validator}
\label{apn:ap-correctness}
This section describes the proof of theorems stated for the correctness of BG, \mthr miner, and validator in \secref{pm}. In order to define the correctness of BG, we identify the linearization points (LPs) of each method as follows:

\begin{enumerate}
    \item \addv{(\vgn)}: (\vp.\vn.CAS(\vc, \vgn)) in \Lineref{addv5} is the LP point of \addv{()} method if \vnode{} is not exist in the BG. If \vnode{} exist in the BG then (\vc.$ts_i$ $\neq$ \vgn.$ts_i$) in \Lineref{addv3} is the LP point.
    
    \item \adde{\emph{($conflictNode_1$, $conflictNode_2$)}}:         (\ep.\en.CAS(\ec, \egn)) in \Lineref{adde5} is the LP point of \adde{()} method if \enode{} is not exist in the BG. If \enode{} is exist in the BG then         (\ec.$ts_i$ $\neq$ $conflictNode_2$.$ts_i$) in \Lineref{adde3} is the LP point.
    
    \item \searchl{(thLog)}: ($llVertex$.\inc.CAS(0, -1)) in \Lineref{sl3} is the LP point of \searchl{()} method.
    
    \item \searchg{(BG)}: (\vnode.\inc.CAS(0, -1)) in \Lineref{sg5} is the LP point of \searchg{()} method.
    
    \item \emph{remExecNode}(removeNode): \Lineref{ren3} is the LP point of \emph{remExecNode()} method.
\end{enumerate}

\begin{theorem}
    The BG captures all the dependencies between the conflicting nodes.
\end{theorem}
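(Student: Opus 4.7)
\noindent\textbf{Proof plan for Theorem~\ref{thm:BGdep}.}
The plan is to argue that for every pair of committed transactions $T_i, T_j \in H_m$ that conflict (either via \oconf{} in the \svotm setting or \mvoconf{} in the \mvotm setting), the final BG produced by the multi-threaded miner contains an edge $T_i \to T_j$ oriented from the lower timestamp to the higher timestamp. I will do this in three stages: (i) show the underlying STM correctly records each conflict, (ii) show that every recorded conflict is handed to \adde{()}, and (iii) show that \adde{()} installs the edge atomically in the BG.

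First, I would fix an arbitrary pair of committed transactions $T_i, T_j$ in $H_m$ that conflict on some shared data-item $k$. By the definition of conflicts given in \secref{bg} together with the data-structure description in \apnref{ap-sobjds}, once a transaction commits inside \tryc{()} the \svotm (resp.\ \mvotm) atomically appends its timestamp to the appropriate list (\emph{cL$_{list}$}/\emph{cU$_{list}$} for \svotm, or \emph{rvl}/version list for \mvotm) that is indexed by $k$. Because the update to those lists happens inside the commit critical section of the STM, I would argue that the later of the two transactions, say $T_j$, when it commits must read a state of the conflict metadata of $k$ that already reflects the commit of $T_i$, and hence when the miner thread for $T_j$ calls \emph{STM\_getConflictList}$(j)$ at \Lineref{cg3} the returned set contains $i$ (and symmetrically in the remaining cases of the conflict definition). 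This is the key correctness hook on the STM layer and I would prove it by a short case analysis on the three \oconf{} cases (resp.\ two \mvoconf{} cases).

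Next I would turn to the BG-construction code. Having established that $i$ appears in the conflict list obtained by the miner thread executing $T_j$, lines~\ref{lin:cg5}--\ref{lin:cg12} of $BG(\cdot,\mathrm{STM})$ unconditionally invoke \addv{()} for both endpoints and then \adde{()} oriented from the smaller to the larger timestamp. The linearization points identified for \addv{()} (the CAS at \Lineref{addv5} on success, or the equality test at \Lineref{addv3} when the node already exists) together with the lock-free retry loop guarantee that both vertices eventually appear exactly once in \vrtlist. Symmetrically, for \adde{()} the CAS at \Lineref{adde5} is the success LP and the test at \Lineref{adde3} handles the already-present case; the retry at \Lineref{adde11} re-searches with the updated \ep{} so the method is lock-free and cannot livelock indefinitely under the standard progress assumption. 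Thus the edge requested by $BG(\cdot)$ is eventually installed in the \eglist{} of the lower-timestamp endpoint, and the \emph{indegree} of the higher-timestamp endpoint is incremented atomically at \Lineref{adde6}.

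The step I expect to be the main obstacle is stage (i): precisely tying the commit linearization order of the STM to the order in which conflict metadata becomes visible to a concurrent miner thread that is just about to call \emph{STM\_getConflictList}. Two miners could be committing conflicting transactions essentially simultaneously, and I need to rule out the scenario where each reads a stale snapshot of the other's metadata and therefore neither inserts the edge. The cleanest way I would handle this is to appeal to the commit synchronization already used by \svotm/\mvotm (the per-data-item \emph{Lock} field in \figref{confgostm}/\figref{confgmvostm}): the thread committing the later transaction must acquire the lock after the earlier commit releases it, so by the time \emph{STM\_getConflictList} is invoked the conflict of the earlier committer is already recorded. With that lemma in hand, the theorem follows by combining the three stages: every conflict is recorded by the STM, every recorded conflict is passed to \adde{()}, and every call to \adde{()} eventually installs an oriented edge in the BG.
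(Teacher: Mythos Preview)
Your three-stage plan is correct and follows the same underlying idea as the paper, but is considerably more detailed than what the paper actually proves. The paper's own proof is a single short paragraph that simply asserts ``the underlying STM protocol ensures that all the dependencies have been covered correctly among the conflicting nodes of BG'' and concludes immediately; it does not spell out any of the commit-ordering argument, the case analysis on conflict types, or the lock-free CAS reasoning for \addv{}/\adde{} that you propose to carry out.

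In particular, the race you flag in stage~(i)---two conflicting transactions committing nearly simultaneously, with the risk that neither's \emph{STM\_getConflictList} sees the other---is not addressed at all in the paper's proof; the paper simply delegates that to the correctness of \svotm/\mvotm without comment. Your proposed resolution via the per-key \emph{Lock} field (so that the later committer's metadata read is ordered after the earlier committer's metadata write) is the right mechanism and would make the argument rigorous. Likewise, your use of the identified linearization points for \addv{} and \adde{} to argue lock-free termination of edge insertion goes beyond anything the paper writes down. So: same route, but you are filling in gaps the paper leaves open.
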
 

\begin{proof}
    \secref{bg} represents the construction of BG by \mthr miner using \svotm and \mvotm protocol. BG considers each committed \sctrn{} as a vertex and edges (or dependencies) depends on the used STM protocols such as \svotm and \mvotm. So, the underlying STM protocol ensures that all the dependencies have been covered correctly among the conflicting nodes of BG. Hence, all the dependencies between the conflicting nodes are captured in the BG.
\end{proof}

\begin{theorem}
    A history $H_m$ generated by the \mthr miner with \svotm satisfies co-opacity.
\end{theorem}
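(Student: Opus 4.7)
The plan is to exhibit a serial history $S$ that is conflict-equivalent and legal with respect to the complete history $\overline{H_m}$ obtained by completing $H_m$ (appending an abort for every live transaction, as in the standard opacity proof setup), and to show that $S$ respects the real-time order of $H_m$. Since co-opacity \cite{Peri+:OSTM:Netys:2018} is defined via conflict-equivalence to a legal serial history with real-time order preserved, constructing such an $S$ suffices.

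First I would set up the timestamp machinery. The SVOSTM assigns each transaction $T_i$ a unique timestamp $i$ at \begt{()}, and, as described in Section~\ref{sec:cminer} and the conflict-list data structure of Appendix~\ref{apn:ap-sobjds}, the underlying STM aborts any transaction that would violate the timestamp discipline: a successful \tryc{()} commits only if, for every key $k$ it updates, $i > k.max_U$ and $i > k.max_L$, and for every key it lookups, $i > k.max_U$; similarly lookups are only permitted if they would read the currently committed value. I would use this to define $S$ as the ordering of committed transactions in $\comm{H_m}$ by their commit timestamps, with each aborted transaction inserted immediately before its abort event preserving its real-time position (this handling of aborted transactions is the step where I must be most careful).

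Next I would verify the three properties.
\begin{enumerate}
\item \emph{Equivalence}: by construction $S$ and $\overline{H_m}$ contain exactly the same events.
\item \emph{Conflict-equivalence}: for any pair $T_i,T_j$ with $i<j$ that have an \oconf (one of the three edge types defined in Section~\ref{sec:bg}), I must show $T_i$ precedes $T_j$ in $S$. This follows from the timestamp guard in \tryc{()}: had the conflicting operations been ordered in the opposite way at their linearization points, one of $T_i,T_j$ would have been aborted by the $max_L/max_U$ check, contradicting its presence as a committed conflict.
\item \emph{Legality}: every successful $rv_j(k,v)$ must return the value written by the closest preceding committed updater of $k$ in $S$. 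Because SVOSTM keeps a single version and the commit-time check rejects any $rv_j(k,\cdot)$ whose returned writer is not the latest committed one preceding $T_j$ in timestamp order, this is immediate.
\item \emph{Real-time order}: if $T_i$ completes before $T_j$ begins in $H_m$, then $T_j$ receives a strictly larger timestamp, so $T_i <_S T_j$.
\end{enumerate}

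The main obstacle will be the aborted-transaction case, because co-opacity (unlike plain serializability) forces me to place aborted transactions into $S$ so that every $rv$ they performed before aborting is also legal, and so that no conflict they induced is reversed. I would handle this by observing that, in SVOSTM, every $rv$ operation of a transaction is validated against the committed state at the time of its linearization; so inserting an aborted $T_k$ immediately before its abort event, after all transactions with smaller timestamps that committed before $T_k$'s abort, yields a point in $S$ at which its $rv$ reads are legal. A short case analysis over the three conflict types then closes the conflict-equivalence argument, giving co-opacity of $H_m$.
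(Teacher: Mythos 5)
Your route is genuinely different from the paper's, and much heavier. The paper's own proof is a two-line appeal to prior work: the \mthr miner's execution is (by construction) a history generated by the \svotm protocol, the instrumentation added for building the \blg does not alter the protocol's synchronization, and Peri et al.\ have already proved that every history generated by \svotm is co-\opq; hence $H_m$ is co-\opq. You instead set out to re-derive that correctness result from the timestamp mechanics (the $max_L$/$max_U$ guards), i.e.\ to reprove the cited theorem rather than invoke it. That is legitimate in principle, but it buys nothing over the citation and it obliges you to get every detail of the \svotm serialization argument right yourself.

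And there is a concrete gap in exactly the step you flagged: placing an aborted transaction in $S$ ``immediately before its abort event'' can reverse the order of an \oconf, which destroys the conflict-equivalence you need for co-\opty. Consider an aborted $T_a$ with timestamp $a$ that performs a successful $rv_a(k,v)$; afterwards a transaction $T_j$ with $j>a$ commits an update to $k$ (the protocol permits this, since $j$ exceeds $k$'s counters), and only later does $T_a$ abort, say at its own \tryc{()} on a different key. In $H_m$ the conflict $rv_a(k,v)$ precedes $T_j$'s \tryc{()}, so conflict-equivalence forces $T_a$ before $T_j$ in $S$; but your rule places $T_a$ at its abort point, i.e.\ after $T_j$. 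The repair is to serialize \emph{all} transactions, aborted ones included, consistently with the conflict order (for \svotm, by timestamp), and then re-establish legality of the aborted transactions' lookups at that position using the $max_L$ guard --- which is precisely the content of the \svotm co-\opty proof in \cite{Peri+:OSTM:Netys:2018}. Once you do that, your argument collapses into a reproof of the cited theorem, and the only miner-specific fact left to state is the one the paper's proof rests on: the miner threads interact with shared state solely through the (conflict-recording but otherwise unmodified) \svotm operations, so that theorem applies to $H_m$.
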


\begin{proof}
 Multiple miner threads execute \sctrn{s} concurrently using \svotm and generate a concurrent history $H_m$. The underlying \svotm protocol ensures the correctness of concurrent execution of $H_m$. \svotm proves that any history generated by it satisfies co-opacity \cite{Peri+:OSTM:Netys:2018}. So, implicitly \svotm proves that the history $H_m$ generated by \mthr miner using \svotm satisfies co-opacity. 
\end{proof}

\begin{theorem}
\label{thm:hmo}
    A history $H_m$ generated by a \mthr miner with \mvotm satisfies opacity.
\end{theorem}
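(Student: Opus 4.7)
The plan is to reduce this theorem to the correctness of the underlying \mvotm library, which has been shown to satisfy opacity in the cited work of Juyal et al.~\cite{Juyal+:MVOSTM:SSS:2018}. The miner algorithm (\algoref{cminer}) wraps each \sctrn into an \mvotm transaction, and the only extra work it performs outside the transactional methods---namely the lock-free \emph{addVertex()} and \emph{addEdge()} calls into the \blg library---does not touch shared data-items managed by \mvotm. So the subhistory of $H_m$ restricted to \mvotm events coincides with some history produced by a standard client of \mvotm, and the BG-building events can be interleaved arbitrarily without affecting the transactional argument.

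First I would fix the linearization points of each \mvotm method, as is standard: \begt{()} at its invocation (which also atomically assigns the timestamp $i$ used throughout), each successful \tlu{()} at the moment it reads a particular version from the version list, each \tins{()}/\tdel{()} at its invocation into the local log, and each successful \tryc{()} at the point where it commits all its updates and installs new versions (in practice at the CAS that installs the version into the version list of the last updated data-item); aborting methods linearize at the point where they detect the abort condition. Using these LPs I would linearize the events of $H_m$ into a sequential history $\bar H_m$ with the same set of events.

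Next I would construct the equivalent serial history $S$ by ordering committed transactions in increasing order of their timestamps; aborted and live transactions are appended in the order their abort is detected (each made into a trivial prefix as required by opacity). The key step is then to verify the three conditions for opacity: (i)~$S$ and the completion of $H_m$ have the same events, which holds by construction; (ii)~real-time order is preserved, which follows because the timestamps are handed out atomically at \begt{()} and an \mvotm transaction $T_i$ can commit only after its \begt{()}, so a transaction that terminates before another's \begt{()} necessarily has a smaller timestamp; (iii)~$S$ is legal, i.e., every successful $rv_j(k,v)$ returns the value written by the last committed update on $k$ preceding $T_j$ in $S$. This last point is the heart of the argument and is where I would invoke the version-selection rule and validation checks of \mvotm, which guarantee that a lookup of $T_j$ on key $k$ returns exactly the version written by the largest-timestamp committed transaction smaller than $j$ (or $T_0$ otherwise), and that any transaction whose read would be inconsistent with this rule is forced to abort by the validation at \tryc{()}.

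The main obstacle is condition (iii): showing legality in the single-version sense despite \mvotm maintaining multiple versions. Concretely I expect the work to lie in arguing that there is no committed $T_m$ with timestamp strictly between those of the writer $T_i$ and the reader $T_j$ which itself updates $k$---because the version-selection rule would have forced $T_j$ to read from $T_m$ instead, contradicting that $T_j$ read the value installed by $T_i$. This is precisely the lemma established in~\cite{Juyal+:MVOSTM:SSS:2018}; I would cite it and then combine (i)--(iii) to conclude $H_m$ is opaque. Finally, I would note that the BG-construction events can be inserted into $S$ at their real-time positions without affecting any of the three conditions, since they access only BG structures disjoint from the transactional data-items.
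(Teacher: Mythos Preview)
Your proposal is correct and follows the same approach as the paper: both reduce the theorem to the opacity guarantee of the underlying \mvotm library established in~\cite{Juyal+:MVOSTM:SSS:2018}, noting that the miner merely wraps each \sctrn in an \mvotm transaction while the \blg-construction steps touch only disjoint data structures. The paper's proof is essentially a direct citation of that result, whereas you additionally sketch the internals of the cited opacity argument (timestamp-ordered serialization, legality via the version-selection rule); this extra detail is sound but goes beyond what is needed here.
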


\begin{proof}
In order to achieve the greater concurrency further, a \mthr miner uses the \mvotm protocol, which maintains multiple versions corresponding to each shared data-item. \mvotm ensures the correct concurrent execution of the history $H_m$ with the equivalent serial history $S_m$. Any history generated by \mvotm satisfies opacity \cite{Juyal+:MVOSTM:SSS:2018}. So, implicitly \mvotm proves that the history $H_m$ generated by \mthr miner using \mvotm satisfies opacity.
\end{proof}

\begin{theorem}
\label{thm:hmve}
    	A history $H_m$ generated by the \mthr miner with \svotm and history $H_v$ generated by a \mthr validator are view equivalent.
\end{theorem}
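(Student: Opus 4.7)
\textbf{Proof Proposal for Theorem \ref{thm:hmve}.}

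The plan is to argue that both $H_m$ and $H_v$ are conflict equivalent to a common serial history $S$ obtained from a topological sort of the Block Graph, and then lift this to view equivalence. First, I would invoke \thmref{co-ostm}, which guarantees that $H_m$ (generated by the \mthr miner using \svotm) satisfies co-opacity. By the definition of co-opacity, there exists a serial history $S_m$ that is conflict equivalent to $H_m$, preserves real-time order, and contains exactly the committed transactions of $H_m$ together with complete aborted ones. Next, by \thmref{BGdep}, every \oconf{} edge $T_i \prec_{H_m}^{\oconf} T_j$ between committed transactions is recorded as an edge $T_i \to T_j$ in the BG, and edges go from smaller to larger timestamp, so the BG is a DAG whose topological orderings are precisely the serial histories conflict equivalent to $H_m$ on the committed projection; pick one such ordering and call it $S$.

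Then I would analyze the \mthr validator. By construction, a validator thread executes the scFun of a vertex only after its \emph{indegree} reaches $0$, i.e., after every predecessor in the BG has completed; this is enforced by \remex{()} decrementing \emph{indegree} only after scFun completes, and by \searchg{()}/\searchl{()} atomically claiming a node via CAS on \emph{indegree}. Hence the order in which the validator executes (and commits) the SCTs is a topological linearization of the BG. Moreover, since conflicting SCTs (those sharing a BG edge) are strictly serialized by this mechanism, the validator's execution is equivalent to some serial execution $S_v$ that is also a topological order of the BG. Because any two topological orderings of a DAG whose edges include every \oconf{} in $H_m$ are conflict equivalent on the committed transactions, we get $S_v \equiv_{\mathit{conf}} S \equiv_{\mathit{conf}} H_m$ (on their committed projections, which coincide since the validator executes exactly the same SCTs).

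From this conflict equivalence, I would derive view equivalence. Conflict equivalence of two legal histories over the same set of events implies that each $rv()$ method returns the value of the same update, because the relative order of every \emph{read-from} pair and every \emph{write-write} pair agrees in both histories; formally, for every successful $rv_j(k,v)$ in $H_v$, the last committed transaction that updated $k$ before $T_j$ in $H_v$ is the same as in $H_m$, namely the unique predecessor of $T_j$ on $k$ in the BG. This directly matches the legality requirement of \secref{model}. Since $H_v$ and $H_m$ contain the same events and both are legal with matching return values, they are view equivalent by definition.

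The main obstacle is the handoff step between the miner's STM-level conflict order and the validator's deterministic BG-guided order: I must ensure that no \oconf{} that affected the miner's committed outcome is absent from the BG (handled by \thmref{BGdep}), and conversely that the validator's concurrency, which proceeds without any STM validation, cannot introduce a new effective ordering that disagrees with the BG. The latter reduces to verifying that two validator threads can simultaneously execute only SCTs with no BG path between them, which is enforced by the $\mathit{indegree}$ mechanism; I would spell this out as a lemma that the validator's execution is equivalent to the serial execution in the topological order in which vertices successfully CAS their \emph{indegree} to $-1$. Once this linearization lemma is in place, the equivalence to $S$ and hence to $H_m$ follows routinely.
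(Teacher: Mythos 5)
Your proposal is correct and follows essentially the same route as the paper's proof: the \blg captures every \oconf{} of $H_m$ (\thmref{BGdep}), the validator's execution is a topological linearization of the \blg, so $H_v$ has the same events and the same return-value-from relations as $H_m$, giving legality and hence view equivalence. You merely make explicit what the paper leaves implicit -- the detour through conflict equivalence via \thmref{co-ostm} and the linearization lemma for the \emph{indegree}/CAS mechanism -- which strengthens rather than changes the argument.
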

\begin{proof}
\Mthr miner execute the \sctrn{s} concurrently using \svotm protocol to propose a block and generates $H_m$ along with BG. After that \mthr miner broadcasts $H_m$ and BG to \mthr validators to verify the proposed block. \Mthr validator applies the topological sort on BG and obtained an equivalent serial schedule $H_v$. Since BG constructed from $H_m$ while considering all the \oconf{s} and $H_v$ obtained from the topological sort on BG. So, $H_v$ will be equivalent to $H_m$. Similarly, $H_v$ will also follow the \emph{return value from} relation to $H_m$. Hence, $H_v$ is legal. Since $H_v$ and $H_m, $ are equivalent to each other, and $H_v$ is legal. So, $H_m$ and $H_v$ are view equivalent. 
\end{proof}

\begin{theorem}
A history $H_m$ generated by the \mthr miner with MVOSTM and history $H_v$ generated by a \mthr validator are multi-version view equivalent.
\end{theorem}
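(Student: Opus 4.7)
The plan is to mirror the argument used for Theorem~\ref{thm:hmve} but replace legality (single-version) with validity (multi-version), exploiting the two edge types added to BG by the MVOSTM miner. By Theorem~\ref{thm:o-mvostm} the miner's history $H_m$ is opaque, and by Theorem~\ref{thm:BGdep} every mvconflict between committed transactions (both \emph{rvf} and \emph{mv} edges) is recorded in BG. The validator realizes a topological sort of BG via repeated \searchg/\remex calls, producing a serial history $H_v$ over the committed transactions of $H_m$.

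First I would establish equivalence of the event sets. Each committed $T_i$ in $H_m$ contributes exactly one vertex (\Lineref{addv} of Algorithm~\ref{alg:cminer}), and the validator executes every vertex of BG exactly once, without any retry/abort path. Hence $\evts{H_v}=\evts{H_m|_{\comm{H_m}}}$ and the two histories are equivalent in the sense of Appendix~\ref{apn:ap-model}.

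Next I would show that $H_v$ is \emph{valid}, i.e.\ every successful $rv_j(k,v)$ in $H_v$ returns the value written by some committed transaction of $H_v$. Fix such an $rv_j(k,v)$; in $H_m$ it returned the value $v$ written by some committed $T_i$ with $k\in updSet(T_i)$, so BG contains the rvf edge $T_i\to T_j$ (Section~\ref{sec:bg}). The topological sort therefore places $T_i$ before $T_j$ in $H_v$, so $T_i$'s update is available in shared memory when the validator executes $T_j$. The main obstacle is ruling out an intervening updater: suppose, for contradiction, that some committed $T_\ell$ with $k\in updSet(T_\ell)$ is scheduled by the validator strictly between $T_i$ and $T_j$ in $H_v$. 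Then $(STM\_tryC_i,rv_j(k,v),STM\_tryC_\ell)$ forms an mv-configuration in $H_m$, so by the mv-edge rule BG contains either $T_\ell\to T_i$ or $T_j\to T_\ell$. Either edge contradicts the ordering $T_i\prec_{H_v}T_\ell\prec_{H_v}T_j$ chosen by the topological sort, since BG is acyclic. Hence no such $T_\ell$ exists, and the validator's lookup of $k$ in $T_j$ returns $v$, written by the committed $T_i$ in $H_v$.

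Combining the two parts, $H_v$ and $H_m$ are equivalent and $H_v$ is valid; by the definition in Section~\ref{sec:model} (and Appendix~\ref{apn:ap-model}) they are multi-version view equivalent, which is the claim. I expect the mv-edge case analysis above to be the only nontrivial step; the equivalence of events and the rvf argument follow directly from the BG construction and Theorem~\ref{thm:BGdep}.
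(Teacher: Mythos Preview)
Your proposal is correct and follows the same skeleton as the paper's proof in Appendix~\ref{apn:ap-correctness}: obtain $H_v$ as a topological sort of the BG, argue equivalence with (the committed part of) $H_m$, and use the rvf edges to conclude validity of $H_v$. The paper's argument stops at ``$T_i$ committed before $rv_j(k)$ in $H_v$, therefore $H_v$ is valid,'' whereas you additionally carry out the mv-edge case analysis to rule out an intervening updater $T_\ell$; this is the step that actually guarantees the validator's single-version replay of $rv_j$ returns the \emph{same} value $v$ as in $H_m$, and hence that the event sets truly coincide---a point the paper asserts but does not justify.

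Two small presentational points. First, your opening paragraph concludes $\evts{H_v}=\evts{H_m|_{\comm{H_m}}}$ after only matching up the \emph{transactions}, not yet the return values; the claim only becomes true after your second paragraph's mv-edge argument, so you should defer it. Second, $H_v$ is produced by a multi-threaded validator, not a sequential one; your reduction to a serial topological order is fine (independent vertices in BG touch disjoint keys, so any interleaving yields the same events), but it is worth one sentence to say so explicitly.
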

\begin{proof}
Following the proof of \thmref{hmve}, \mthr miner executes $H_m$ using \mvotm and constructs the BG. \mvotm maintains multiple-version corresponding to each shared data-item while executing the \sctrn{s} by \mthr miner. Later, \mthr validator  obtained $H_v$ by applying topological sort on BG given by miner. Since, $H_v$ obtained from topological sort on BG so, $H_v$ will be equivalent to $H_m$. Similarly, BG maintains the \emph{return value from} relations of $H_m$. So, from \mvotm protocol if $T_j$ returns a value of the method for shared data-item $k$ say $rv_j(k)$  from $T_i$ in $H_m$ then $T_i$ committed before $rv_j(k)$ in $H_v$. Therefore, $H_v$ is valid. Since $H_v$ and $H_m$ are equivalent to each other and $H_v$ is valid. So, $H_m$ and $H_v$ are multi-version view equivalent. 
\end{proof}

\begin{theorem}
    \label{thm:apHmm}
    Smart \Mthr Validator rejects  malicious blocks with \blg that allow concurrent execution of dependent \sctrn{s}.
\end{theorem}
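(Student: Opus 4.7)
The plan is to argue by contradiction: suppose a malicious miner $mm$ produces a block with \blg $BG$ in which two SCTs $T_i$ and $T_j$ conflict on some shared key $k$ (so a dependency edge between them is required), but the edge is omitted in $BG$. Because the edge is missing, the validator's \searchg{()} / \searchl{()} procedures can legitimately claim both vertices as sources simultaneously, so there exist two SMV threads $Th_x$ and $Th_y$ that execute $T_i$ and $T_j$ concurrently. I would then show that at least one of these two threads detects an inconsistency via the counter checks in \algoref{cminermc} and reports the miner as malicious.

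First, I would establish the key invariant maintained by \scv: for every key $k$ and every in-flight transaction $T_i$ executed by thread $Th_x$, every successful operation of $T_i$ on $k$ increments exactly one of $\luc{k}{i}$ or $\llc{k}{i}$ together with its global counterpart $\guc{k}$ or $\glc{k}$ (\Lineref{mc7}--\Lineref{mc711} for lookups, \Lineref{mc15}--\Lineref{mc722} and the symmetric delete case for updates), and at commit the thread decrements the global counters by the local counts (\Lineref{mc34}). Consequently, if no other thread is concurrently operating on $k$, then whenever $Th_x$ evaluates the guards at \Lineref{mc6}, \Lineref{mc14}, or \Lineref{mc22}, the equalities $\guc{k}=\luc{k}{i}$ and $\glc{k}=\llc{k}{i}$ hold. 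This invariant follows from the atomicity of the $get\&Inc$ style updates on the global counters and from the fact that the local counters are private to $Th_x$.

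Next I would do case analysis on the conflict between $T_i$ and $T_j$ on $k$. By definition of \oconf\ (and \mvoconf), at least one of the two transactions performs an update on $k$; WLOG say $T_j$ performs an update, and $T_i$ performs either a lookup or an update. Consider the first event, in real time, at which either $Th_x$ or $Th_y$ reads a global counter for $k$ through a guard in \algoref{cminermc}; call this thread $Th_z$ and the other $Th_{\bar z}$. Before this first event occurs, the other thread $Th_{\bar z}$ may have already incremented $\guc{k}$ or $\glc{k}$ on behalf of its own transaction (since only conflicting accesses are at issue), but it will not have decremented them (decrementing happens only at \Lineref{mc34}, after every step of the scFun). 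Thus the global counter observed by $Th_z$ exceeds its local counter by the number of conflicting operations of $Th_{\bar z}$; since $T_i$ and $T_j$ conflict on $k$, at least one such operation is guaranteed, making the guard fail and $Th_z$ return \emph{Miner is malicious}. If instead $Th_z$'s read happens first, i.e.\ $Th_{\bar z}$ has not yet touched the counters, then $Th_z$ passes and increments; now when $Th_{\bar z}$ subsequently performs its conflicting access on $k$ and evaluates its guard, it sees $\guc{k}>\luc{k}{j}$ (if the conflict involves an update) or $\glc{k}>\llc{k}{j}$ (symmetrically), and its guard fails. The core observation is that the two conflicting accesses cannot both pass their respective guards, because the total number of in-progress increments on $k$ at the moment of the later guard is strictly larger than the reporting thread's local count.

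The main obstacle I expect is the interleaving of counter reads, increments, and the commit-time decrement at \Lineref{mc34}. In particular, I need to rule out the scenario where $Th_{\bar z}$ increments, finishes its scFun, and decrements before $Th_z$ evaluates its guard, masking the conflict. I would handle this by observing that the conflicting operations of $T_i$ and $T_j$ on $k$ must be temporally overlapping with respect to the lifetime of at least one of the two transactions on $k$: \algoref{cminermc} only decrements the global counters \emph{after} all steps of scFun have executed (\Lineref{mc34}), so as long as both threads perform their $k$-operations during the scFun loop, the later one will still see the elevated counter. The only remaining edge case is that a thread could read the counter twice for the same key at separated points in scFun; but the local counter grows in lockstep with the global counter under non-conflicting executions, and any intervening conflicting operation from another thread widens the gap irreversibly until its commit. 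Formalizing this monotonicity argument and the claim that the gap cannot close before the later guard fires is the technical crux; once it is in place, at least one of the threads $Th_x, Th_y$ detects the malicious block and rejects it, completing the proof.
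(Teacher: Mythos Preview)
Your proposal is far more detailed than the paper's own proof, which is essentially a two-sentence pointer: it cites the guard checks at \Lineref{mc6}, \Lineref{mc14}, \Lineref{mc22} of \algoref{cminermc}, refers back to the prose description of \scv\ in \apnref{ap-mm}, and then appeals to the experimental results in \apnref{ap-rresult} as evidence that malicious blocks are rejected. There is no invariant stated, no case analysis, and no treatment of interleavings; the paper treats the claim as essentially self-evident from the algorithm plus empirical confirmation.

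Your route --- establishing the counter invariant, doing case analysis on which side of the conflict is an update, and reasoning about which thread's guard fires first --- is the right way to turn that sketch into an actual argument, and it surfaces a subtlety the paper ignores entirely: the commit-time decrement at \Lineref{mc34}. You correctly flag the scenario in which $Th_{\bar z}$ increments, runs its entire scFun, and decrements before $Th_z$ ever evaluates its guard on $k$. Your attempted closure (``as long as both threads perform their $k$-operations during the scFun loop'') is not a proof, because nothing in the algorithm forces that temporal overlap; it is perfectly possible for one claimed source vertex to be fully processed before the other thread reaches its conflicting step on $k$. In that execution the counter mechanism detects nothing. What saves the overall claim is that such an execution is effectively serial on $k$, so the validator computes a legitimate final state that will disagree with the malicious miner's fabricated $FS_m$, and the block is rejected by the state-comparison step rather than by the counter check. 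The paper never makes this two-pronged rejection argument explicit; if you want a rigorous proof you should split into (i) the conflicting accesses on $k$ overlap, where your counter-gap argument goes through, and (ii) they do not overlap, where rejection follows from the final-state mismatch.
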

\begin{proof}

With the help of global $counter$ \emph{Smart \Mthr Validator (SMV)} identifies the concurrent execution of dependent \sctrn{s} at \Lineref{mc6}, \Lineref{mc14}, and \Lineref{mc22} of \algoref{cminermc} and reject the malicious block. Detail description of \scv is available in \apnref{ap-mm}. We have tested our proposed counter-based approach with the existence of malicious block shown in \apnref{ap-rresult}. \scv straightforward reject the malicious block and notify to the other nodes part of the network. Hence, \scv rejects malicious blocks with \blg that allow concurrent execution of dependent \sctrn{s}.
\end{proof}

\cmnt{
\subsection{The Linearization Points of Lock-free Graph Library Methods}

Here, we list the linearization points (LPs) of each method as follows:

\begin{enumerate}
\item \addv{(\vgn)}: (\vp.\vn.CAS(\vc, \vgn)) in \Lineref{addv5} is the LP point of \addv{()} method if \vnode{} is not exist in the BG. If \vnode{} is exist in the BG then (\vc.$ts_i$ $\neq$ \vgn.$ts_i$) in \Lineref{addv3} is the LP point.

\item \adde{\emph{(fromNode, toNode)}}:         (\ep.\en.CAS(\ec, \egn)) in \Lineref{adde7} is the LP point of \adde{()} method if \enode{} is not exist in the BG. If \enode{} is exist in the BG then         (\ec.$ts_i$ $\neq$ toNode.$ts_i$) in \Lineref{adde5} is the LP point.
 
\item \searchl{(cacheVer, $AU_{id}$)}: (cacheVer.\inc.CAS(0, -1)) in \Lineref{sl1} is the LP point of \searchl{()} method.

\item \searchg{(BG, $AU_{id}$)}: (\vnode.\inc.CAS(0, -1)) in \Lineref{sg1} is the LP point of \searchg{()} method.

\item \emph{decInCount}(remNode): \Lineref{ren1} is the LP point of \emph{decInCount()} method.
\end{enumerate}
}
\section{Detailed Experimental Evaluation}
\label{apn:ap-rresult}

This section presents a detailed description of the benchmark contracts that we have considered in this paper. It also includes the additional experiments which show the performance benefits of proposed \mthr miner and validator over state-of-the-art miners and validators on various workloads. Along with this, we proposed \emph{smart \mthr validator} to identify malicious miners.

\noindent
\textbf{Smart Contracts:} Clients (possibly different) send transactions to the miners in the form of complex code known as smart contracts. It provides several complex services such as managing the system state, ensuring rules, or credentials checking of the parties involved. \cite{Dickerson+:ACSC:PODC:2017}. For better understanding, we have described \emph{Coin, Ballot, Simple Auction} Smart Contracts from Solidity documentation \cite{Solidity}. We consider one more smart contract as \emph{Mix Contract}, which is the combination of the three contracts as mentioned above in equal proportion and seems more realistic.

\noindent(1) \textit{Coin Contract:} It is a sub-currency contract which implements simplistic form of a cryptocurrency and is used to transfer coins from one account to another account using  \emph{send()}, or used to check the account balance using \emph{get\_balance()} function. Accounts (unique addresses in Ethereum) are shared objects. A conflict will occur when two or more transaction consists of at least one common account, and one of them is updating the account balance.

\algoref{cc1} shows the functionality of the coin contract, where \textit{mint()}, \textit{send()}, and \textit{get\_balance()} are the functions of the contract. These functions can be called by the miners or through other contracts. It initialized by the contract creator (or contract deployer) to a special public state variable \textit{minter} (\Lineref{c2}). Accounts are identified by Solidity mapping data structure essentially a $\langle \emph{key-value} \rangle$ pair (\Lineref{c3}), where a key is the unique Ethereum address and value is unsigned integer depicts the coins (or balance) in respective account. Initially, the contract deployer (aka \textit{minter}) creates new coins and allocate it to each receiver (\Lineref{c9}).  

Further, \emph{send()} function is used to transfer the coin from the sender account to the receiver account. The function ensures that the sender has sufficient balance in his account (\Lineref{c11}). If sufficient balance found in the sender's account, the coin transferred from the sender account to the receiver account. By calling \textit{get\_balance()}, anyone can query the specific account balance (\Lineref{c15}).


\noindent(2) \textit{Ballot Contract:} This contract is used to organize electronic voting where voters and proposals are the shared objects and stored at unique Ethereum addresses. At the beginning of voting, the chairman of the ballot gives rights to voters to vote. Later, voters either delegate their vote to other voter using \emph{delegate()} or directly vote to specific proposal using \emph{vote()}. Voters are allowed to delegate or vote only once per ballot. A conflict will occur when two or more voters vote for the same proposal, or they delegate their votes to the same voter simultaneously.

\vspace{.2cm}
\setlength{\intextsep}{0pt}
\begin{algorithm}[!htb]
	\scriptsize
	\caption{Coin(): A sub-currency contract used to depict the simplest form of a cryptocurrency.}
	\label{alg:cc1}
	\begin{algorithmic}[1]
		\makeatletter\setcounter{ALG@line}{163}\makeatother
		\Procedure{$Coin()$}{} \label{lin:c1}
		\State address public minter;/*Minter is a unique public address*/\label{lin:c2}
		\State /*Map $\langle \emph{key-value} \rangle$ pair of hash-table as $\langle \emph{address-balance} \rangle$*/
		\State mapping(address $=>$ uint) balances. \label{lin:c3} 
		
		\State \textbf{Constructor()} public\label{lin:c4}
		\State{\hspace{.3cm} minter = msg.sender.} /*Set the sender as minter*/\label{lin:c5}
		
		\Function {}{}mint(address receiver, uint amount )\label{lin:c6}
		\If{(msg.sender == minter)} \label{lin:c7}
		\State /*Initially, add the balance into receiver account*/
		\State balances[receiver] += amount. \label{lin:c9}    
		\EndIf
		
		\EndFunction
		
		\Function {}{}send(address receiver, uint amount)\label{lin:c10}
		\State /*Sender don't have sufficient balance*/
		\If{(balances[msg.sender] $<$ amount)} \label{lin:c11}
		return $\langle fail \rangle$;\label{lin:c12}
		\EndIf
		\State balances[msg.sender] -= amount;\label{lin:c13}
		\State balances[receiver] += amount;\label{lin:c14}
		\EndFunction
		
		\Function{}{}get\_balance(address account)\label{lin:c15}
		\State return $\langle$balance$\rangle$;
		\EndFunction
		\EndProcedure
		
	\end{algorithmic}
\end{algorithm}

\noindent(3) \textit{Simple Auction Contract:} In this contract, an auction is conducted where a bidder places their bids. Here bidders, maxBid, maxBidder, and auction end time are the shared object which can be accessed by multiple threads. The auction will end when the bidding period (or end time) of the auction is over. The auction end time is initialized at the beginning by the auction master. A \emph{bid()} function is used to bid the amount by a bidder for the auction. In the end, the bidder with the highest bid amount will be the winner, and all other bidders amount is then returned to them using \emph{withdraw()}. A conflict will occur when more than two bidders try to bid using \emph{bidPlusOne()} at the same time.

\noindent(4) \textit{Mix Contract: }In this contract, aforementioned smart contracts are executed simultaneously. This contract is designed to show real-time scenarios in which a block consists of \sctrn{s} from different contracts. For the experiment, we combined \sctrn{s} for three contracts in a block.

\noindent\textbf{Performance Analysis:} To analyze the proposed approach further, we show the performance analysis on remaining benchmark contracts and workloads. Additionally, we consider one more workload W3, in which the number of shared objects (data-items) vary from 100 to 600, while threads, \sctrn{s}, and hash-table size are fixed to 50, 100, and 30, respectively. For W3, with the increase in the number of shared objects, contention will decrease. 

In \figref{all-speedup-coin} to \figref{all-speedup-auction} numbering (a), (b), (c) show the \mthr miner speedup over serial miner on W1, W2, and W3 for coin, ballot, and auction contract respectively. Further, (d), (e), (f) shows the smart \mthr validator speedup over serial validator on W1, W2, and W3 for the coin, ballot, and auction contract respectively. It shows that the speedup decreases for \mthr miner on W1; however, it increases on W2 on all benchmark contracts. The observation for W1 and W2 are the same as explained in \secref{result}. Finally, \figref{w3-mix} (a) and \figref{w3-mix} (b) shows the speedup for \mthr miner and SMV for mix contract on workload W3.

For W3, as shown in \figref{all-speedup-coin} (c), \figref{all-speedup-ballot} (c), \figref{all-speedup-auction} (c), and \figref{w3-mix} (a) the speedup increase for \mthr miner with increase in shared objects (contention decreases). However in mix contract (as shown in \figref{w3-mix} (a)) small decrements for \bto and \mvto miner can be observed. Also, static bin miner is performing worse than serial due to the overhead of static conflict prediction before executing \sctrn{s} speculatively. Similarly \figref{all-speedup-coin} (f), \figref{all-speedup-ballot} (f), \figref{all-speedup-auction} (f), and \figref{w3-mix} (b) shows the speedup achieve by \scv over serial validator on W3. The speedup of bin-based \scv is less than STM validator. Thus for the better visualization, we have shown speedup for STM validator on $y1$ axis whereas for bin-based \scv on $y2$ axis. As we can observe, \mvotm \scv outperforms all other validators; however, performance decreases with increasing shared objects. 

\begin{figure}
	\includegraphics[width=\textwidth]{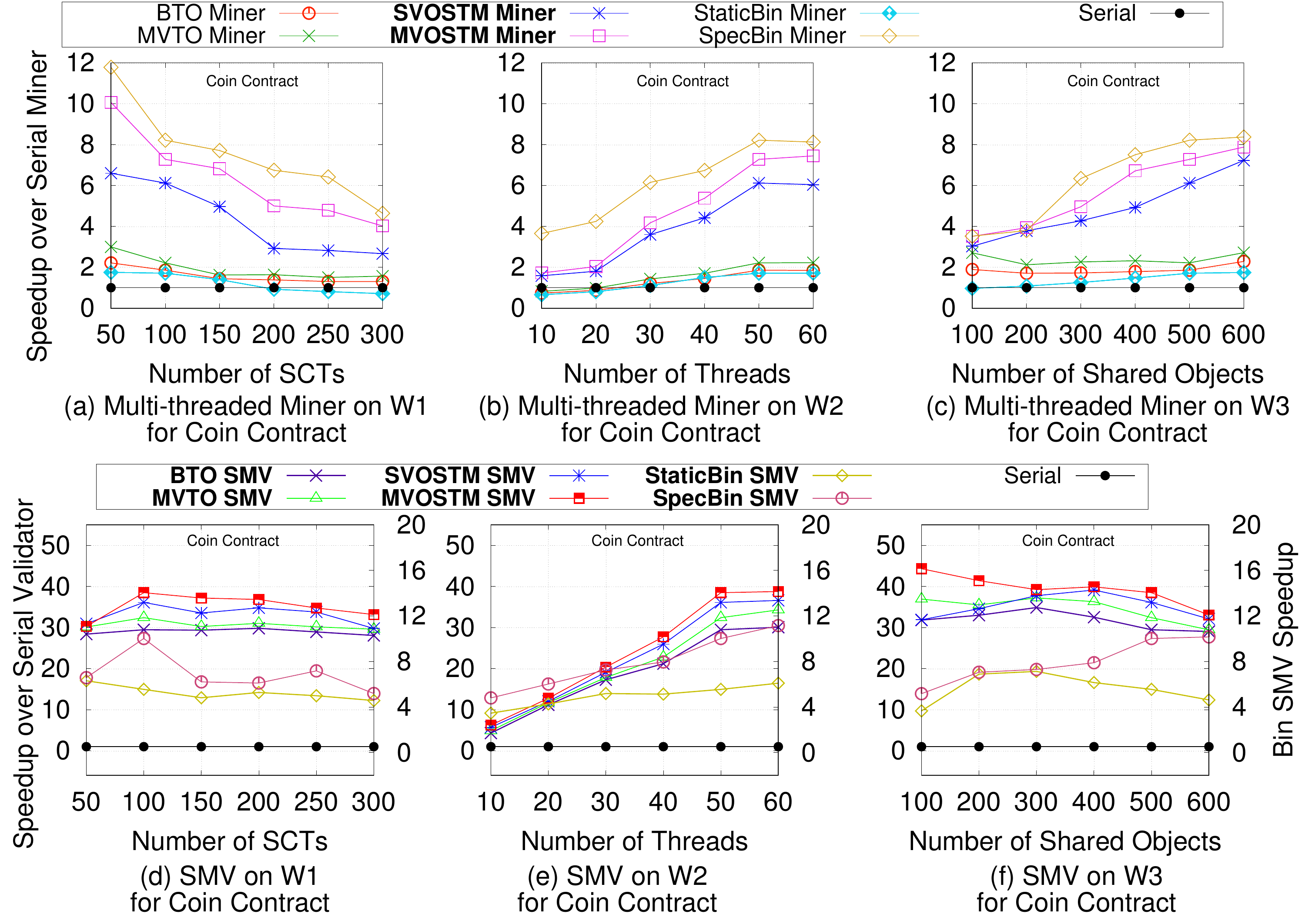}\vspace{-.3cm}
	 \caption{\Mthr Miner and Smart \Mthr Validator Speedup Over Serial Miner and Validator Across all Workloads for Coin Contract}
	\label{fig:all-speedup-coin}

\vspace{.25cm}

	\includegraphics[width=\textwidth]{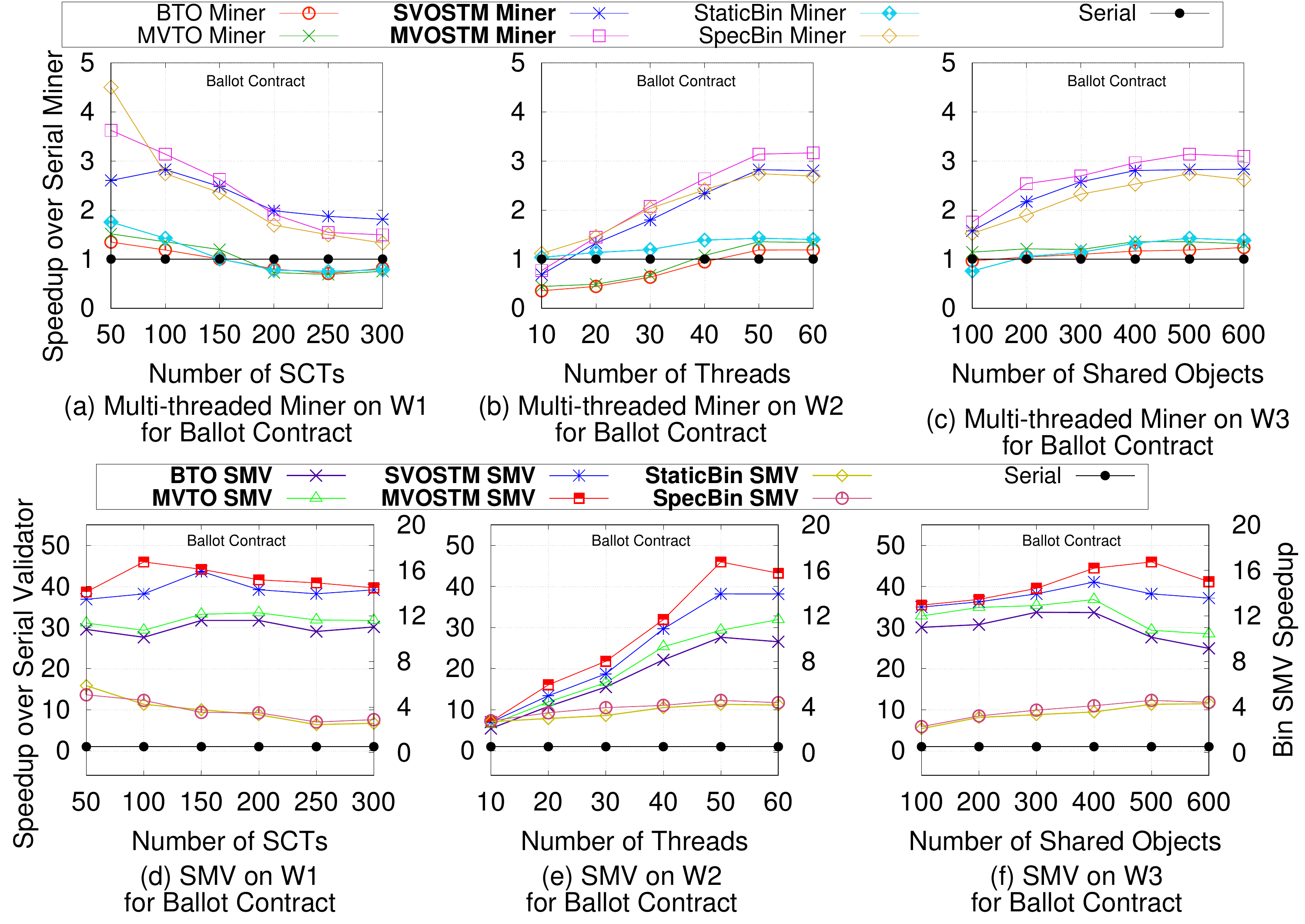}\vspace{-.3cm}
	 \caption{\Mthr Miner and Smart \Mthr Validator Speedup Over Serial Miner and Validator Across all Workloads for Ballot Contract}
	\label{fig:all-speedup-ballot}
\end{figure}

\begin{figure}
	\includegraphics[width=\textwidth]{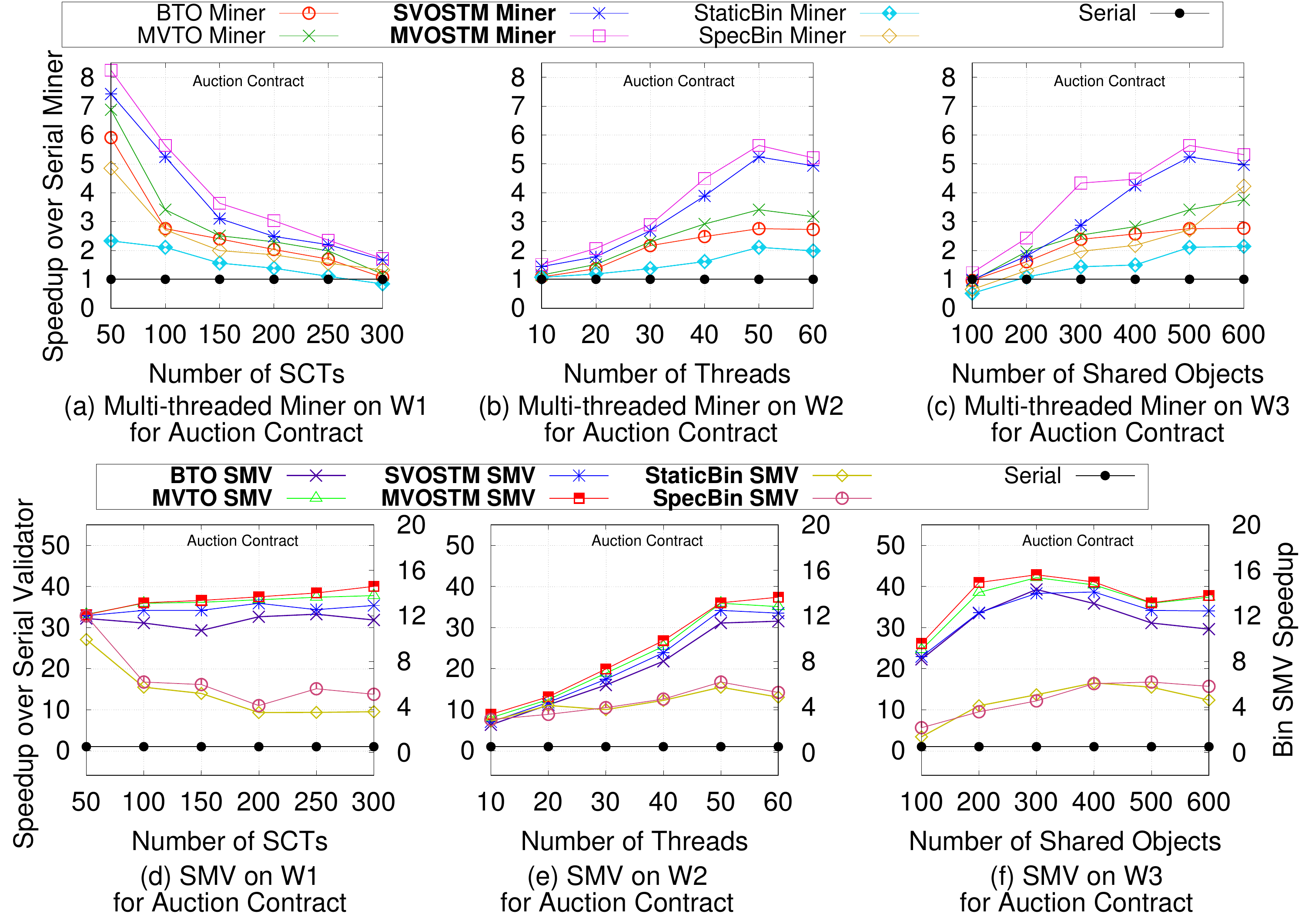}\vspace{-.3cm}
	 \caption{\Mthr Miner and Smart \Mthr Validator Speedup Over Serial Miner and Validator Across all Workloads for Auction Contract}
	\label{fig:all-speedup-auction}

\vspace{.45cm}

	\centering
	\includegraphics[width=.66\textwidth]{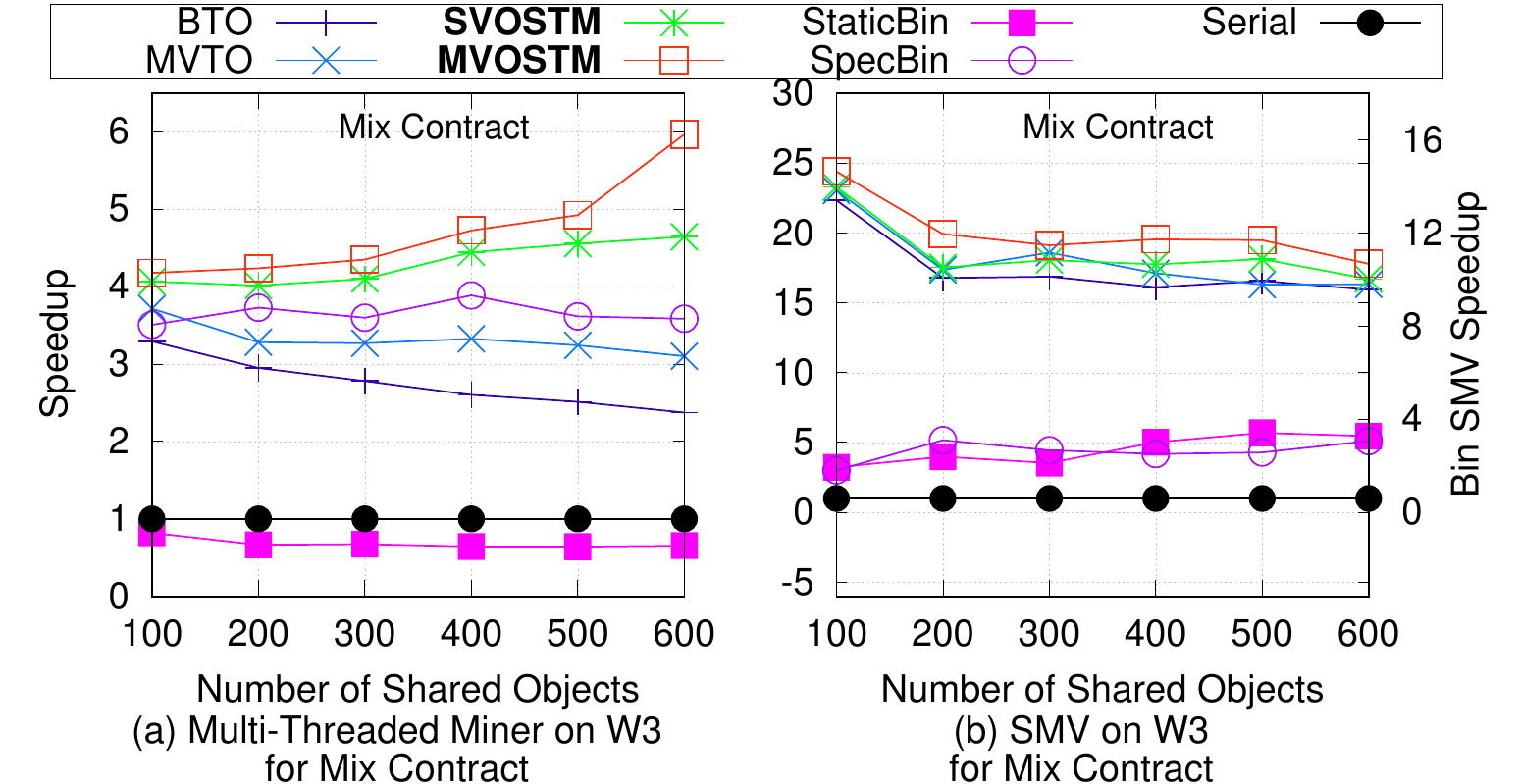}\vspace{-.3cm}
	 \caption{\Mthr Miner and Smart \Mthr Validator Speedup Over Serial Miner and Validator on W3 for Mix Contract}
	\label{fig:w3-mix}
\end{figure}

\noindent
\textbf{Dependencies in the BG: }\figref{all-depd-BG-coin} to \figref{all-depd-BG-auction} shows the average dependencies in the Block Graph (BG) generated by STM based \mthr miners for the coin, ballot, and auction contract on all workloads. While \figref{w3-BG-MM-mix}.(a) shows the average number of dependencies in the BG for mix contract on W3. There is no BG in bin-based static bin and speculative bin miner; instead, there is a sequential and concurrent bin. So from block size consideration, bin-based approach is efficient, though, from validator speedup consideration, STM based approach is better as shown in all smart \mthr validators figures.

As shown in figures for W1 with the increase in SCTs, the number of dependencies increases in BG. However, for W2, there is no much variation since we fixed the number of \sctrn{s} to 100 in W2. Moreover, the analysis of W3 is quite impressive. Here for the ballot and mix contracts with the increase in shared data items, the dependencies in BG increases for \bto and \mvto. However, it decreases for \svotm and \mvotm as shown in \figref{all-depd-BG-ballot} (c). Also, for coin contract, dependencies in BG decreases with an increase in shared data-item. In the Auction contract, it depends on the highest bid; if the highest bid is bided in the beginning, then there will be least dependencies in BG. 

The average speedup (averaged across the workloads) achieved by the \mthr miners and smart \mthr validators on workload W1, W2, and W3 on all benchmarks are shown in Table \ref{tbl:avgMiner} and Table \ref{tbl:avgValidator} respectively. Note that the average speedup result shown in \secref{result} for mix contract is averaged on workload W1 and W2.

\begin{table}[H]
\centering
\caption{Overall Average Speedup on all Workloads by \Mthr Miner over Serial}
\label{tbl:avgMiner}
\resizebox{.6\textwidth}{!}{%
\begin{tabular}{|c|c|c|c|c|c|c|}
\hline
& \multicolumn{6}{c|}{\textbf{\Mthr Miner}}                                            \\ \cline{2-7} 
\multirow{-2}{*}{\textbf{Contract}} &
  \textbf{\begin{tabular}[c]{@{}c@{}}\bto\\ Miner\end{tabular}} &
  \textbf{\begin{tabular}[c]{@{}c@{}}\mvto\\ Miner\end{tabular}} &
  {\color[HTML]{00009B} \textbf{\begin{tabular}[c]{@{}c@{}}\svotm\\ Miner\end{tabular}}} &
  {\color[HTML]{00009B} \textbf{\begin{tabular}[c]{@{}c@{}}\mvotm\\ Miner\end{tabular}}} &
  \textbf{\begin{tabular}[c]{@{}c@{}}SpecBin\\ Miner\end{tabular}} &
  \textbf{\begin{tabular}[c]{@{}c@{}}StaticBin\\ Miner\end{tabular}} \\ \hline
\textbf{Coin}               & 1.596         & 1.959         & 4.391         & 5.572         & 1.279         & 6.689         \\ \hline
\textbf{Ballot}             & 0.960         & 1.065         & 2.229         & 2.431         & 1.175         & 2.233         \\ \hline
\textbf{Auction}            & 2.305         & 2.675         & 3.456         & 3.881         & 1.524         & 2.232         \\ \hline
\textbf{Mix}                & 1.596         & 2.118         & 3.425         & 3.898         & 1.102         & 3.080         \\ \hline
\textbf{Total Avg. Speedup} & \textit{1.61} & \textit{1.95} & \textit{3.38} & \textit{3.95} & \textit{1.27} & \textit{3.56} \\ \hline
\end{tabular}%
}
\end{table}

\begin{table}[H]
\centering
\caption{Overall Average Speedup on all Workloads by \scv over Serial}
\label{tbl:avgValidator}
\resizebox{.6\textwidth}{!}{%
\begin{tabular}{|c|c|c|c|c|c|c|}
\hline
 & \multicolumn{6}{c|}{\textbf{Smart \Mthr Validator (\scv)}}                                \\ \cline{2-7} 
\multirow{-2}{*}{\textbf{Contract}} &
  {\color[HTML]{00009B} \textbf{\begin{tabular}[c]{@{}c@{}}\bto\\ \scv\end{tabular}}} &
  {\color[HTML]{00009B} \textbf{\begin{tabular}[c]{@{}c@{}}\mvto\\ \scv\end{tabular}}} &
  {\color[HTML]{00009B} \textbf{\begin{tabular}[c]{@{}c@{}}\svotm\\ \scv\end{tabular}}} &
  {\color[HTML]{00009B} \textbf{\begin{tabular}[c]{@{}c@{}}\mvotm\\ \scv\end{tabular}}} &
  {\color[HTML]{00009B} \textbf{\begin{tabular}[c]{@{}c@{}}SpecBin\\ \scv\end{tabular}}} &
  {\color[HTML]{00009B} \textbf{\begin{tabular}[c]{@{}c@{}}StaticBin\\ \scv\end{tabular}}} \\ \hline
\textbf{Coin}               & 26.576         & 28.635         & 30.344         & 32.864         & 5.296         & 7.565         \\ \hline
\textbf{Ballot}             & 26.037         & 28.333         & 33.695         & 36.698         & 3.570         & 3.780         \\ \hline
\textbf{Auction}            & 27.772         & 31.781         & 29.803         & 32.709         & 4.694         & 5.214         \\ \hline
\textbf{Mix}                & 36.279         & 39.304         & 42.139         & 45.332         & 4.279         & 4.463         \\ \hline
\textbf{Total Avg. Speedup} & \textit{29.17} & \textit{32.01} & \textit{34.00} & \textit{36.90} & \textit{4.46} & \textit{5.26} \\ \hline
\end{tabular}%
}
\end{table}

\vspace{.25cm}
\begin{figure}[H]
	\includegraphics[width=\textwidth]{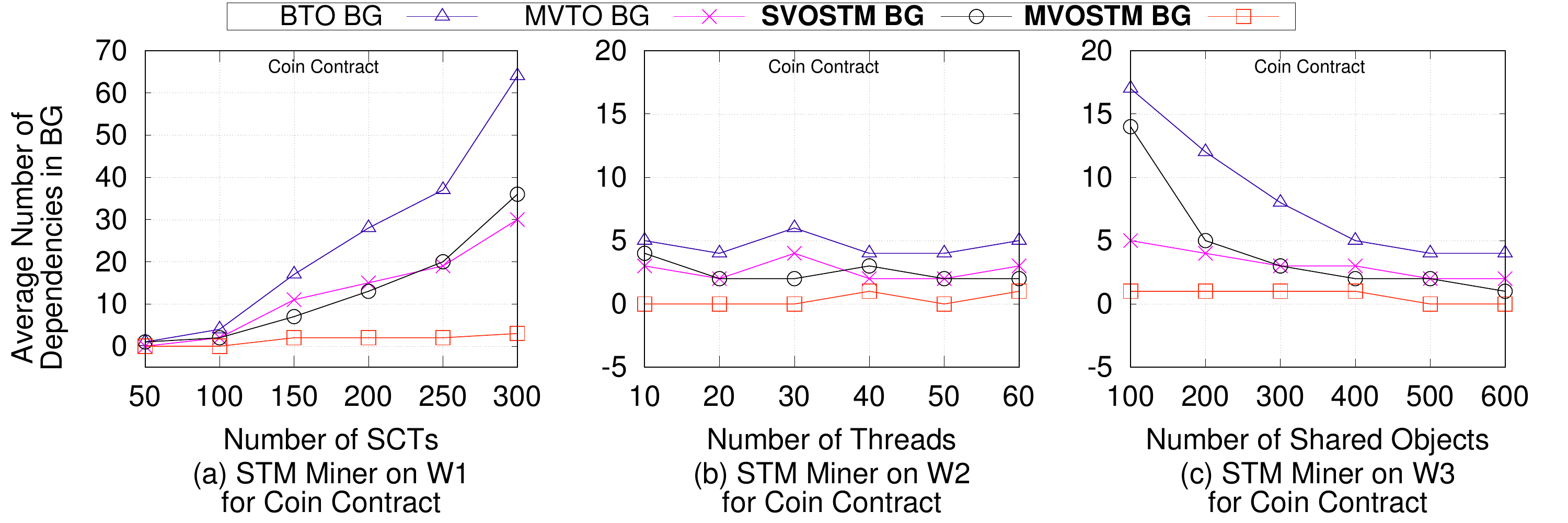}\vspace{-.35cm}
	 \caption{Average Number of Dependencies in Block Graph for Coin Contract}
	\label{fig:all-depd-BG-coin}
\end{figure}

\begin{figure}[H]

	\includegraphics[width=\textwidth]{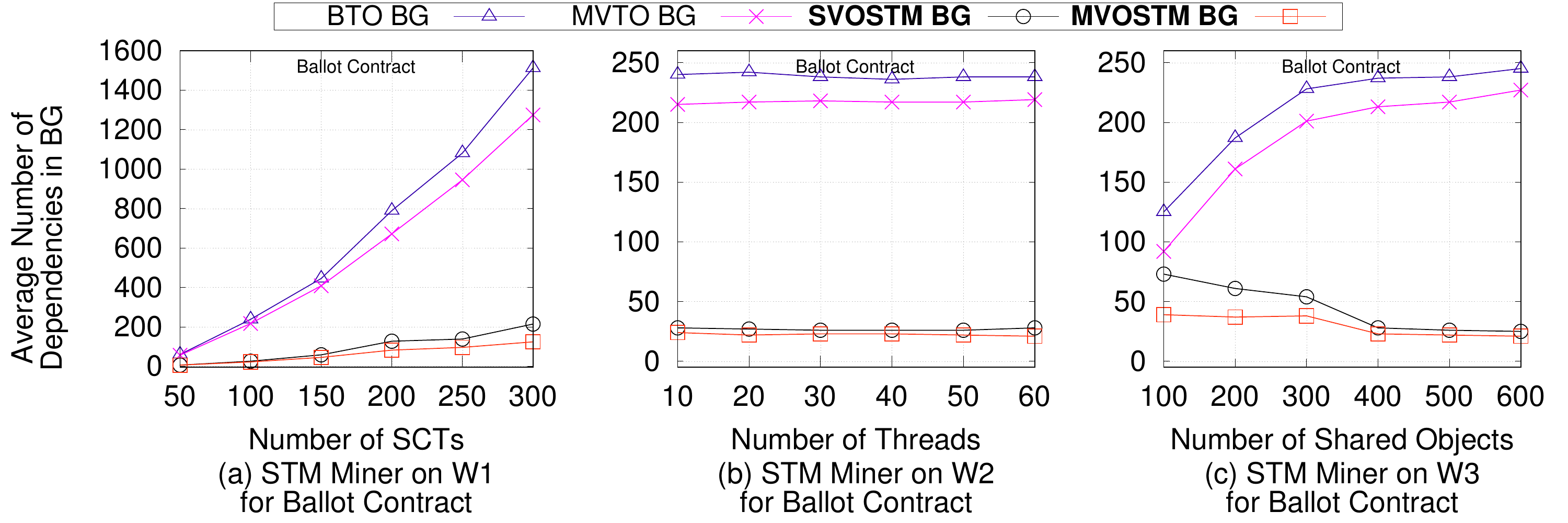}\vspace{-.35cm}
	 \caption{Average Number of Dependencies in Block Graph for Ballot Contract}
	\label{fig:all-depd-BG-ballot}

\vspace{.4cm}

	\includegraphics[width=\textwidth]{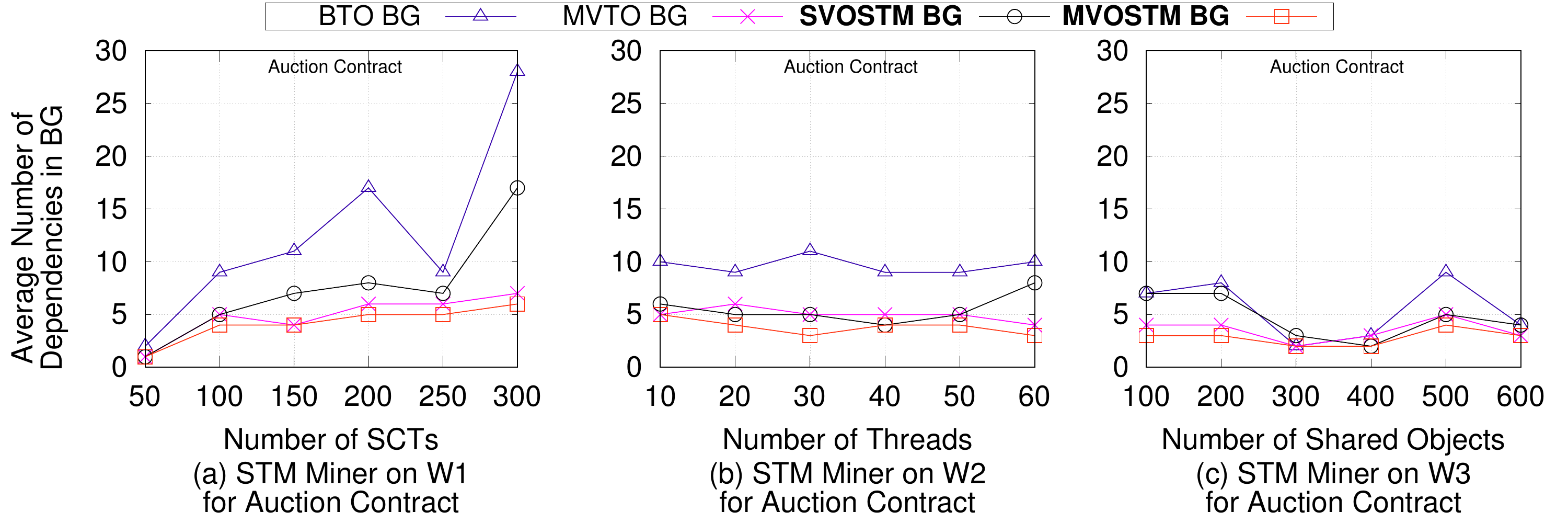}\vspace{-.35cm}
	 \caption{Average Number of Dependencies in Block Graph for Auction Contract}
	\label{fig:all-depd-BG-auction}

\vspace{.4cm}

    \centering
	\includegraphics[width=.92\textwidth]{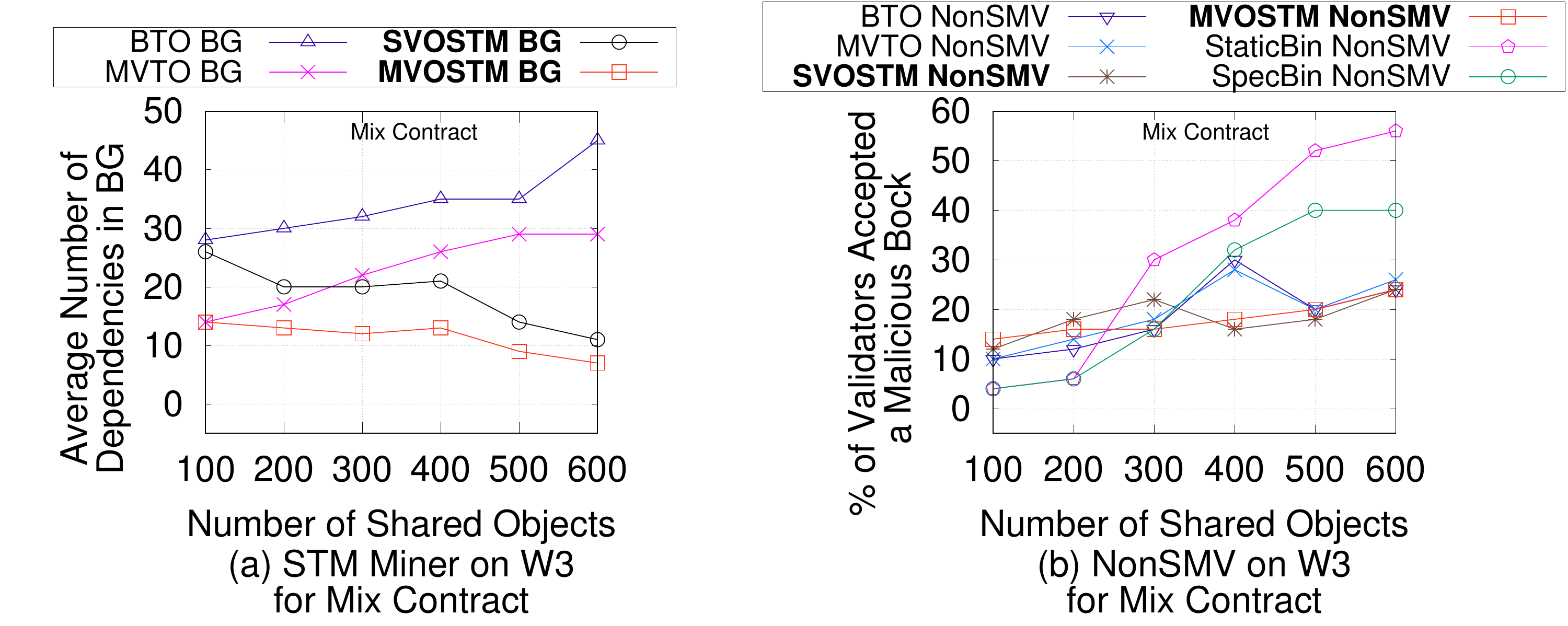}\vspace{-.35cm}
	 \caption{Average Number of Dependencies in Block Graph and Percentage of Average \Mthr Validator (NonSMV) Accepted a Malicious Block on W3 for Mix Contract}
	\label{fig:w3-BG-MM-mix}    

\end{figure}

\noindent
\textbf{Experiments on Malicious Miner:} A \mthr validator deterministically executes the \sctrn{s} rely on the BG provided by the miner in the block. However, what if a miner is malicious and embeds an incorrect BG? To answer this question, we have done experiments for the malicious miner. As explained earlier, we proposed a \emph{Smart \Mthr Validator (\scv)} to prevent such malicious activity due to concurrent execution of \sctrn{s} of the block. This experiment shows how many validators (Non\scv) accept malicious block proposed by a malicious miner.

To obtain the malicious miner activity, we generate two \sctrn{s} of double-spending (explained in \apnref{ap-mm}) in coin contract and ballot contract (double voting: a voter votes two different proposals with one voting right). After that, malicious miner added such \sctrn{s} into the block, manipulate the final state accordingly, but did not add the respective dependencies in BG, i.e., for these two \sctrn{s}, indegrees will be 0. Finally, malicious miner broadcast the malicious block in the network. Then \mthr Non\scv validators re-executes the \sctrn{s} concurrently using BG provided by the malicious miner. However, the validators may execute double-spending \sctrn{s} concurrently and compute the same final state as provided by the malicious miner. So, some of the validators accept the malicious block. If they reach a consensus, then they will add this malicious block into the blockchain. It may cause a severe issue in the blockchain.

\figref{w1w2mixmminer}, \figref{w3-BG-MM-mix}.(b), \figref{w1-w2-w3-mminer-coin}, and \figref{w1-w2-w3-mminer-ballot} demonstrates the average percentage of validators accepting a malicious block on different workloads and benchmark contracts. Here, we consider 50 validators and run the experiments for the Coin, Ballot, and Mix contract. So, we can conclude that if the malicious miner is present in the network, then some validators may agree on the malicious block, and harm the blockchain. Therefore, we should ensure the rejection of such a malicious block in the blockchain. To address this issue, we proposed \emph{Smart \Mthr Validator} (describe in \apnref{ap-mm}), which always detects the malicious block at the time of concurrent execution of malicious \sctrn{s} (double-spending and double voting) with the help of $counter$ and straightforward reject that block. Analysis of SMVs is presented for mix contract in \secref{result} and other contracts at the beginning of this section.

So, the next obvious question is, how much extra time does \emph{SMVs} is taking to serve the purpose of identifying the malicious miner over \emph{NonSMVs}? 
We observe that the counter-based \mthr validator (i.e., Smart \Mthr Validator (SMV)) approach is giving a bit less speedup than without counter-based \mthr validator (i.e., NonSMV) however this decrement in speedup is very low on considered workloads. Instead of small decrement in speedup, it is evident to use counter-based \mthr validators (\emph{SMVs}) to preserve the correctness of the blockchain.

\begin{figure}[H]
	\includegraphics[width=\textwidth, height= 4.5cm]{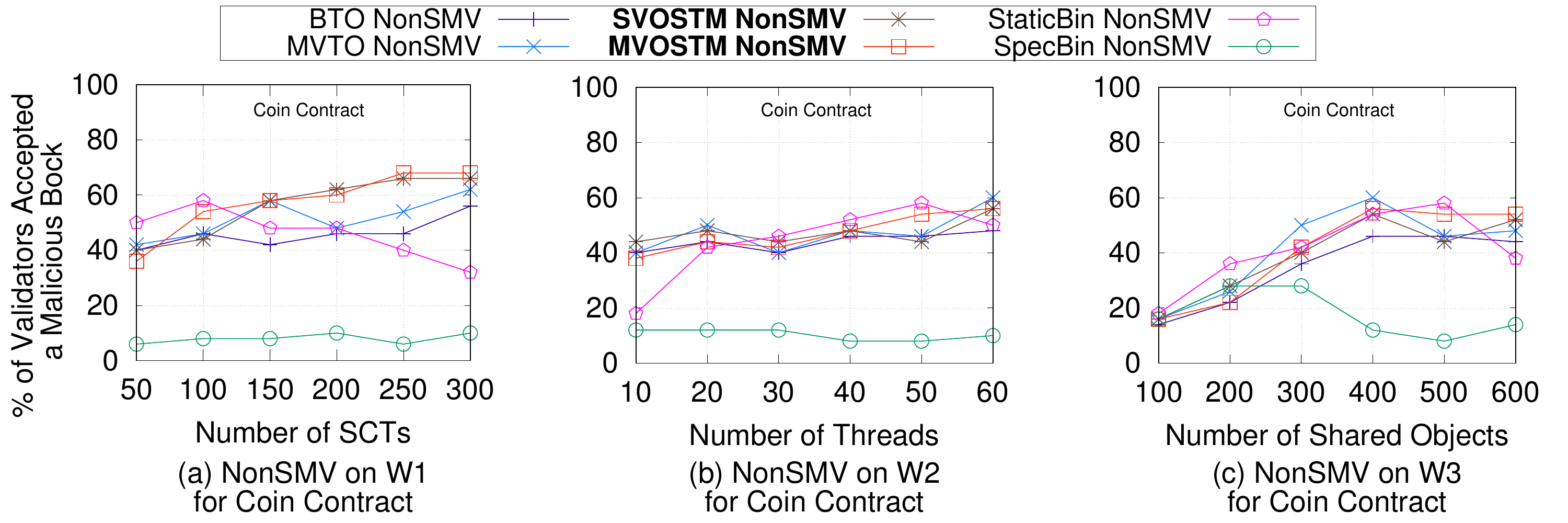}\vspace{-.3cm}
	 \caption{Percentage of Average \Mthr Validator (NonSMV) Accepted a Malicious Block for Coin Contract}
	\label{fig:w1-w2-w3-mminer-coin}
\vspace{.45cm}
	\includegraphics[width=\textwidth, height= 4.5cm]{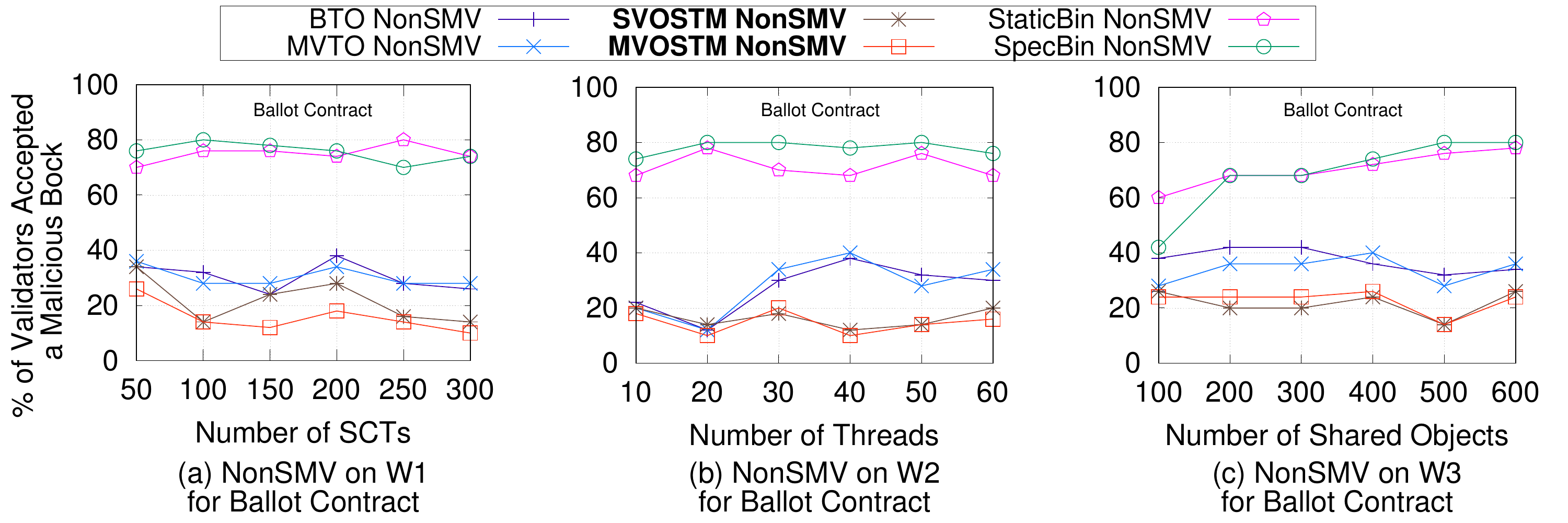}\vspace{-.3cm}
	 \caption{Percentage of Average \Mthr Validator (NonSMV) Accepted a Malicious Block for Ballot Contract}
	\label{fig:w1-w2-w3-mminer-ballot}
\end{figure}

\noindent
\textbf{Experiments on Block Graph Size:} We also measure the additional space required to append the BG into the block. In Ethereum and Bitcoin average block size is $\approx 20.98$ \cite{EthereumAvgBlockSize} and $\approx 1123.34$ KB \cite{BitcoinAvgBlockSize} respectively for the interval of 1$^{st}$ Jan. 2019 to 8$^{th}$ March 2020, which is keep on increasing every year. The average number of transactions in a block of Ethereum is $\approx 100$ \cite{EthereumAvgBlockSize}. So, on an average, each transaction requires $0.2$ KB ($\approx 200$ bytes) in Ethereum. Based on this simple calculation, we have computed block size with an increase in \sctrn{s} per block for workload W1. To compute the block size \emph{Equation \ref{eq:blocksize}} is used.
\begin{equation}
    B = 200 * N_{\sctrn{s}}
\label{eq:blocksize}
\end{equation}
Where, $B$ is block size in bytes, $N_{\sctrn{s}}$ number of smart contract transactions (\sctrn{s}) in block, and $200$ is the average size of an \sctrn{} in bytes.

We use \emph{adjacency list} to maintain the Block Graph $BG(V, E)$ inspired from \cite{Anjana:OptSC:PDP:2019,Chatterjee+:NbGraph:ICDCN:2019}. Here $V$ is the set of vertices (\vrtnode{s}) is stored as a vertex list, $\vrtlist$. Similarly E is the set of Edges (\egnode{s}) is stored as edge list ($\eglist$ or conflict list) as shown in the \figref{graph} (a) of \apnref{ap-dsBG}. Both $\vrtlist$ and $\eglist$ store between the two sentinel nodes \emph{Head}($-\infty$) and \emph{Tail}($+\infty$). Each \vrtnode{} maintains a tuple: \emph{$\langle$ts, scFun, indegree, egNext, vrtNext$\rangle$}. Here, \emph{ts (an integer)} is the unique timestamp $i$ of the transaction $T_i$ to which this node corresponds to. \emph{scFun (an integer)} is the ID of smart contract function executed by the transaction $T_i$ which is stored in \vrtnode. The number of incoming edges to the transaction $T_i$, i.e. the number of transactions on which $T_i$ depends, is captured by \emph{indegree (an integer)}. Field \emph{egNext (an address)} and \emph{vrtNext (an address)} points the next \egnode{} and \vrtnode{} in the $\eglist$ and $\vrtlist$ respectively. So a vertex node $V_s$ size is $28$ bytes in the experimental system, which is sum of the size of 3 integer variables and 2 pointers.

Each \egnode{} of $T_i$ similarly maintains a tuple: \emph{$\langle$ts, vrtRef, egNext$\rangle$}. Here, \emph{ts  (an integer)} stores the unique timestamp $j$ of $T_j$ which has an edge coming from $T_i$ in the graph. BG maintains the conflict edge from lower timestamp transaction to higher timestamp transaction. This ensures that the \bg is acyclic. The \egnode{s} in $\eglist$ are stored in increasing order of the \emph{ts}. Field \emph{vrtRef (an address)} is a \emph{vertex reference pointer} which points to its own \vrtnode{} present in the $\vrtlist$. This reference pointer helps to maintain the \emph{indegree} count of \vrtnode{} efficiently. The \emph{egNext (an address)} is a pointer to next edge node, so edge node $E_{s}$ requires a total of $20$ bytes in the experimental system.


The experimental results on the percentage of additional space required to store BG in the block are shown in \figref{all-inc-BS-coin} to \figref{all-inc-BS-mix} for all benchmark contracts and workloads. The size of BG ($\beta$) in bytes is computed using \emph{Equation \ref{eq:BGSize}}, while to compute the percentage of additional space ($\beta_{p}$) required to store BG in the block is calculated using \emph{Equation \ref{eq:perBG}}.
\begin{equation}
    \beta = (V_{s} * N_{\sctrn{s}}) + (E_{s} * M_{e})
    \label{eq:BGSize}
\end{equation}
Where, $\beta$ is size of Block Graph (BG) in bytes, $V_s$ is size of a vertex node of $BG$ in bytes, $N_{\sctrn{s}}$ are number of smart contract transactions (\sctrn{s}) in a block, $E_{s}$ is size of a edge node in bytes of $BG$, and $M_{e}$ is number of edges in $BG$.
\begin{equation}
    \beta_{p} = ({\beta*100})/{B}
\label{eq:perBG}
\end{equation}

As shown in \figref{all-inc-BS-coin} to \figref{all-inc-BS-mix}, it can be observed that with an increase in the number of dependencies, the space requirements also increase. The number of dependencies in the Ballot contract (\figref{all-depd-BG-ballot} (a)) for $W1$ is higher compared to other contracts, so the space requirement is also high. In all the figures, the space requirements of BG by \mvotm, \svotm is smaller than \mvto and \bto miner. The average space required for BG in \% concerning block size is $14.24\%$, $14.95\%$, $21.20\%$, and $22.70 \%$ by \mvotm, \svotm, \mvto, and \bto miner, respectively on W1 (As shown in Table \ref{tab:block-size}). Since the number of dependencies in BG developed by \mvotm is smaller than BG generated by other STM protocols, so it requires less space to store BG. In the future, we are planning to reduce space further to store the BG in the block by keeping the optimized or compressed BG.

\begin{table}
\centering
\caption{On W1 Average \% Increase in Block Size due to BG in Ethereum}
\label{tab:block-size}
\resizebox{.85\textwidth}{!}{%
\begin{tabular}{|c|c|c|c|c|}
\hline
& \multicolumn{4}{c|}{\textbf{Average \% of Increase in Block Size due to BG on W1}} \\ \cline{2-5} 
\multirow{-2}{*}{\textbf{Contract}} &
  \textbf{BTO Miner} &
  \textbf{MVTO Miner} &
  {\color[HTML]{00009B} \textbf{\svotm Miner}} &
  {\color[HTML]{00009B} \textbf{\mvotm Miner}} \\ \hline
\textbf{Coin}              & 14.225            & 13.702            & 13.712            & 13.220           \\ \hline
\textbf{Ballot}            & 44.542            & 40.633            & 17.377            & 16.073           \\ \hline
\textbf{Auction}           & \textit{13.811}   & \textit{13.427}   & \textit{13.534}   & \textit{13.392}  \\ \hline
\textbf{Mix}               & 18.238            & 17.043            & 15.180            & 14.264           \\ \hline
\textbf{Total Avg. Change} & \textit{22.70}    & \textit{21.20}    & \textit{14.95}    & \textit{14.24}   \\ \hline
\end{tabular}%
}
\end{table}

\begin{figure}[H]
	\includegraphics[width=\textwidth, height=4.5cm]{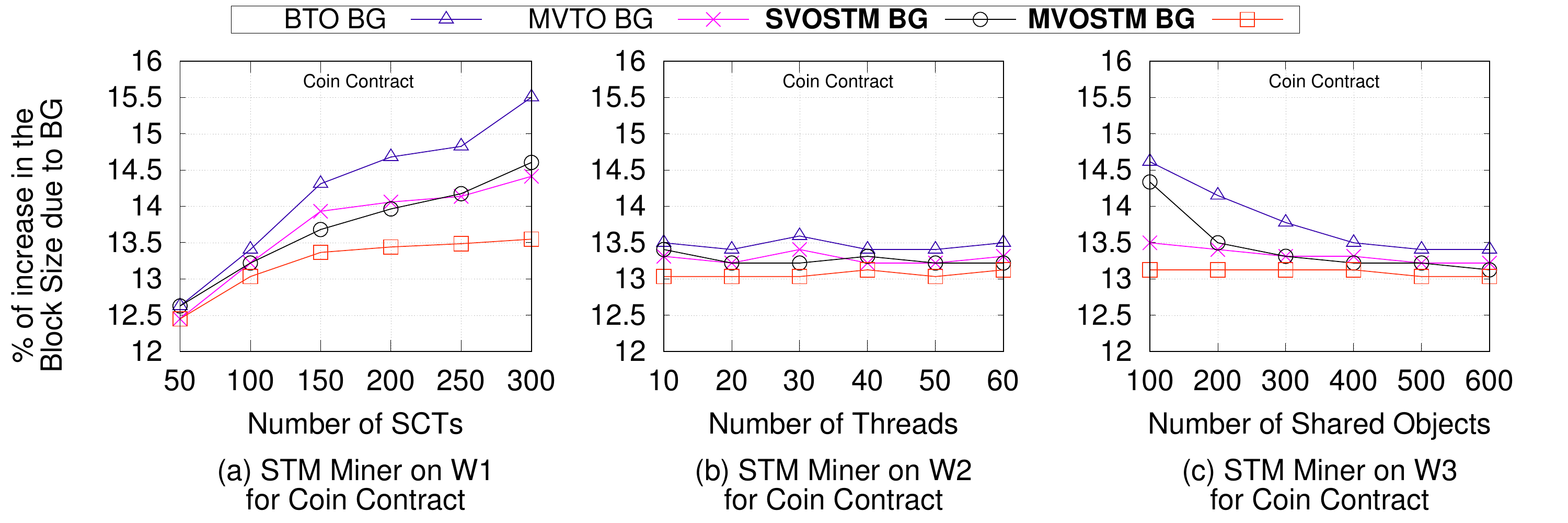}\vspace{-.3cm}
	 \caption{Percentage of Additional Space Required to Store Block Graph (BG) in Ethereum Block for Coin Contract}
	\label{fig:all-inc-BS-coin}

\vspace{.45cm}

	\includegraphics[width=\textwidth, height=4.5cm]{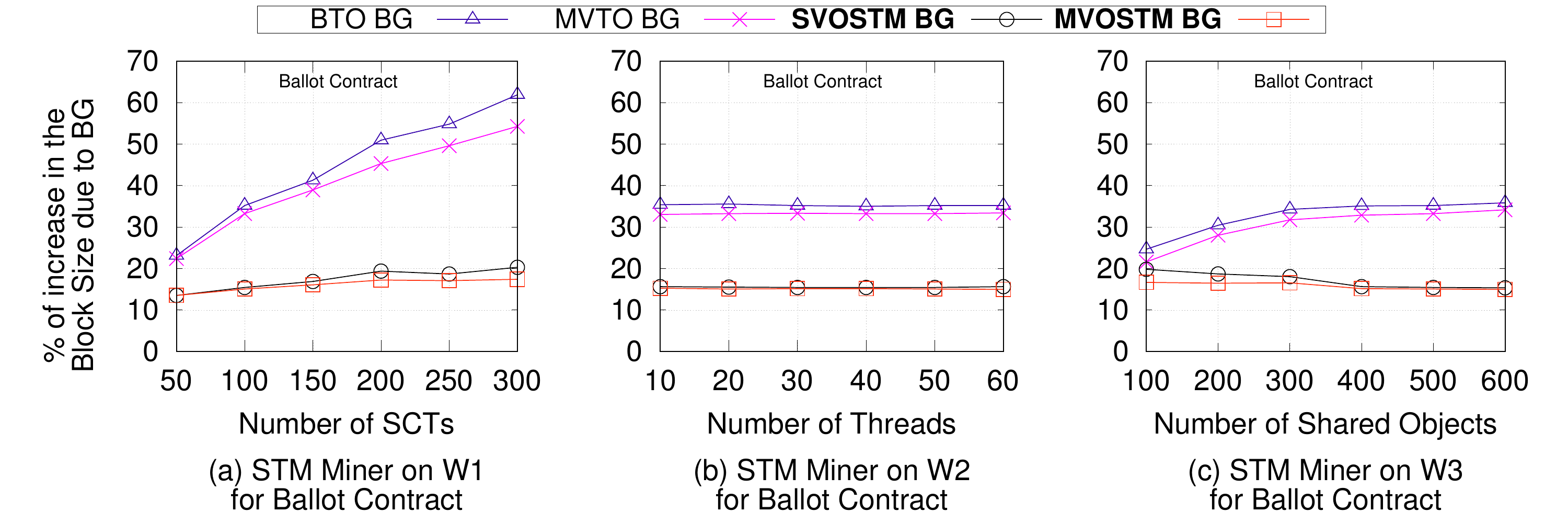}\vspace{-.3cm}
	 \caption{Percentage of Additional Space Required to Store Block Graph (BG) in Ethereum Block for Ballot Contract}
	\label{fig:all-inc-BS-ballot}
\end{figure}

\begin{figure}[H]
	\includegraphics[width=\textwidth, height=4.5cm]{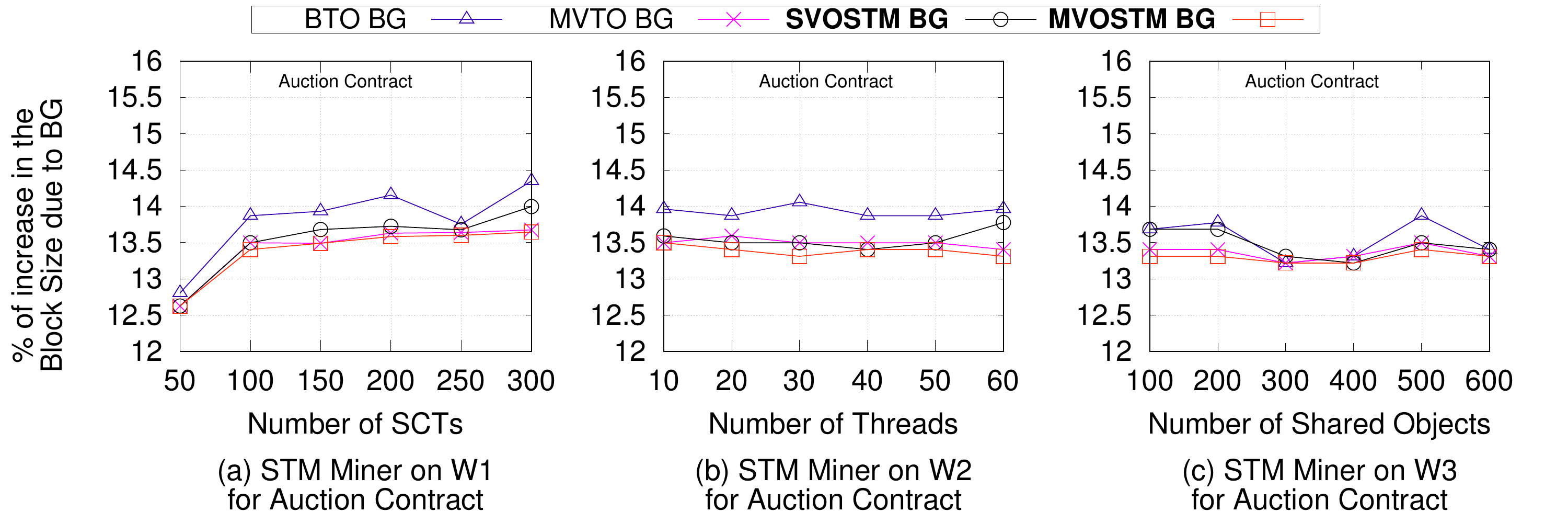}\vspace{-.3cm}
	 \caption{Percentage of Additional Space Required to Store Block Graph (BG) in Ethereum Block for Auction Contract}
	\label{fig:all-inc-BS-auction}

\vspace{.45cm}
	\includegraphics[width=\textwidth, height=4.5cm]{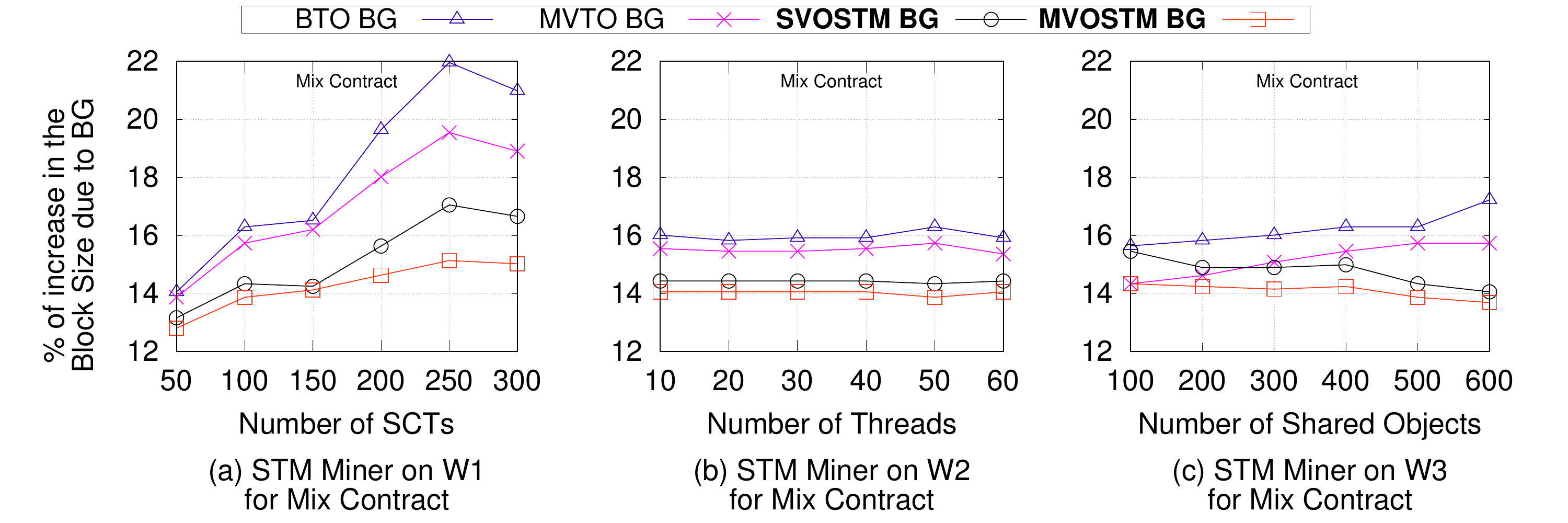}\vspace{-.3cm}
	 \caption{Percentage of Additional Space Required to Store Block Graph (BG) in Ethereum Block for Mix Contract}
	\label{fig:all-inc-BS-mix}
\end{figure}

\noindent\textbf{Performance Analysis of Decentralized NonSMV Validator:} \figref{validatorNonSMVcb} and \figref{validatorNonSMVam} show the performance of Decentralized NonSMV validator. Here we can observe that average speedup achieved by decentralized NonSMV validator is slightly better than \scv including bin-based and fork-join validators. However, NonSMV validators are prone to accepting a malicious block (the acceptance of malicious block is shown in \figref{w1w2mixmminer}, \figref{w1-w2-w3-mminer-coin}, and \figref{w1-w2-w3-mminer-ballot}). 
\begin{figure}[H]
    \centering
	\includegraphics[width=\textwidth]{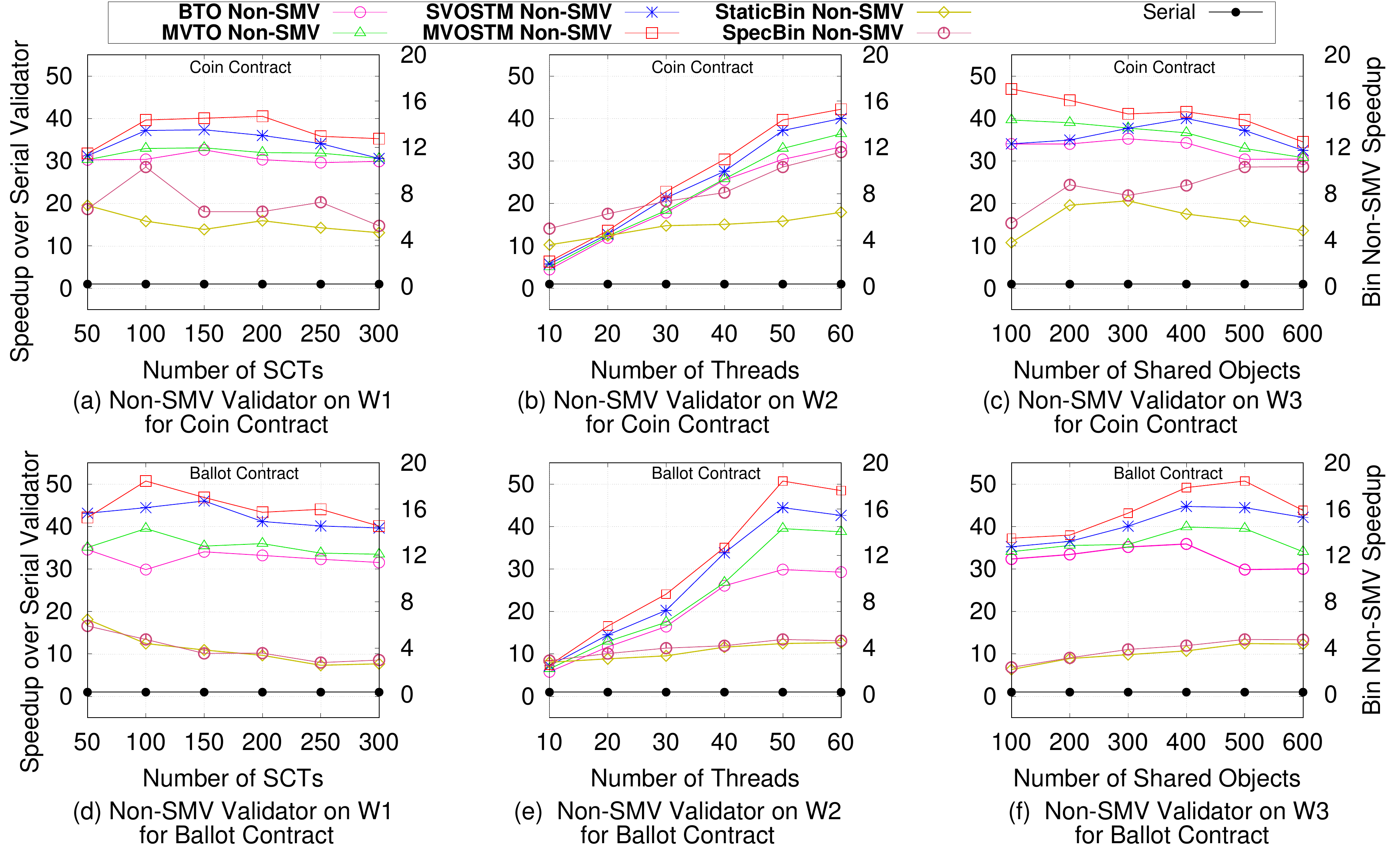}\vspace{-.3cm}
	 \caption{\Mthr Non\scv Validator Speedup Over Serial Validator for Coin and Ballot Contract}
	\label{fig:validatorNonSMVcb}

	\includegraphics[width=\textwidth]{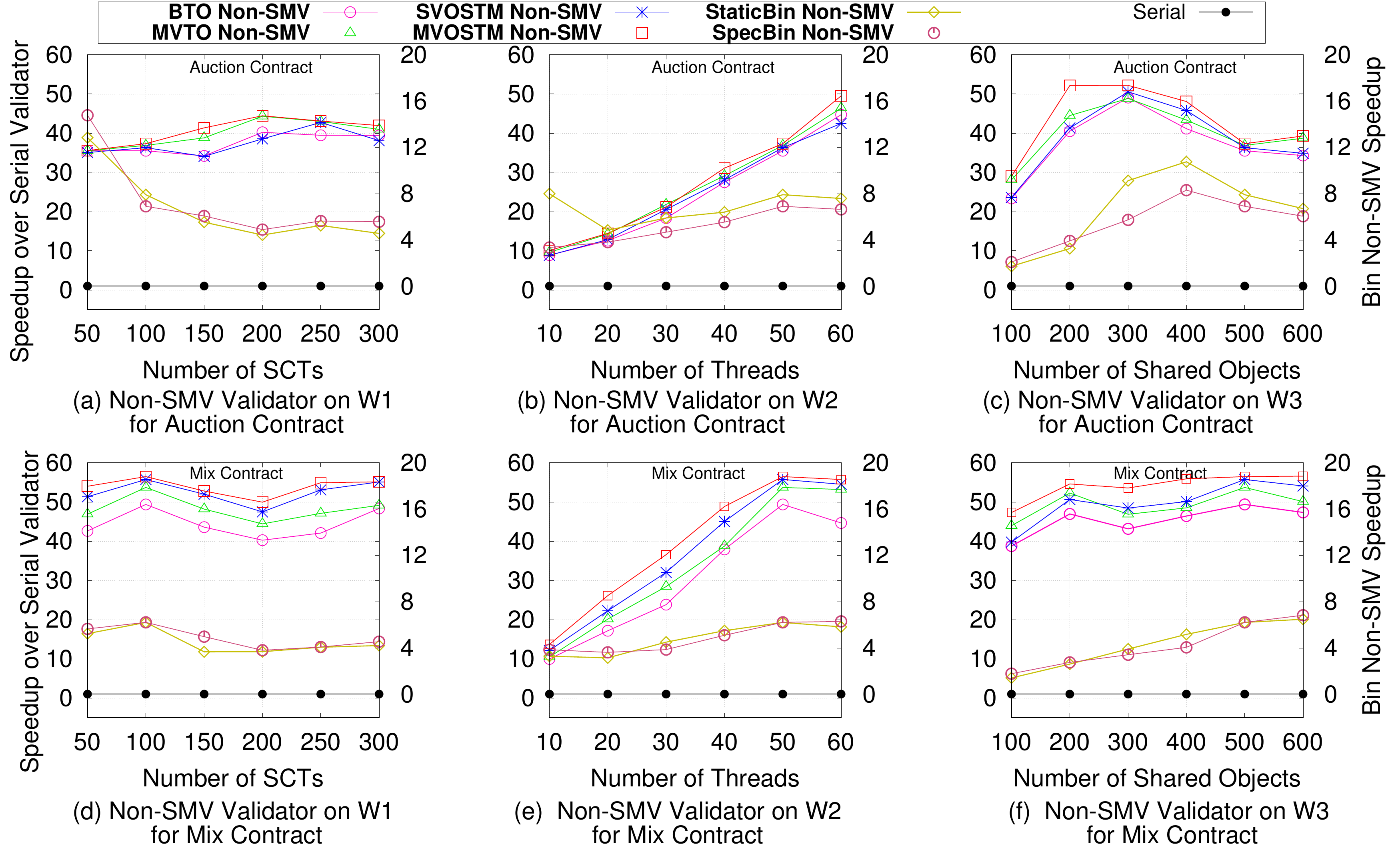}\vspace{-.3cm}
	 \caption{\Mthr Non\scv Validator Speedup Over Serial Validator for Auction and Mix Contract}
	\label{fig:validatorNonSMVam}
\end{figure}

\noindent\textbf{Performance Analysis of Fork-join \scv Validator:} \figref{w1-validatorFJ} and \figref{w2-validatorFJ} show the performance of the fork-join validator \cite{Anjana:OptSC:PDP:2019,Dickerson+:ACSC:PODC:2017}. 
Here we can observe that the average speedup achieved by the fork-join validator is very less compared to other \scv. The reason for low speedups by \mthr fork-join validators is possibly due to the working of the master thread, which becomes slow to allocate the \sctrn{s} to the slave threads and hence becomes the bottleneck. 
\begin{figure}[H]
    \centering
	\includegraphics[width=\textwidth]{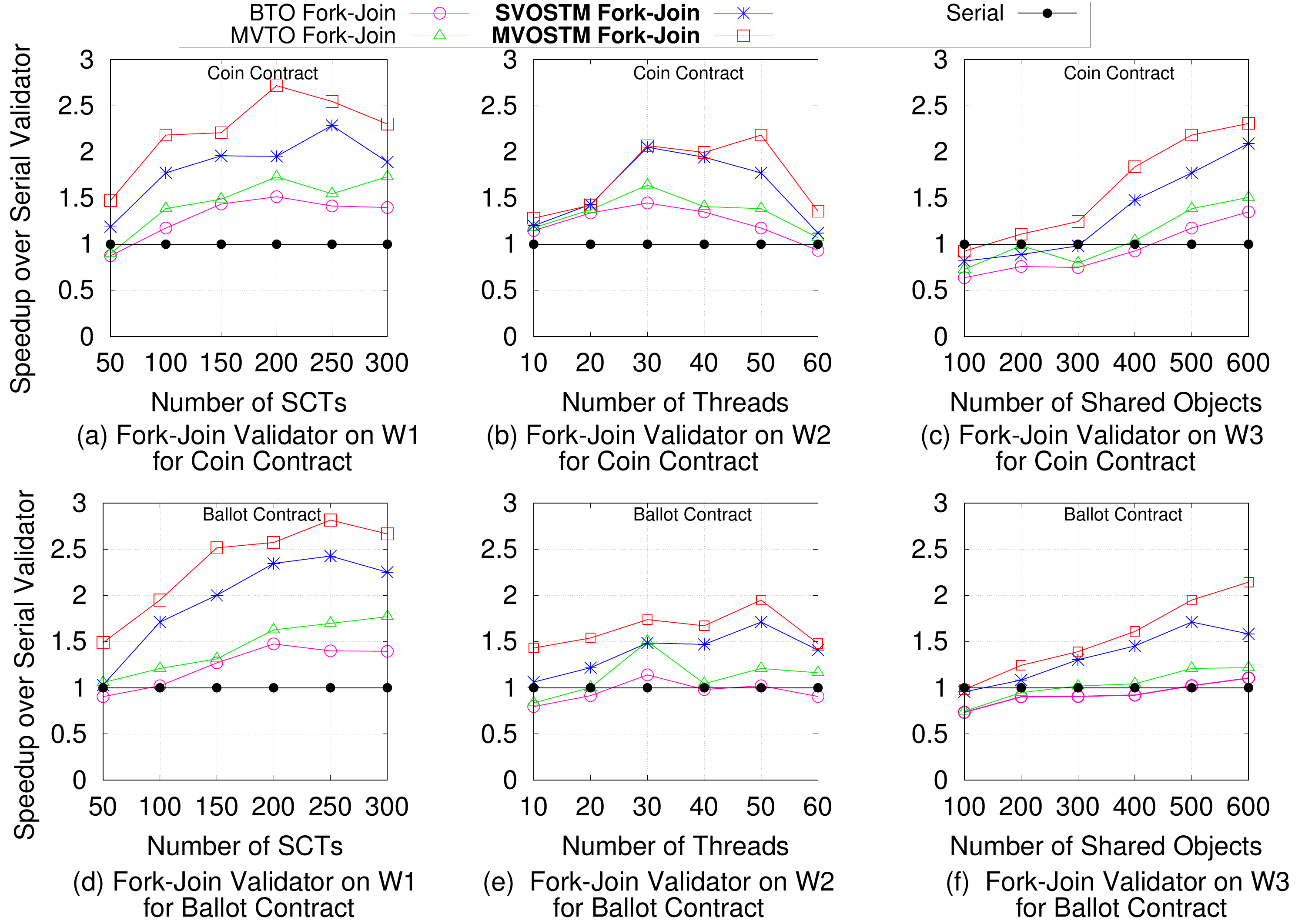}\vspace{-.3cm}
	 \caption{\Mthr Fork-join Validator Speedup over Serial Validator for Coin and Ballot Contract}
	\label{fig:w1-validatorFJ}

 	\includegraphics[width=\textwidth]{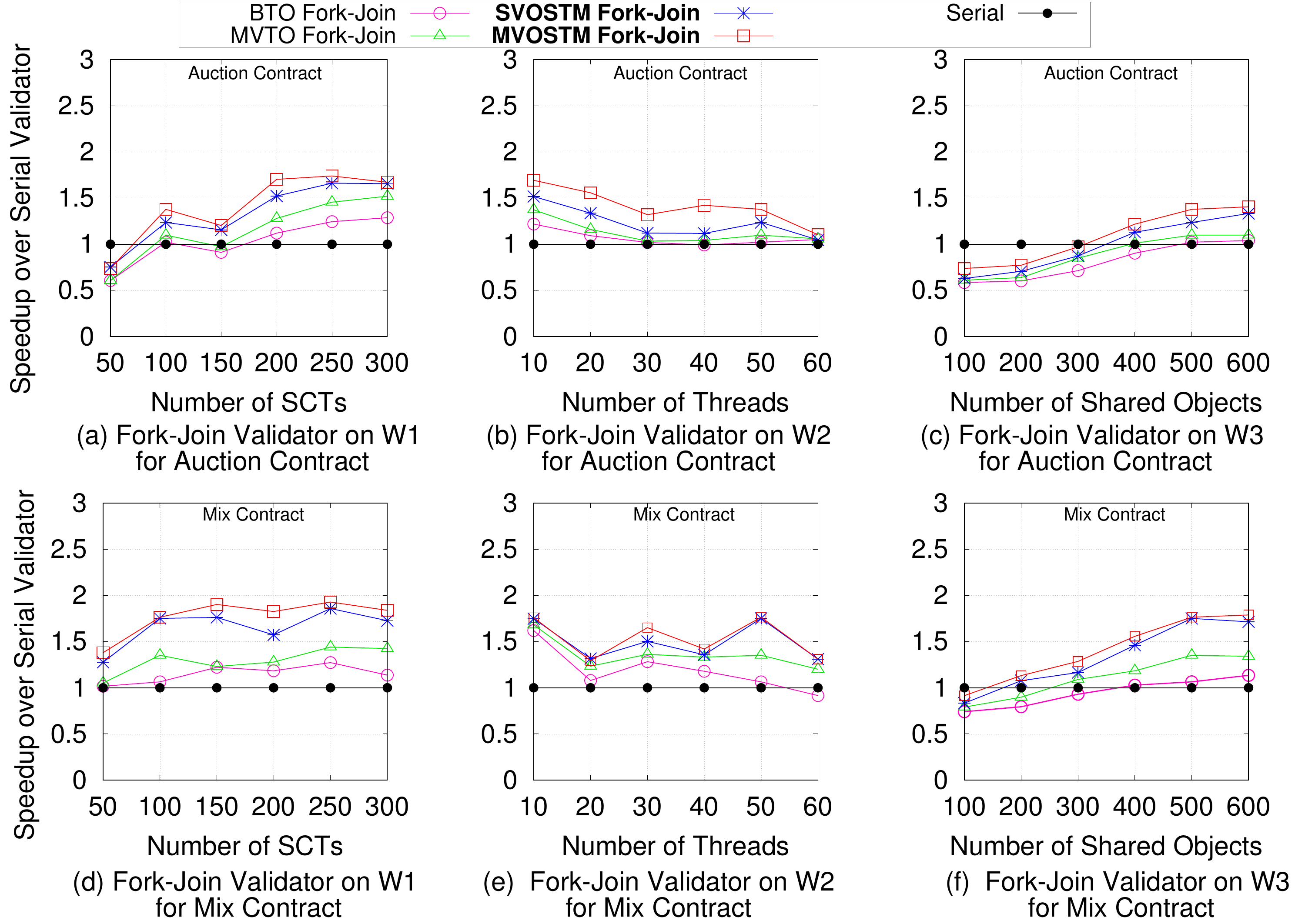}\vspace{-.3cm}
	 \caption{\Mthr Fork-join Validator Speedup over Serial Validator for Auction and Mix Contract}
	\label{fig:w2-validatorFJ}
\end{figure}

\noindent
\textbf{Time Taken by \Mthr Miner and \scv on Workload-1 Benchmark Contracts:} For the better clarity, we present the actual time taken by the miner and validators on W1. Table \ref{tab:w1-coin-miner} to Table \ref{tab:w1-mix-miner} show the time taken by the miners for all the four benchmark contracts, while Table \ref{tab:w1-coin-val} to Table \ref{tab:w1-mix-val} show the time taken by the validators. The time shown in the table is in the microsecond ($\mu s$) and averaged over 26 runs where the first run is considered as warm-up run and discarded.

\begin{table}[H]
\centering
\caption{\Mthr v/s Serial Miner Time on W1 for Coin Contract (in $\mu s$)}
\label{tab:w1-coin-miner}
\resizebox{.75\textwidth}{!}{%
\begin{tabular}{|c|c|c|c|c|c|c|c|}
\hline
 &  &  &  & {\color[HTML]{00009B} } & {\color[HTML]{00009B} } &  &  \\
\multirow{-2}{*}{\textbf{\begin{tabular}[c]{@{}c@{}}\# \sctrn{s}\\\end{tabular}}} & \multirow{-2}{*}{\textbf{Serial}} & \multirow{-2}{*}{\textbf{\begin{tabular}[c]{@{}c@{}}\bto\\ Miner\end{tabular}}} & \multirow{-2}{*}{\textbf{\begin{tabular}[c]{@{}c@{}}\mvto\\ Miner\end{tabular}}} & \multirow{-2}{*}{{\color[HTML]{00009B} \textbf{\begin{tabular}[c]{@{}c@{}}\svotm\\ Miner\end{tabular}}}} & \multirow{-2}{*}{{\color[HTML]{00009B} \textbf{\begin{tabular}[c]{@{}c@{}}\mvotm\\ Miner\end{tabular}}}} & \multirow{-2}{*}{\textbf{\begin{tabular}[c]{@{}c@{}}StaticBin\\ Miner\end{tabular}}} & \multirow{-2}{*}{\textbf{\begin{tabular}[c]{@{}c@{}}SpecBin\\ Miner\end{tabular}}} \\ \hline
	\textbf{50} & 150.65 & 68.1112 & 50.3176 & 22.8232 & 14.9664 & 86.1328 & 12.7848 \\ \hline
	\textbf{100} & 272.71 & 146.647 & 123.096 & 44.5568 & 37.464 & 159.595 & 33.1864 \\ \hline
	\textbf{150} & 379.18 & 262.93 & 233.871 & 76.3768 & 55.584 & 271.694 & 49.1736 \\ \hline
	\textbf{200} & 487.52 & 352.554 & 297.997 & 166.834 & 97.5192 & 527.921 & 72.2712 \\ \hline
	\textbf{250} & 587.215 & 450.446 & 390.727 & 208.166 & 122.653 & 724.494 & 91.472 \\ \hline
	\textbf{300} & 696.445 & 534.891 & 444.716 & 261.277 & 173.087 & 982.792 & 150.039 \\ \hline
\end{tabular}%
}
\end{table}

\begin{table}[H]
\centering
\caption{\Mthr v/s Serial Miner Time on W1 for Ballot Contract (in $\mu s$)}
\label{tab:w1-ballot-miner}
\resizebox{.75\textwidth}{!}{%
\begin{tabular}{|c|c|c|c|c|c|c|c|}
\hline
 &  &  &  & {\color[HTML]{00009B} } & {\color[HTML]{00009B} } &  &  \\
\multirow{-2}{*}{\textbf{\begin{tabular}[c]{@{}c@{}}\# \sctrn{s}\\\end{tabular}}} & \multirow{-2}{*}{\textbf{Serial}} & \multirow{-2}{*}{\textbf{\begin{tabular}[c]{@{}c@{}}\bto\\ Miner\end{tabular}}} & \multirow{-2}{*}{\textbf{\begin{tabular}[c]{@{}c@{}}\mvto\\ Miner\end{tabular}}} & \multirow{-2}{*}{{\color[HTML]{00009B} \textbf{\begin{tabular}[c]{@{}c@{}}\svotm\\ Miner\end{tabular}}}} & \multirow{-2}{*}{{\color[HTML]{00009B} \textbf{\begin{tabular}[c]{@{}c@{}}\mvotm\\ Miner\end{tabular}}}} & \multirow{-2}{*}{\textbf{\begin{tabular}[c]{@{}c@{}}StaticBin\\ Miner\end{tabular}}} & \multirow{-2}{*}{\textbf{\begin{tabular}[c]{@{}c@{}}SpecBin\\ Miner\end{tabular}}} \\ \hline
	\textbf{50} & 159.68 & 118.534 & 105.431 & 61.324 & 44.068 & 90.9888 & 35.4624 \\ \hline
	\textbf{100} & 270.72 & 228.384 & 200.039 & 95.8848 & 86.2352 & 189.968 & 98.7128 \\ \hline
	\textbf{150} & 426.24 & 425.461 & 357.151 & 171.871 & 162.211 & 424.124 & 181.074 \\ \hline
	\textbf{200} & 524.64 & 656.821 & 723.909 & 264.23 & 274.261 & 674.95 & 310.022 \\ \hline
	\textbf{250} & 633.32 & 897.225 & 919.69 & 338.452 & 410.184 & 846.106 & 423.346 \\ \hline
	\textbf{300} & 775.96 & 955.438 & 1033.51 & 428.401 & 519.42 & 990.503 & 584.227 \\ \hline
\end{tabular}%
}
\end{table}

\begin{table}[H]
\centering
\caption{\Mthr v/s Serial Miner Time on W1 for Auction Contract (in $\mu s$)}
\label{tab:w1-auction-miner}
\resizebox{.75\textwidth}{!}{%
\begin{tabular}{|c|c|c|c|c|c|c|c|}
\hline
 &  &  &  & {\color[HTML]{00009B} } & {\color[HTML]{00009B} } &  &  \\
\multirow{-2}{*}{\textbf{\begin{tabular}[c]{@{}c@{}}\# \sctrn{s}\\\end{tabular}}} & \multirow{-2}{*}{\textbf{Serial}} & \multirow{-2}{*}{\textbf{\begin{tabular}[c]{@{}c@{}}\bto\\ Miner\end{tabular}}} & \multirow{-2}{*}{\textbf{\begin{tabular}[c]{@{}c@{}}\mvto\\ Miner\end{tabular}}} & \multirow{-2}{*}{{\color[HTML]{00009B} \textbf{\begin{tabular}[c]{@{}c@{}}\svotm\\ Miner\end{tabular}}}} & \multirow{-2}{*}{{\color[HTML]{00009B} \textbf{\begin{tabular}[c]{@{}c@{}}\mvotm\\ Miner\end{tabular}}}} & \multirow{-2}{*}{\textbf{\begin{tabular}[c]{@{}c@{}}StaticBin\\ Miner\end{tabular}}} & \multirow{-2}{*}{\textbf{\begin{tabular}[c]{@{}c@{}}SpecBin\\ Miner\end{tabular}}} \\ \hline
\textbf{50} & 106.8 & 18.0744 & 15.5328 & 14.3912 & 12.9552 & 45.8104 & 22.0304 \\ \hline
\textbf{100} & 239.04 & 86.7432 & 70.0488 & 45.6224 & 42.373 & 113.296 & 88.044 \\ \hline
\textbf{150} & 341.28 & 142.105 & 135.964 & 109.977 & 93.825 & 218.606 & 170.866 \\ \hline
\textbf{200} & 441.04 & 217.049 & 191.782 & 177.425 & 145.63 & 318.754 & 238.448 \\ \hline
\textbf{250} & 534.88 & 315.15 & 269.503 & 242.992 & 227.251 & 486.099 & 343.722 \\ \hline
\textbf{300} & 636.24 & 593.994 & 541.058 & 381.154 & 370.016 & 756.739 & 479.808 \\ \hline
\end{tabular}%
}
\end{table}

\begin{table}[H]
\centering
\caption{\Mthr v/s Serial Miner Time on W1 for Mix Contract (in $\mu s$)}
\label{tab:w1-mix-miner}
\resizebox{.75\textwidth}{!}{%
\begin{tabular}{|c|c|c|c|c|c|c|c|}
\hline
 &  &  &  & {\color[HTML]{00009B} } & {\color[HTML]{00009B} } &  &  \\
\multirow{-2}{*}{\textbf{\begin{tabular}[c]{@{}c@{}}\# \sctrn{s}\\\end{tabular}}} & \multirow{-2}{*}{\textbf{Serial}} & \multirow{-2}{*}{\textbf{\begin{tabular}[c]{@{}c@{}}\bto\\ Miner\end{tabular}}} & \multirow{-2}{*}{\textbf{\begin{tabular}[c]{@{}c@{}}\mvto\\ Miner\end{tabular}}} & \multirow{-2}{*}{{\color[HTML]{00009B} \textbf{\begin{tabular}[c]{@{}c@{}}\svotm\\ Miner\end{tabular}}}} & \multirow{-2}{*}{{\color[HTML]{00009B} \textbf{\begin{tabular}[c]{@{}c@{}}\mvotm\\ Miner\end{tabular}}}} & \multirow{-2}{*}{\textbf{\begin{tabular}[c]{@{}c@{}}StaticBin\\ Miner\end{tabular}}} & \multirow{-2}{*}{\textbf{\begin{tabular}[c]{@{}c@{}}SpecBin\\ Miner\end{tabular}}} \\ \hline
\textbf{50} & 101.96 & 44.8776 & 31.352 & 21.5408 & 20.0272 & 69.6816 & 19.3464 \\ \hline
\textbf{100} & 192.2 & 92.4168 & 66.972 & 44.2504 & 38.2416 & 140.606 & 52.9 \\ \hline
\textbf{150} & 223.08 & 182.382 & 155.381 & 66.3584 & 56.3176 & 240.282 & 70.8464 \\ \hline
\textbf{200} & 318.04 & 286.35 & 221.89 & 94.5632 & 82.5976 & 330.153 & 97.808 \\ \hline
\textbf{250} & 421.32 & 418.263 & 319.533 & 141.498 & 123.091 & 421.934 & 172.095 \\ \hline
\textbf{300} & 515.56 & 505.855 & 477.872 & 177.78 & 152.14 & 615.446 & 212.537 \\ \hline
\end{tabular}%
}
\end{table}

\begin{table}[H]
\centering
\caption{\scv v/s Serial Validator Time on W1 for Coin Contract (in $\mu s$)}
\label{tab:w1-coin-val}
\resizebox{.75\textwidth}{!}{%
\begin{tabular}{|c|c|c|c|c|c|c|c|}
\hline
 &  & {\color[HTML]{00009B} } & {\color[HTML]{00009B} } & {\color[HTML]{00009B} } & {\color[HTML]{00009B} } & {\color[HTML]{00009B} } & {\color[HTML]{00009B} } \\
\multirow{-2}{*}{\textbf{\begin{tabular}[c]{@{}c@{}}\# \sctrn{s}\\\end{tabular}}} & \multirow{-2}{*}{\textbf{Serial}} & \multirow{-2}{*}{{\color[HTML]{00009B} \textbf{\begin{tabular}[c]{@{}c@{}}\bto\\ \scv\end{tabular}}}} & \multirow{-2}{*}{{\color[HTML]{00009B} \textbf{\begin{tabular}[c]{@{}c@{}}\mvto\\ \scv\end{tabular}}}} & \multirow{-2}{*}{{\color[HTML]{00009B} \textbf{\begin{tabular}[c]{@{}c@{}}\svotm\\ \scv\end{tabular}}}} & \multirow{-2}{*}{{\color[HTML]{00009B} \textbf{\begin{tabular}[c]{@{}c@{}}\mvotm\\ \scv\end{tabular}}}} & \multirow{-2}{*}{{\color[HTML]{00009B} \textbf{\begin{tabular}[c]{@{}c@{}}StaticBin\\ \scv\end{tabular}}}} & \multirow{-2}{*}{{\color[HTML]{00009B} \textbf{\begin{tabular}[c]{@{}c@{}}SpecBin\\ \scv\end{tabular}}}} \\ \hline
\textbf{50} & 141.63 & 4.9848 & 4.7008 & 4.5784 & 4.6784 & 22.4432 & 21.5008 \\ \hline
\textbf{100} & 263.4 & 8.936 & 8.1272 & 7.2896 & 6.8432 & 47.392 & 26.2848 \\ \hline
\textbf{150} & 359.83 & 12.2552 & 11.8888 & 10.7256 & 9.6768 & 74.6168 & 57.9472 \\ \hline
\textbf{200} & 438.83 & 14.7144 & 14.1384 & 12.6072 & 11.9136 & 83.0256 & 71.7624 \\ \hline
\textbf{250} & 562.24 & 19.4272 & 18.6376 & 16.6352 & 16.1696 & 112.474 & 78.2856 \\ \hline
\textbf{300} & 664.305 & 23.658 & 22.439 & 22.3 & 20.0271 & 145.223 & 127.9989 \\ \hline
\end{tabular}%
}
\end{table}

\begin{table}[H]
\centering
\caption{\scv v/s Serial Validator Time on W1 for Ballot Contract (in $\mu s$)}
\label{tab:w1-ballot-val}
\resizebox{.75\textwidth}{!}{%
\begin{tabular}{|c|c|c|c|c|c|c|c|}
\hline
 &  & {\color[HTML]{00009B} } & {\color[HTML]{00009B} } & {\color[HTML]{00009B} } & {\color[HTML]{00009B} } & {\color[HTML]{00009B} } & {\color[HTML]{00009B} } \\
\multirow{-2}{*}{\textbf{\begin{tabular}[c]{@{}c@{}}\# \sctrn{s}\\\end{tabular}}} & \multirow{-2}{*}{\textbf{Serial}} & \multirow{-2}{*}{{\color[HTML]{00009B} \textbf{\begin{tabular}[c]{@{}c@{}}\bto\\ \scv\end{tabular}}}} & \multirow{-2}{*}{{\color[HTML]{00009B} \textbf{\begin{tabular}[c]{@{}c@{}}\mvto\\ \scv\end{tabular}}}} & \multirow{-2}{*}{{\color[HTML]{00009B} \textbf{\begin{tabular}[c]{@{}c@{}}\svotm\\ \scv\end{tabular}}}} & \multirow{-2}{*}{{\color[HTML]{00009B} \textbf{\begin{tabular}[c]{@{}c@{}}\mvotm\\ \scv\end{tabular}}}} & \multirow{-2}{*}{{\color[HTML]{00009B} \textbf{\begin{tabular}[c]{@{}c@{}}StaticBin\\ \scv\end{tabular}}}} & \multirow{-2}{*}{{\color[HTML]{00009B} \textbf{\begin{tabular}[c]{@{}c@{}}SpecBin\\ \scv\end{tabular}}}} \\ \hline
\textbf{50} & 156.24 & 5.2896 & 5.0248 & 4.2376 & 4.0416 & 26.564 & 30.776 \\ \hline
\textbf{100} & 289.8 & 10.484 & 9.8752 & 7.5848 & 6.3024 & 68.3032 & 63.1488 \\ \hline
\textbf{150} & 425.2 & 13.4 & 12.792 & 9.7368 & 9.62 & 112.822 & 120.214 \\ \hline
\textbf{200} & 516.84 & 16.2848 & 15.3904 & 13.1784 & 12.4192 & 155.313 & 147.697 \\ \hline
\textbf{250} & 627.2 & 21.5944 & 19.6976 & 16.4096 & 15.3408 & 254.764 & 232.866 \\ \hline
\textbf{300} & 757.8 & 25.1328 & 23.8872 & 19.332 & 19.0984 & 293.702 & 261.422 \\ \hline
\end{tabular}%
}
\end{table}

\begin{table}[H]
\centering
\caption{\scv v/s Serial Validator Time on W1 for Auction Contract (in $\mu s$)}
\label{tab:w1-auction-val}
\resizebox{.75\textwidth}{!}{%
\begin{tabular}{|c|c|c|c|c|c|c|c|}
\hline
 &  & {\color[HTML]{00009B} } & {\color[HTML]{00009B} } & {\color[HTML]{00009B} } & {\color[HTML]{00009B} } & {\color[HTML]{00009B} } & {\color[HTML]{00009B} } \\
\multirow{-2}{*}{\textbf{\begin{tabular}[c]{@{}c@{}}\# \sctrn{s}\\\end{tabular}}} & \multirow{-2}{*}{\textbf{Serial}} & \multirow{-2}{*}{{\color[HTML]{00009B} \textbf{\begin{tabular}[c]{@{}c@{}}\bto\\ \scv\end{tabular}}}} & \multirow{-2}{*}{{\color[HTML]{00009B} \textbf{\begin{tabular}[c]{@{}c@{}}\mvto\\ \scv\end{tabular}}}} & \multirow{-2}{*}{{\color[HTML]{00009B} \textbf{\begin{tabular}[c]{@{}c@{}}\svotm\\ \scv\end{tabular}}}} & \multirow{-2}{*}{{\color[HTML]{00009B} \textbf{\begin{tabular}[c]{@{}c@{}}\mvotm\\ \scv\end{tabular}}}} & \multirow{-2}{*}{{\color[HTML]{00009B} \textbf{\begin{tabular}[c]{@{}c@{}}StaticBin\\ \scv\end{tabular}}}} & \multirow{-2}{*}{{\color[HTML]{00009B} \textbf{\begin{tabular}[c]{@{}c@{}}SpecBin\\ \scv\end{tabular}}}} \\ \hline
\textbf{50} & 103.4 & 3.2096 & 3.112 & 3.1424 & 3.1224 & 10.4136 & 8.6112 \\ \hline
\textbf{100} & 190.08 & 6.1088 & 5.2912 & 5.5608 & 5.2752 & 33.0736 & 30.668 \\ \hline
\textbf{150} & 290.6 & 9.916 & 8.0408 & 8.4984 & 7.9392 & 55.9136 & 48.576 \\ \hline
\textbf{200} & 406.48 & 12.4536 & 11.0552 & 11.324 & 10.8424 & 115.354 & 98.404 \\ \hline
\textbf{250} & 531.8 & 15.9936 & 14.2256 & 15.4752 & 13.8392 & 150.586 & 94.8384 \\ \hline
\textbf{300} & 606.4 & 19.048 & 16.0512 & 17.1448 & 15.1544 & 168.833 & 118.31 \\ \hline
\end{tabular}%
}
\end{table}

\begin{table}[H]
\centering
\caption{\scv v/s Serial Validator Time on W1 for Mix Contract (in $\mu s$)}
\label{tab:w1-mix-val}
\resizebox{.75\textwidth}{!}{%
\begin{tabular}{|c|c|c|c|c|c|c|c|}
\hline
 &  & {\color[HTML]{00009B} } & {\color[HTML]{00009B} } & {\color[HTML]{00009B} } & {\color[HTML]{00009B} } & {\color[HTML]{00009B} } & {\color[HTML]{00009B} } \\
\multirow{-2}{*}{\textbf{\begin{tabular}[c]{@{}c@{}}\# \sctrn{s}\\\end{tabular}}} & \multirow{-2}{*}{\textbf{Serial}} & \multirow{-2}{*}{{\color[HTML]{00009B} \textbf{\begin{tabular}[c]{@{}c@{}}\bto\\ \scv\end{tabular}}}} & \multirow{-2}{*}{{\color[HTML]{00009B} \textbf{\begin{tabular}[c]{@{}c@{}}\mvto\\ \scv\end{tabular}}}} & \multirow{-2}{*}{{\color[HTML]{00009B} \textbf{\begin{tabular}[c]{@{}c@{}}\svotm\\ \scv\end{tabular}}}} & \multirow{-2}{*}{{\color[HTML]{00009B} \textbf{\begin{tabular}[c]{@{}c@{}}\mvotm\\ \scv\end{tabular}}}} & \multirow{-2}{*}{{\color[HTML]{00009B} \textbf{\begin{tabular}[c]{@{}c@{}}StaticBin\\ \scv\end{tabular}}}} & \multirow{-2}{*}{{\color[HTML]{00009B} \textbf{\begin{tabular}[c]{@{}c@{}}SpecBin\\ \scv\end{tabular}}}} \\ \hline
\textbf{50} & 2.1936 & 2.0072 & 1.8232 & 1.7088 & 17.4024 & 16.3712 & 8.6112 \\ \hline
\textbf{100} & 4.1016 & 4.088 & 3.6 & 3.4512 & 31.7432 & 31.042 & 30.668 \\ \hline
\textbf{150} & 5.6592 & 4.812 & 4.7304 & 4.4384 & 58.816 & 42.975 & 48.576 \\ \hline
\textbf{200} & 7.3952 & 6.7504 & 6.224 & 6.0672 & 77.756 & 75.0021 & 98.404 \\ \hline
\textbf{250} & 9.4328 & 8.6696 & 8.0208 & 7.4904 & 94.9208 & 93.9526 & 94.8384 \\ \hline
\textbf{300} & 11.6016 & 10.5352 & 9.3392 & 8.9192 & 116.43 & 107.08 & 118.31 \\ \hline
\end{tabular}%
}
\end{table}


\cmnt{
This section presents a detailed description of smart contracts that we have considered in this paper. It also includes the additional experiments that we have done to show the performance benefits of concurrent execution of smart contracts by miner and validator using our proposed algorithm on various workloads. Along with this, we test our proposed \emph{Smart \Mthr Validator} to detect the malicious miner.

\noindent
\textbf{Smart Contract:} Clients send transactions to the miners in the form of complex code known as smart contracts. It provides several complex services such as managing the system state, ensuring rules, or credentials checking of the parties involved, etc. \cite{Dickerson+:ACSC:PODC:2017}. For better understanding, we have described \emph{Coin, Ballot, Simple Auction} Smart Contract from Solidity documentation \cite{Solidity}. We consider one more smart contract as \emph{Mixed Contract}, which is the combination of the three contracts as mentioned above in equal proportion and seems more realistic.

\noindent(1) \textit{Coin Contract:} It is a sub-currency contract which implements the simplistic form of a cryptocurrency and is used to transfer coins from one account to another account using  \emph{send()}, or to check the account balance using \emph{get\_balance()} function. Accounts (unique addresses in Ethereum) are shared objects. A conflict will occur when two or more transaction consists of at least one common account, and one of them is updating the account balance.

\algoref{cc1} shows the functionality of the coin contract, where \textit{mint()}, \textit{send()}, and \textit{get\_balance()} are the functions of the contract. These functions can be called by the miners or through other contracts. It permanently initialized by the contract creator (or contract deployer) to a special public state variable \textit{minter} \Lineref{c2}. Accounts are realized using Solidity mapping data structure essentially a $\langle \emph{key-value} \rangle$ pair at \Lineref{c3}, where a key is the unique Ethereum address and value is unsigned integer depicts the coins (or balance) in respective account. Initially, the contract deployer (aka \textit{minter}) creates new coins and allocate it to each receiver at \Lineref{c9}.  

Further, in \emph{send()} function, to transfer the coin from sender account to receiver account, function ensures that the sender has sufficient balance in his account at \Lineref{c11}. If sufficient balance found in senders account, the coin transferred from sender account to receiver account. By calling \textit{get\_balance()}, anyone can query the specific account balance at \Lineref{c15}.


\noindent(2) \textit{Ballot Contract: } This contract is used to organize electronic voting where voters and proposals are the shared objects and stored at unique Ethereum addresses. At the beginning of voting, the chairman of the ballot gives rights to voters to vote. Later, voters either delegate their vote to other voter using \emph{delegate()} or directly vote to specific proposal using \emph{vote()}. Voters are allowed to delegate or vote only once per ballot. A conflict will occur when two or more voters vote for the same proposal, or they delegate their votes to the same voter simultaneously.
\vspace{.2cm}
\setlength{\intextsep}{0pt}
\begin{algorithm}[!htb]
	\scriptsize
	\caption{Coin(): A sub-currency contract used to depict the simplest form of a cryptocurrency.}
	\label{alg:cc1}
	\begin{algorithmic}[1]
		\makeatletter\setcounter{ALG@line}{170}\makeatother
		\Procedure{$Coin()$}{} \label{lin:c1}
		\State address public minter;/*Minter is a unique public address*/\label{lin:c2}
		\State /*Map $\langle \emph{key-value} \rangle$ pair of hash table as $\langle \emph{address-balance} \rangle$*/
		\State mapping(address $=>$ uint) balances. \label{lin:c3} 
		
		\State \textbf{Constructor()} public\label{lin:c4}
		\State{\hspace{.3cm} minter = msg.sender.} /*Set the sender as minter*/\label{lin:c5}
		
		\Function {}{}mint(address receiver, uint amount )\label{lin:c6}
		\If{(msg.sender == minter)} \label{lin:c7}
		\State /*Initially, add the balance into receiver account*/
		\State balances[receiver] += amount. \label{lin:c9}    
		\EndIf
		
		\EndFunction
		
		\Function {}{}send(address receiver, uint amount)\label{lin:c10}
		\State /*Sender don't have sufficient balance*/
		\If{(balances[msg.sender] $<$ amount)} \label{lin:c11}
		return $\langle fail \rangle$;\label{lin:c12}
		\EndIf
		\State balances[msg.sender] -= amount;\label{lin:c13}
		\State balances[receiver] += amount;\label{lin:c14}
		\EndFunction
		
		\Function{}{}get\_balance(address account)\label{lin:c15}
		\State return $\langle$balance$\rangle$;
		\EndFunction
		\EndProcedure
		
	\end{algorithmic}
\end{algorithm}
\noindent(3) \textit{Simple Auction Contract: }In this contract, an auction is conducted where a bidder places their bids. Here bidders, maxBid, maxBidder, and auction end time are the shared object which can be accessed by multiple threads. The auction will end when the bidding period (or end time) of the auction is over. The auction end time is initialized at the beginning by the auction master. A \emph{bid()} function is used to bid the amount by a bidder for the auction. In the end, the bidder with the highest bid amount will be the winner, and all other bidders amount is then returned back to them using \emph{withdraw()}. A conflict will occur when more than two bidders try to bid using \emph{bidPlusOne()} at the same time.

\noindent(4) \textit{Mix Contract: }In this contract, aforementioned smart contracts are executed simultaneously. This contract is designed to intercept real-time scenarios in which a block consists of \sctrn{s} from different contracts. For the experiment, we combined \sctrn{s} for three contracts in a block.

\noindent\textbf{Performance Analysis of fork-join validator for W1 and W2:} \figref{w1-validatorFJ} and \figref{w2-validatorFJ} show the performance of the fork-join validator. \mvotm, \svotm, \mvto, and \bto fork-join validators obtain an average speedup of 1.76$\times$, 1.56$\times$, 1.29$\times$, and 1.15$\times$ over serial validator respectively. The reason for low speedups by fork-join validators as compared to decentralized validator is possibly due to the working of the master thread which becomes slow to allocate the \sctrn{s} to the slave threads and hence becomes the bottleneck.


\noindent\textbf{Performance Analysis for Workload (W3): } To test the performance of proposed approach further, we consider one more workload W3, in which the number of shared data-items vary from 100 to 600, while threads, \sctrn{s}, and hash table size are fixed to 50, 100, and 30 respectively. For W3, with the increase in the number of shared data-items, contention will reduce. 

\begin{figure*}
	\includegraphics[width=\textwidth,  height=5cm]{figs/w1validatorFJ.pdf}
	 \caption{(W1: Varying \sctrn{s}) Concurrent Fork-join Validator Speedup Over Serial Validator}
	\label{fig:w1-validatorFJ}
\end{figure*}

\begin{figure*}
	\includegraphics[width=\textwidth,  height=5cm]{figs/w2validatorFJ.pdf}
	 \caption{(W2: Varying Threads) Concurrent Fork-join Validator Speedup Over Serial Validator}
	\label{fig:w2-validatorFJ}
\end{figure*}

\begin{figure*}
	\includegraphics[width=\textwidth, height=5cm]{figs/w1depen.pdf}
	 \caption{(W1: Varying \sctrn{s}) Average Number of Dependencies in Block Graph}
	\label{fig:w1-BG}
\end{figure*}

\begin{figure*}
	\includegraphics[width=\textwidth,  height=5cm]{figs/w2depen.pdf}
	 \caption{(W2: Varying Threads) Average Number of Dependencies in Block Graph}
	\label{fig:w2depen}
\end{figure*}

\begin{figure*}
	\includegraphics[width=\textwidth,  height=5cm]{figs/w3depen.pdf}
	 \caption{(W3: Varying Shared data-items) Average Number of Dependencies in Block Graph}
	\label{fig:w3-BG}
\end{figure*}

\begin{figure*}
	\includegraphics[width=\textwidth,  height=5cm]{figs/w3miner.pdf}
	 \caption{(W3: Varying Shared data-items) \Mthr Miner Speedup Over Serial Miner}
	\label{fig:w3-miner}
\end{figure*}

\begin{figure*}
	\includegraphics[width=\textwidth,  height=5cm]{figs/w3validator.pdf}
	 \caption{(W3: Varying Shared data-items) Concurrent Decentralised Validator Speedup Over Serial Validator}
	\label{fig:w3-validator}
\end{figure*}
\begin{figure*}
	\includegraphics[width=\textwidth,  height=5cm]{figs/w3validatorFJ.pdf}
	 \caption{(W3: Varying Shared data-items) Concurrent Fork-join Validator Speedup Over Serial Validator}
	\label{fig:w3-validatorFJ}
\end{figure*}

\begin{table*}
	\centering
	 \caption{Overall Avg. Speedup by \Mthr Miner over Serial Miner}
	\label{tbl:avgMiner}
	\resizebox{.56\textwidth}{!}{%
		\begin{tabular}{|c|c|c|c|c|}
			\hline
			\multirow{2}{*}{\textbf{Contract}} & \multicolumn{4}{c|}{\textbf{\Mthr Miner}} \\ \cline{2-5} 
			& \textbf{BTO} & \textbf{MVTO} & \textbf{\svotm} & \textbf{\mvotm} \\ \hline
			\textbf{Coin} & 3.135 & 3.530 & 5.541 & 6.174 \\ \hline
			\textbf{Ballot} & 1.651 & 1.785 & 3.961 & 4.309 \\ \hline
			\textbf{Auction} & 1.898 & 2.376 & 2.713 & 3.310 \\ \hline
			\textbf{Mix} & 2.363 & 2.518 & 3.788 & 4.761 \\ \hline
			\textit{\textbf{Total Avg. Speedup}} & \textit{2.26} & \textit{2.55} & \textit{4.00} & \textit{4.64} \\ \hline
		\end{tabular}%
	}
\end{table*}

\begin{table*}
	\centering
	 \caption{Overall Avg. Speedup by \Mthr Validator over Serial Validator}
	\label{tbl:avgValidator}
	\resizebox{\textwidth}{!}{%
		\begin{tabular}{|c|c|c|c|c|c|c|c|c|}
			\hline
			\multirow{2}{*}{\textbf{Contract}} & \multicolumn{4}{c|}{\textbf{Decentralized Validator}} & \multicolumn{4}{c|}{\textbf{Fork-join Validator}} \\ \cline{2-9} 
			& \textbf{BTO} & \textbf{MVTO} & \textbf{\svotm} & \textbf{\mvotm} & \textbf{BTO} & \textbf{MVTO} & \textbf{\svotm} & \textbf{\mvotm} \\ \hline
			\textbf{Coin} & 18.941 & 21.432 & 23.357 & 25.983 & 1.156 & 1.293 & 1.589 & 1.852 \\ \hline
			\textbf{Ballot} & 23.187 & 24.904 & 27.859 & 29.698 & 1.045 & 1.200 & 1.567 & 1.841 \\ \hline
			\textbf{Auction} & 35.652 & 37.784 & 40.606 & 43.918 & 0.970 & 1.056 & 1.181 & 1.299 \\ \hline
			\textbf{Mix} & 24.030 & 27.172 & 31.580 & 36.169 & 1.096 & 1.255 & 1.496 & 1.571 \\ \hline
			\textit{\textbf{Total Avg. Speedup}} & \textit{25.45} & \textit{27.82} & \textit{30.85} & \textit{33.94} & \textit{1.07} & \textit{1.20} & \textit{1.46} & \textit{1.64} \\ \hline
		\end{tabular}%
	}
\end{table*}

So, concurrent execution of \sctrn{s} gives better performance. As shown in \figref{w3-miner}, \figref{w3-validator}, and \figref{w3-validatorFJ} the rise in the performance with the increase in shared data-item can be observed. 
However, for Ballot Contract, the average number of dependencies in BG for \bto and \mvto increase with the increase in shared data-items, while it decreases for \svotm and \mvotm as shown in \figref{w3-BG}. The average speedup achieved by the \mthr miner and validator (decentralized and fork-join) for workload W1, W2, and W3 on all benchmarks using different STM protocols are shown in Table \ref{tbl:avgMiner} and Table \ref{tbl:avgValidator} respectively.

\cmnt{
\noindent
\textbf{Experiments on Malicious Miner:} This experiments shows that how many validators accept the block proposed by the malicious miner. First, to make the malicious miner, we generate two double spending (explained in \apnref{ap-pm}) \sctrn{s}. After that malicious miner added these \sctrn{s} into Block Graph (BG) with indegree 0 and append it into the proposed block. 
Finally, malicious miner broadcast the malicious block in the network. 

Other existing node in the network called as validator re-executes the \sctrn{s} concurrently with the help of BG given by the malicious miner. The validator threads execute double spending \sctrn{s} concurrently then they may compute the same final state as given by the malicious miner. So, some of the validators accept the malicious block. If they reach consensus then they will add this malicious block into the blockchain. That may cause a serious issue in the blockchain network. \figref{mm}.(a) demonstrates the number of validators accepting the malicious block while varying the malicious blocks from 10 to 50. Here, we consider total number of validators are 500 to validate a block and run the experiments for the coin contract but similar observations can be seen for the other smart contracts as well. So, we can conclude that if the malicious miner is present in the network then some validators may agree on the malicious block which will violate the property of blockchain. So, we should ensure that such malicious block acceptance should not exist in the blockchain. 

To address this issue we propose \emph{Smart Concurrent Validator (SCV)} (describe in \apnref{ap-pm}) which always detects the malicious block with the help of $counter$ and straightforward reject that block. So, we applied $counter$ approach to all the proposed validators called as \emph{\svotm Decentralized (Dece.) SCV} and \emph{\mvotm Decentralized SCV}. Similarly, we name the state-of-the-art concurrent validators as \emph{BTO Decentralized SCV} and \emph{MVTO Decentralized SCV}. \figref{mm}.(b) represents the same experiments as defined above for all $SCV$ and it can be seen that all the SCVs are rejecting the malicious block, i.e., it is showing 100\% block rejection by all the $SCV$ which serves our purpose to identify the malicious miner in the network.

An obvious question is like, how much extra time does $SCV$ is taking to serve the purpose of identifying the malicious miner. \figref{vc} represents the speedup taken by the concurrent validations with counter and without counter over serial validator while varying the \sctrn{s} on workload $W1$. It can be seen that the counter-based concurrent validator approach is giving a bit less speedup than without counter. But it is appreciable to use counter-based $SCV$ to preserve the correctness of the blockchain.  
}

\cmnt{
\noindent
\textbf{Experiments on Malicious Miner: }In concurrent execution of the \sctrn{s}, the validators deterministically execute the \sctrn{s} based on the BG append by the miner in the block. But what if a miner is malicious and embeds an incorrect BG? To answer this question, we have done experiments for malicious miner. As explained earlier, we also proposed a technique to prevent such malicious activity due to concurrent execution of \sctrn{s} of the block. This experiment shows how many validators accept malicious block proposed by a malicious miner.

To make the malicious miner, we generate double-spending in coin contract (explained in \apnref{ap-pm}) and double voting (in Ballot contract: a voter votes two different proposals with one voting right). After that, malicious miner added these \sctrn{s} into the block, manipulate the Final State accordingly but did not add the respective dependencies in BG, i.e., for these two \sctrn{s} indegree will be 0. Finally, malicious miner broadcast the malicious block in the network. Other existing nodes in the network, i.e., validator re-executes the \sctrn{s} concurrently with the help of BG given by the malicious miner. The validators may execute double-spending \sctrn{s} concurrently and compute the same final state as provided by the malicious miner. So, some of the validators accept the malicious block. If they reach consensus, then they will add this malicious block into the blockchain. This may cause a severe issue in the blockchain. \figref{mm} demonstrates the number of validators accepting the malicious block while varying the malicious blocks from 0 to 50. Here, we consider 500 validators and run the experiments for the coin, ballot, and mix contract. So, we can conclude that if the malicious miner is present in the network, then some validators may agree on the malicious block which will violate the property of blockchain. Therefore, we should ensure that such malicious block acceptance should not exist in the blockchain.

To address this issue, we proposed \emph{Smart Concurrent Validator (SCV)} (describe in \apnref{ap-pm}) which always detects the malicious block at the time of concurrent execution of malicious \sctrn{s} (double-spending and double voting) with the help of $counter$ and straightforward reject that block. We applied $counter$ approach to all the proposed validators called as \emph{\svotm Decentralized (Dece.) SCV} and \emph{\mvotm Decentralized SCV}. Similarly, we name the state-of-the-art concurrent validators as \emph{BTO Decentralized SCV} and \emph{MVTO Decentralized SCV}. \figref{mm}.(b) represents the same experiments as defined above for all \emph{SCV} and it can be seen that all the \emph{SCVs} are rejecting the malicious block, i.e., it is showing 100\% block rejection by all the \emph{SCV} which serves our purpose to identify the malicious concurrent miner in the network. Another advantage of \emph{SCV} is that the validator may detect the malicious miner during the concurrent execution of malicious \sctrn{s} and need not to execute the remaining \sctrn{s} of the block since miner is malicious which saves the validator time. It can be observed in the \figref{w1-validatorFJ}.(b) that for ballot contract the number of validators accepting a malicious block is more than 50\% for the state-of-the-art \emph{BTO} and \emph{MVTO Decentralized Validator}, though its random process, but even a single malicious block may create vital damage to the blockchain.

So the next obvious question is, how much extra time does \emph{SCV} is taking to serve the purpose of identifying the malicious miner? \figref{vc} represents the speedup taken by the concurrent validations with counter and without counter over serial validator while varying the \sctrn{s} on workload $W1$. It can be seen that the counter-based concurrent validator (i.e., smart validator) approach is giving a bit less speedup than without counter. But it is appreciable to use counter-based \emph{SCV} to preserve the correctness of the blockchain.

\cmnt{
\begin{figure}
	\centering
	\includegraphics[width=9cm,  height=5cm]{figs/mm.pdf}
	 \caption{Percentage of Validator Accepted the Malicious Block without Counter vs With Counter}
	\label{fig:mm}
\end{figure}

\begin{figure}
	\centering
	\includegraphics[width=7cm,  height=5cm]{figs/vc.pdf}
	 \caption{Speedup of Counter and without Counter based Validator over Serial}
	\label{fig:vc}
\end{figure}
}

\begin{figure}
	\centering
	\includegraphics[width=\textwidth]{figs/speedupSCV.pdf}
	 \caption{Speedup of Counter (SCV) and Without Counter based Dec. Validator over Serial}
	\label{fig:vc}
\end{figure}

\begin{figure}
	\centering
	\includegraphics[width=\textwidth]{figs/acceptedMM.pdf}
	 \caption{Percentage of Validator Accepted the Malicious Block Without Counter vs With Counter}
	\label{fig:mm}
\end{figure}

\begin{figure}
	\centering
	\includegraphics[width=\textwidth,  height=5cm]{figs/blocksize.pdf}
	 \caption{Percentage of Additional Space Required to Store Block Graph (BG) in Block}
	\label{fig:blocksize}
\end{figure}

}

\noindent
\textbf{Experiments on Malicious Miner: }Validators deterministically and concurrently execute the \sctrn{s} based on the BG append by the miner in the block. However, what if a miner is malicious and embeds an incorrect BG? To answer this question, we have done experiments for malicious miner. As explained earlier, we also proposed a technique to prevent such malicious activity due to concurrent execution of \sctrn{s} of the block. This experiment shows how many validators accept malicious block proposed by a malicious miner.

To make the malicious miner, we generate two \sctrn{s} of double-spending (explained in \apnref{ap-pm}) in Coin contract and double voting (a voter votes two different proposals with one voting right) in Ballot contract. After that, malicious miner added these \sctrn{s} into the block, manipulate the final state accordingly but did not add the respective dependencies in BG, i.e., for this two \sctrn{s} indegree will be 0. Finally, malicious miner broadcast the malicious block in the network. Other existing nodes in the network, i.e., validator re-executes the \sctrn{s} concurrently with the help of BG given by the malicious miner. The validators may execute double-spending \sctrn{s} concurrently and compute the same final state as provided by the malicious miner. So, some of the validators accept the malicious block. If they reach consensus, then they will add this malicious block into the blockchain. This may cause a severe issue in the blockchain. \figref{mm} demonstrates the percentage of validators accepting the malicious block while varying the malicious blocks from 0 to 50. Here, we consider 500 validators and run the experiments for the Coin, Ballot, and Mix contract. So, we can conclude that if the malicious miner is present in the network, then some validators may agree on the malicious block which will violate the property of blockchain. Therefore, we should ensure that such malicious block acceptance should not exist in the blockchain. To address this issue, we proposed \emph{Smart \Mthr Validator (SMV)} (describe in \apnref{ap-mm}) which always detects the malicious block at the time of concurrent execution of malicious \sctrn{s} (double-spending and double voting) with the help of $counter$ and straightforward reject that block. Analysis of \scv is presented in Section \ref{sec:result}.

So, the next obvious question is, how much extra time does \emph{\scv} is taking to serve the purpose of identifying the malicious miner? \figref{vc} represents the speedup taken by the concurrent validations with counter and without counter-based approach over serial validator while varying the \sctrn{s} on workload $W1$. It can be seen that the counter-based \mthr validator (i.e., Smart \Mthr Validator \scv) approach is giving a bit less speedup than without counter-based \mthr validator. However, it is appreciable to use counter-based \emph{\scv} to preserve the correctness of the blockchain.

\begin{figure*}
	\centering
	\includegraphics[width=.85\textwidth, height= 5cm]{figs/speedupSCV.pdf}
	 \caption{Speedup of Counter (\scv) and Without Counter based Dec. Validator over Serial}
	\label{fig:vc}
\end{figure*}

\begin{figure*}
	\centering
	\includegraphics[width=\textwidth,  height=5cm]{figs/blocksize.pdf}
	 \caption{Percentage of Additional Space Required to Store Block Graph (BG) in Block}
	\label{fig:blocksize}
\end{figure*}

\noindent
\textbf{Experiments on Block Graph (BG) Size: }We also measure the additional space required to append the BG into the block. In Ethereum and Bitcoin average block size is $\approx 20.98$ \cite{EthereumAvgBlockSize} and $\approx 1123.34$ KB \cite{BitcoinAvgBlockSize} respectively for the interval of 1$^{st}$ Jan. 2019 to 2$^{st}$ Sept. 2019, which is keep on increasing every year. The average number of transactions in a block of Ethereum is $\approx 100$ \cite{EthereumAvgBlockSize}. So, on an average, each transaction requires $.2$ KB ($\approx 200$ bytes) in Ethereum. Based on this simple estimate, we have computed block size with an increase in \sctrn{s} per block for workload $W1$. To compute the block size \emph{Equation \ref{eq:blocksize}} is used.
\begin{equation}
    B = 200 * N_{\sctrn{s}}
\label{eq:blocksize}
\end{equation}
Where: $B$ is block size in bytes, $N_{\sctrn{s}}$ number of smart contract transactions (\sctrn{s}) in block, and $200$ is the average size of an SCT in bytes.

We use \emph{adjacency list} to maintain the Block Graph $BG(V, E)$ inspired from \cite{Chatterjee+:NbGraph:ICDCN:2019}. Here $V$ is the set of vertices (\vrtnode{s}) is stored as a vertex list, $\vrtlist$. Similarly E is the set of Edges (\egnode{s}) is stored as edge list ($\eglist$ or conflict list) as shown in the \figref{graph}.(a) of \subsecref{bg}. Both $\vrtlist$ and $\eglist$ store between the two sentinel nodes \emph{Head}($-\infty$) and \emph{Tail}($+\infty$). Each \vrtnode{} maintains a tuple: \emph{$\langle$ts, scFun, indegree, egNext, vrtNext$\rangle$}. Here, \emph{ts (an integer)} is the unique timestamp $i$ of the transaction $T_i$ to which this node corresponds to. \emph{scFun (an integer)} is the ID of smart contract function executed by the transaction $T_i$ which is stored in \vrtnode. The number of incoming edges to the transaction $T_i$, i.e. the number of transactions on which $T_i$ depends, is captured by \emph{indegree (an integer)}. Field \emph{egNext (an address)} and \emph{vrtNext (an address)} points the next \egnode{} and \vrtnode{} in the $\eglist$ and $\vrtlist$ respectively. So a vertex node $V_s$ size is $28$ bytes in the experimental system, which is sum of the size of 3 integer variables and 2 pointers.

Each \egnode{} of $T_i$ similarly maintains a tuple: \emph{$\langle$ts, vrtRef, egNext$\rangle$}. Here, \emph{ts  (an integer)} stores the unique timestamp $j$ of $T_j$ which has an edge coming from $T_i$ in the graph. BG maintains the conflict edge from lower timestamp transaction to higher timestamp transaction. This ensures that the \bg is acyclic. The \egnode{s} in $\eglist$ are stored in increasing order of the \emph{ts}. Field \emph{vrtRef (an address)} is a \emph{vertex reference pointer} which points to its own \vrtnode{} present in the $\vrtlist$. This reference pointer helps to maintain the \emph{indegree} count of \vrtnode{} efficiently. The \emph{egNext (an address)} is a pointer to next edge node, so edge node $E_{s}$ requires a total of $20$ bytes in the experimental system. 


The experimental results on the percentage of additional space required to store BG in the block are shown in \figref{blocksize} for workload $W1$. The size of BG ($\beta$) in bytes is computed using \emph{Equation \ref{eq:BGSize}}, while to compute the percentage of additional space ($\beta_{p}$) required to store BG in the block is calculated using \emph{Equation \ref{eq:perBG}}.
\begin{equation}
    \beta = (V_{s} * N_{\sctrn{s}}) + (E_{s} * M_{e})
    \label{eq:BGSize}
\end{equation}
Where: $\beta$ is size of Block Graph (BG) in bytes, $V_s$ is size of a vertex node of $BG$ in bytes, $N_{\sctrn{s}}$ are number of smart contract transactions (\sctrn{s}) in a block, $E_{s}$ is size of a edge node in bytes of $BG$, and $M_{e}$ is number of edges in $BG$.
\begin{equation}
    \beta_{p} = ({\beta*100})/{B}
\label{eq:perBG}
\end{equation}

As shown in \figref{blocksize}, it can be observed that with an increase in the number of dependencies, the space requirements also increase. The number of dependencies in Ballot contract (\figref{w1-BG}.(b)) for $W1$ is higher compared to other contracts, so the space requirement is also high. In all the figures the space requirements of BG by \mvotm, \svotm is smaller than \mvto and \bto miner. The average space required for BG in \% concerning block size is $14.30\%$, $14.57\%$, $21.31\%$, and $23.21 \%$ by \mvotm, \svotm, \mvto, and \bto miner, respectively. Since the number of dependencies in BG developed by \mvotm is smaller then BG generated by other STM protocols, so it requires less space to store BG. In the future, we are planning to reduce space further to store the BG in the block.
}

\end{document}